\documentclass[sigconf]{acmart}
\usepackage[utf8]{inputenc}

\usepackage{mathdots}
\usepackage{mathrsfs}

\usepackage{hyperref}
\hypersetup{
	colorlinks=true,     
	citecolor=blue,      % couleur des liens de citations
	linkcolor=black,      % couleur des liens internes
}

\usepackage{graphicx}
\usepackage{tikz}
\usepackage[ruled,linesnumbered]{algorithm2e}

\SetKwInOut{Input}{Input}
\SetKwInOut{Output}{Output}
\SetKwInOut{Oracle}{Oracle}

\usepackage{enumitem}

%\usepackage{subcaption}
%\usepackage{todonotes}

% \usepackage[capitalise]{cleveref}
% \crefname{definition}{Definition}{Definitions}
% \crefname{theorem}{Theorem}{Theorems}
% \crefname{lemma}{Lemma}{Lemmas}
% \crefname{problem}{Problem}{Problems}
% \crefname{equation}{Equation}{Equations}
% \crefname{proposition}{Proposition}{Propositions}
% \crefname{conjecture}{Conjecture}{Conjectures}
% \crefname{corollary}{Corollary}{Corollaries}
% \crefname{appendix}{Appendix}{Appendices}
% \crefname{algorithm}{Algorithm}{Algorithms}
% \crefname{section}{Section}{Sections}
% \crefname{figure}{Figure}{Figures}
% \crefname{claim}{Claim}{Claims}
% \crefname{assumption}{Assumption}{Assumptions}
% \crefname{notation}{Notation}{Notations}

\newtheorem{problem}{Problem}
\newtheorem{theorem}{Theorem}[section]
%\newreptheorem{theorem}{Theorem}

\newtheorem{proposition}[theorem]{Proposition}
\newtheorem{definition}[theorem]{Definition}
\newtheorem{lemma}[theorem]{Lemma}
\newtheorem{corollary}[theorem]{Corollary}

\newtheorem{assumption}{Assumption}%[section]
%\renewcommand{\theassumption}{\Alph{assumption}}

% % Comment
%\newcommand{\louis}[1]{\textcolor{red}{(\#\#Louis:#1\#\#)}}
\makeatletter
\DeclareRobustCommand\onedot{\futurelet\@let@token\@onedot}
\def\@onedot{\ifx\@let@token.\else.\null\fi\xspace}
\def\ie{\emph{i.e}\onedot} 
\makeatother

\renewcommand{\bar}{\overline}

\newcommand{\mbf}{\boldsymbol}
\newcommand{\dd}{d}
\renewcommand{\k}{\kappa}
\newcommand{\BC}{\mathbb{C}}
\newcommand{\BR}{\mathbb{R}}
\newcommand{\BQ}{\mathbb{Q}}
\newcommand{\BK}{\mathbb{K}}
\newcommand{\AK}{\mathcal{A}_\BK}
\newcommand{\softO}{\widetilde{\mathcal{O}}}
\DeclareMathOperator{\TQ}{TaQ}
\DeclareMathOperator{\sign}{sign}
\DeclareMathOperator{\herm}{herm}
\DeclareMathOperator{\Herm}{Herm}
\DeclareMathOperator{\tr}{Tr}
\DeclareMathOperator{\rk}{rk}
\DeclareMathOperator{\Signature}{Sign}
\DeclareMathOperator{\Var}{Var}
\DeclareMathOperator{\GL}{GL}

\DeclareMathOperator{\lm}{lm}
\DeclareMathOperator{\lc}{lc}
\DeclareMathOperator{\Reali}{Reali}
\DeclareMathOperator{\SIGN}{SIGN}
\DeclareMathOperator{\Mat}{Mat}
\DeclareMathOperator{\Ada}{Ada}
\newcommand{\lex}{\mathrm{lex}}
\newcommand{\GB}{\mathscr{G}}
\newcommand{\Basis}{\mathscr{B}}

\author{Louis Gaillard}
\affiliation{%
  \institution{ENS de Lyon, %\textsc{CNRS}
  }
        \city{F-69007 Lyon}
        %\postcode{69364}
        \country{France}}
%\email{louis.gaillard@ens-lyon.fr}

\author{Mohab Safey El Din}
\affiliation{%
        \institution{Sorbonne Universit\'e, \textsc{CNRS}, \textsc{LIP6}}
        \city{F-75005 Paris}
        % \postcode{75252}
        \country{France}}
%\email{mohab.safey@lip6.fr}

\title[Solving parametric semi-algebraic systems]{Solving parameter-dependent semi-algebraic systems}

\copyrightyear{2024}
\acmYear{2024}
\setcopyright{licensedothergov}\acmConference[ISSAC '24]{International Symposium on Symbolic and Algebraic Computation}{July 16--19, 2024}{Raleigh, NC, USA}
\acmBooktitle{International Symposium on Symbolic and Algebraic Computation (ISSAC '24), July 16--19, 2024, Raleigh, NC, USA}
\acmDOI{10.1145/3666000.3669718}
\acmISBN{979-8-4007-0696-7/24/07}

\begin{abstract}
  We consider systems of polynomial equations and inequalities in $\BQ[
  \mbf{y}][\mbf{x}]$ where $\mbf{x} = (x_1, \ldots, x_n)$ and $\mbf{y} = (y_1, \ldots,
  y_t)$. The $\mbf{y}$ indeterminates are considered as \emph{parameters} and we
  assume that when specialising them \emph{generically}, the set of common
  complex solutions, to the obtained equations, is finite.

  We consider the problem of real root classification for such
  parameter-dependent problems, i.e. identifying the possible number of real
  solutions depending on the values of the parameters and computing a
  description of the regions of the space of parameters over which the number of
  real roots remains invariant.

  We design an algorithm for solving this problem. The formulas it outputs enjoy
  a determinantal structure. Under genericity assumptions, we show that its
  arithmetic complexity is polynomial in both the maximum degree $d$ and the
  number $s$ of the input inequalities and exponential in $nt+t^2$. The output
  formulas consist of polynomials of degree bounded by $(2s+n)d^{n+1}$. This is
  the first algorithm with such a singly exponential complexity. We report on
  practical experiments showing that a first implementation of this algorithm
  can tackle examples which were previously out of reach.
\end{abstract}

\begin{document}

% \begin{CCSXML}
% <ccs2012>
% <concept>
% <concept_id>10010147.10010148.10010149.10010150</concept_id>
% <concept_desc>Computing methodologies~Algebraic algorithms</concept_desc>
% <concept_significance>500</concept_significance>
% </concept>
% <concept>
% <concept_id>10003752.10003809</concept_id>
% <concept_desc>Theory of computation~Design and analysis of algorithms</concept_desc>
% <concept_significance>500</concept_significance>
% </concept>
% </ccs2012>
%\end{CCSXML}

\ccsdesc[500]{Computing methodologies~Algebraic algorithms}
\ccsdesc[500]{Theory of computation~Design and analysis of algorithms}

\keywords{Polynomial system solving; Real algebraic geometry; Gr\"obner bases.}
% :>> acmart stuff
%\CopyrightYear{2024}
%\conferenceinfo{ISSAC '24,}{July 16--19, 2024, Raleigh, NC, USA}
%\isbn{979-8-4007-0696-7/24/07}
%\doi{https://doi.org/10.1145/3666000.3669718}

\maketitle

\section{Introduction}

\paragraph*{Problem statement}
We consider polynomials $\mbf{f} = (f_1, \dots, f_e)$ and $\mbf{g} = (g_1,
\dots, g_s)$ in $\BQ[\mbf{y}][\mbf{x}]$ with $\mbf{x} = (x_1, \dots, x_n)$ and
$\mbf{y} = (y_1, \dots, y_t)$. The variables $\mbf{x}$ (resp. $\mbf{y}$) are
seen as the \emph{unknowns} (resp. \emph{parameters}) of the system. Further, we
denote by $\pi$ the canonical projection $(\mbf{y}, \mbf{x}) \to \mbf{y}$ on the
space of parameters. We denote by $\mathcal{V} \subseteq \BC^{t + n}$ the
(complex) algebraic variety defined by $\mbf{f} = 0$, and by $\mathcal{V}_\BR$
its real trace $\mathcal{V}\cap \BR^{t + n}$.
In this paper, we assume the following.
\begin{assumption}
  \label{assump:finite_nb_sol_generically}
  There exists a nonempty Zariski open subset $\mathcal{O} \subseteq \BC^t$
  such that for all $\eta \in \mathcal{O}$,
  $\pi^{-1}(\eta) \cap \mathcal{V}$ is nonempty and finite. 
\end{assumption}
In other words, for a \emph{generic} specialization point $\eta$, the
specialized system $\mbf{f}(\eta, \cdot) = 0$ is zero-dimensional. Besides, can
assume that the cardinality of $\pi^{-1}(\eta) \cap \mathcal{V}$ remains
invariant when $\eta$ ranges over $\mathcal{O}$. This is not the case for the
set of real solutions.

We consider the (basic) semi-algebraic set $\mathcal{S} \subseteq \BR^{t +n}$
defined by
\begin{equation}
  \label{eq:sa_system_fg}
  f_1 = \dots = f_e = 0, \quad g_1 > 0, \dots,~ g_s > 0.
\end{equation}
The goal of this paper is to provide an efficient algorithm for solving the real
root classification problem over $\mathcal{S}$ as stated below.
\begin{problem}[Real solution classification]
  \label{pb:solution_classification}
  On input $(\mbf{f}, \mbf{g})$ with $\mbf{f}$
  satisfying Assumption~\ref{assump:finite_nb_sol_generically}, compute $(\Phi_i, \eta_i,
  r_i )_{1 \le i \le \ell }$ where for all $1\le i \le \ell$, $\Phi_i$ is a
  semi-algebraic formula in $\BQ[\mbf{y}]$ defining the semi-algebraic set
  $\mathcal{T}_i \subseteq \BR^t$, with $\eta_i \in \mathcal{T}_i$ and $r_i \ge
  0$ such that
  \begin{itemize}
  \item for all $ \eta \in \mathcal{T}_i$,
    the number of points in $\mathcal{S} \cap \pi^{-1}(\eta)$ is $r_i$,
  \item $\bigcup_{i=1}^\ell \mathcal{T}_i$ is dense in $\BR^t$.
  \end{itemize}
\end{problem}
Such a sequence $(\Phi_i, \eta_i, r_i )_{1 \le i \le \ell }$ is said to be a
solution to Problem~\ref{pb:solution_classification} which arises in many applications
(see e.g. \cite{BFJSV16, puente2023absolute, Co02, FMRSa08, Yang00,
  le:hal-03283378}).
\paragraph*{Prior works} First, as noticed in
\cite{Le2022RealRootClassification}, the cylindrical algebraic decomposition
(CAD) algorithm due to Collins \cite{collins1975cad} could be used to solve
Problem~\ref{pb:solution_classification}. However, its doubly exponential
complexity \cite{dav07, Dav88} in the total number of variables makes it
difficult to use.

More efficient approaches have been devised by using polynomial elimination
methods combined with real algebraic geometry. They consist in computing some
nonzero polynomials, say $h_1, \ldots, h_k$ in $\BQ[\mbf{y}]$, such that the
number of points in $\mathcal{S}\cap \pi^{-1}(\eta)$ remains invariant when
$\eta$ ranges over some connected component of the semi-algebraic set defined by
$h_1\neq 0, \ldots, h_k\neq 0$. Such polynomials are called \emph{border
  polynomials}, in the context of methods using the theory of regular chains
(see e.g. \cite{YangHX01, YangXia05, Li08, Chen13}), or \emph{discriminant polynomials}
in the context of methods using algebraic elimination algorithms based on
Gr\"obner bases (see e.g. \cite{LaRo07, Moroz06}) when the ideal generated by
$\mbf{f}$ is assumed to be radical and equidimensional. When $\dd$ is the
maximum degree of the input polynomials in $\mbf{f}$ and $\mbf{g}$, these
$h_i$'s can be proven to have degree bounded by $n(\dd-1)\dd^{n}$.

Once these polynomials are computed one then needs to describe the connected
components of the set where none of them vanish. When this is done through the
CAD algorithm, the cost of this is doubly exponential in $t$, the number of
parameters. Using more advanced algorithms for computing semi-algebraic
descriptions of connected components of semi-algebraic sets (see \cite[Chap.
16]{roy2006algo_real_alg_geom}) through parametric roadmaps,
one can obtain a complexity using $\left( n(\dd-1)\dd^{n} \right)^{O(t^{4})}$
arithmetic operations in $\BQ$ and which would output polynomials of degree
lying in $\left( n(\dd-1)\dd^{n} \right)^{O(t^{3})}$.

All in all, only little is known about the complexity of these methods and it has
been an open problem to obtain better complexity estimates or degree bounds on
the polynomials of the output formulas required to solve
Problem~\ref{pb:solution_classification}.

A first step towards this goal comes from the analysis of the algorithm in
\cite{Le2022RealRootClassification}. This algorithm is restricted to the case
where the ideal generated by $\mbf{f}$ is radical and the sequence $\mbf{g}$ is
empty. Under \emph{genericity assumptions} on the input $\mbf{f}$, this
algorithm runs in time quasi-linear in $n^{O(t)}\dd^{3nt+O(n+t)}$ and the
degrees of the polynomials in the output formulas lie in $n(\dd-1)\dd^{n}$. This
is achieved using classical real root counting methods (through Hermite's
quadratic forms) but combined in an innovative way with the theory of Gr\"obner
bases. Additionally, the output formulas enjoy a nice determinantal encoding
which allows one to evaluate them easily. This is at the foundations of new
efficient algorithms for one-block quantifier elimination
\cite{Le2021QuantifierElim}. We also note that these techniques can lead to a new
geometric approach for Cylindrical Algebraic Decomposition
\cite{chen2023geometric}.

Still, several open problems remain. One is to obtain similar complexity bounds
which do not depend on the aforementioned genericity assumptions. Another open
problem is to extend such an approach to real root classification problems
\emph{involving inequalities}, hence extending significantly the range of
applications which could be reached. In this paper, we tackle this second open
problem.

% In \cite[Sec.~3]{Le2022RealRootClassification}, an algorithm is described for
% solving the following problem. Given $(h_1, \dots, h_\ell ) \in \BQ[y_1, \dots,
% y_t]$, compute at least one point per connected component of the semi-algebraic
% set $\mathcal{S} \subset \BR^t$ defined by
% \begin{equation*}
%   h_1 \neq 0, \dots, h_\ell \neq 0.
% \end{equation*}
% %
% The main ideas for this algorithm follow from~\cite{SafeySchost2003OnePointPerConnectedComponent}.
% We will rely on the following result.
% \begin{theorem}[{\cite[Thm.~2]{Le2022RealRootClassification}}]
%   \label{thm:sample_points_connected_component}
%   Let $(h_1, \dots, h_\ell) \subset \BQ[y_1, \dots, y_t]$ with $D$ an
%   upper bound for the degree of the $h_i$'s.  Consider the
%   semi-algebraic set $\mathfrak{S}$ defined by the non-vanishing set of
%   $h_1, \dots, h_\ell$.  There exists a probabilistic algorithm which
%   on input $(h_1 ,\dots, h_\ell)$ outputs a representation of a finite
%   subset $\mathfrak{R}$ of $\BQ^t$ with at most $(2\ell D)^t$ elements
%   such that $\mathfrak{R}$ meets every connected component of
%   $\mathfrak{S}$ using
%   $\widetilde{\mathcal{O}}\left( \binom{D+t}{t} \ell^{t+1}2^{3t}D^{2t
%       +1}\right)$ arithmetic operations in $\BQ$.
% \end{theorem}
 
\paragraph*{Contributions}
We present an algorithm solving Problem~\ref{pb:solution_classification} revisiting the
ideas in~\cite{Le2022RealRootClassification} to handle the case of systems of
equations and inequalities. It uses a real root counting machinery based on
Hermite's quadratic form~\cite{hermite1856realrootcounting} in some appropriate
basis. In order to take the polynomial inequalities defined by $\mbf{g}$ into
account, this algorithm relies on a method originated
in~\cite{BenOrKozenReif1986comp_elem_algebra} using the so-called
Tarski-query~\cite[Sec.~10.3]{roy2006algo_real_alg_geom} for determining the
sign conditions realized by a family of polynomials on a finite set of points.
These methods are devised to count the number of real solutions to some system
of polynomial equations (with coefficients in $\BR$), with finitely many complex
roots, which do satisfy some extra polynomial inequalities.

Our contribution combines these methods with Gr{\"o}bner bases computations in our
context where the coefficients of our input polynomials depend on the parameters
$\mbf{y}$. A second key ingredient, used to control the number of calls to
Tarski queries, in a way that is similar to the one used in \cite[Chap.
10]{roy2006algo_real_alg_geom}, is the use of efficient routines for computing
sample points per connected components in semi-algebraic sets lying in the space
of parameters \cite[Sec. 3]{Le2022RealRootClassification}. Hence, the
semi-algebraic constraints depending on $\mbf{y}$ actually encode some constraints
on the signature of \emph{parameter-dependent} Hermite matrices and thus, enjoy
a nice determinantal structure.

Note that, by contrast with \cite{LaRo07}, this algorithm does not assume that
the ideal generated by $\mbf{f}$ is radical and equidimensional.

Since this algorithm makes use of Gr{\"o}bner bases computations, in order to
control the complexity, extra genericity
assumptions are needed. Hence, for the purpose of a complexity analysis, we assume that the
homogeneous component of the $f_i$'s of highest degree forms a regular sequence,
which we abbreviate by the saying that $\mbf{f}$ is a regular sequence. In
addition, letting $\GB$ be a reduced Gr{\"o}bner basis for the ideal generated by
$\mbf{f}$ and the graded reverse lexicographical ordering, we assume the
following.

\begin{assumption}
  \label{assump:generic_degree_grobner_basis}
  For any $p \in \GB$, we have $\deg p = \deg_{\mbf{x}} p $.
\end{assumption}

These assumptions are known to be satisfied generically (see \cite[Prop.
20]{Le2022RealRootClassification}) and to enable nicer complexity bounds on
Gr{\"o}bner bases. Our main complexity result is the following one. We use the notation $g = \softO(f)$
meaning that $g = \mathcal{O}(f \log^\k(f))$ for some $\k > 0$.

\begin{theorem}\label{thm:main}
  Let $\mbf{f} \subset \BQ[\mbf{y}][\mbf{x}]$ be a regular sequence such that
  $\mbf{f}$ satisfies both
  Assumptions~\ref{assump:finite_nb_sol_generically} and~\ref{assump:generic_degree_grobner_basis}.
  Let  $\mathfrak{D} \coloneqq (2sd + n(d-1))d^{n}$.
  %\lambda)d^n$ and $r \coloneqq \binom{s}{t} 4^{t+1} d(2d -1)^{n+t -1}$.
  % Then \textcolor{magenta}{in case of success},
  % \cref{alg:classification}
  There exists an algorithm which computes a solution to
  Problem~\ref{pb:solution_classification} using
  \begin{equation*}
    % \textcolor{red}{
    \softO\left(  \binom{t + \mathfrak{D}}{t} \binom{s}{t}^{t+1}
      2^{3t^2 + nt + 8t + n} s^{t+2} (2s + n)^{2t+1}d^{t^2 + 4nt + 3(n+t) + 1}\right)
  %}
  \end{equation*}
  arithmetic operations in $\BQ$ and outputs at most %$(4d^ns \rho \mathfrak{D})^t$
  $(sdn)^{\mathcal O(t^2 + nt)}$
  formulas that consists of $(s d)^{\mathcal{O}(n+t)}$ % $sd^n \rho$
  polynomials of degree at most
  $\mathfrak{D}$.
\end{theorem}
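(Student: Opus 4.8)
The plan is to assemble the algorithm from three layers and to track the combinatorial, geometric, and arithmetic cost of each. First I would set up the \emph{parametric real root counting} layer. Using Assumption~\ref{assump:finite_nb_sol_generically}, the ideal generated by $\mbf{f}$ in $\BQ(\mbf{y})[\mbf{x}]$ is zero-dimensional; since $\mbf{f}$ is a regular sequence and Assumption~\ref{assump:generic_degree_grobner_basis} holds, a reduced Gr\"obner basis $\GB$ for the grevlex ordering can be computed with the standard single-exponential bound, and the monomial basis $\Basis$ of the quotient has cardinality $D \le d^n$. One then forms, for each sign pattern arising from products of subsets of $\mbf{g}$, a \emph{parameter-dependent Hermite matrix} $\herm_{g^\alpha}$ whose entries are the traces of multiplication maps; by Assumption~\ref{assump:generic_degree_grobner_basis} these entries are polynomials in $\mbf{y}$ of degree at most $\mathfrak{D} = (2sd + n(d-1))d^n$ after clearing denominators. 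The Tarski-query/Ben-Or--Kozen--Reif machinery (as in~\cite[Sec.~10.3]{roy2006algo_real_alg_geom}) then converts the signatures of these matrices into the exact number of real points of $\mathcal{S}\cap\pi^{-1}(\eta)$ for each generic $\eta$; the number of Hermite matrices needed is controlled by $\binom{s}{t}$-type bounds on the relevant sign patterns, which is where the $\binom{s}{t}^{t+1}$ and $s^{t+2}$ factors enter.

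Second, I would handle the \emph{stratification of the parameter space}. The signature of each parameter-dependent symmetric matrix $H(\mbf{y})$ is locally constant away from the loci where leading principal minors (or signed subresultant-type quantities) vanish; collecting all these minors over all the Hermite matrices yields a finite family of polynomials $h_1,\dots,h_k \in \BQ[\mbf{y}]$ of degree at most $\mathfrak{D}$, and the number $k$ of them is $(sd)^{\mathcal O(n+t)}$. On each connected component of $\{h_1\neq0,\dots,h_k\neq0\}$ every signature, hence the real root count $r_i$, is invariant. To produce the formulas $\Phi_i$ together with witnesses $\eta_i$, I would invoke the sample-point routine of~\cite[Sec.~3]{Le2022RealRootClassification} on these loci, which gives one point per connected component and a semi-algebraic description, at cost polynomial in $\binom{t+\mathfrak{D}}{t}$ times a factor exponential in $t$ only; this is the source of the $2^{3t^2+\cdots}$ term and of the bound $(sdn)^{\mathcal O(t^2+nt)}$ on the number of output formulas. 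The density of $\bigcup_i \mathcal T_i$ is immediate because we removed only a proper algebraic subset of $\BR^t$, and the finiteness-and-invariance of the fibre cardinality on $\mathcal O$ (noted after Assumption~\ref{assump:finite_nb_sol_generically}) guarantees the counts are the generic ones.

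Third, I would do the \emph{complexity bookkeeping}: multiply the number of Hermite matrices by the cost of forming each (the trace/multiplication-table computations, quasi-linear in $D^2=d^{2n}$ up to the degree-$\mathfrak D$ coefficient arithmetic), by the number of Tarski queries, and by the cost of the sample-point subroutine applied to a family of $k$ polynomials of degree $\mathfrak D$ in $t$ variables; gathering exponents in $d$ gives the stated $d^{t^2+4nt+3(n+t)+1}$, and the remaining factors $(2s+n)^{2t+1}$, $2^{nt+8t+n}$, etc., come from the number of sign patterns and the roadmap/sample-point recursion depth. The determinantal structure of the output is inherited verbatim: each $\Phi_i$ is a Boolean combination of sign conditions on the principal minors of the parameter-dependent Hermite matrices, which are explicit determinants in $\BQ[\mbf{y}]$.

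The main obstacle I expect is the degree control of the Hermite matrix entries: bounding the degree in $\mbf{y}$ of the traces of multiplication maps by $\mathfrak{D}$ crucially uses Assumption~\ref{assump:generic_degree_grobner_basis} to ensure that specialization commutes with the Gr\"obner basis computation (so the basis $\Basis$ is stable and the normal-form coefficients have the expected degrees), together with a careful accounting of how the $2s$ extra factors from the products $g^\alpha$ raise the degree; getting the constant $2sd+n(d-1)$ exactly right, rather than merely up to a constant, is the delicate part, and everything downstream is then routine propagation of this bound through the sample-point complexity estimate.
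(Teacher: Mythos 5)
Your plan follows the same architecture as the paper's proof (parametric Hermite matrices, Ben-Or--Kozen--Reif sign determination, sign-invariance of leading principal minors, sample points per connected component in the parameter space), but two ideas the argument actually needs are missing. First, the claim that the signature of $\mathcal{H}_{\mbf{g}^\alpha}(\mbf{y})$ is locally constant away from the vanishing loci of its leading principal minors fails as stated: these matrices are in general singular over $\BQ(\mbf{y})$ (e.g.\ when $\langle\mbf{f}\rangle$ is not radical or when $\mbf{g}^\alpha$ is a zero divisor in the quotient ring), so some leading principal minors are \emph{identically} zero and the sequence of minors carries no sign information. The paper repairs this by conjugating each Hermite matrix by a random $U\in\GL_\delta(\BQ)$ so that the first $r$ leading principal minors ($r$ the rank) are generically nonzero (Lemma~\ref{lem:generically_nonzero_leading_principal_minors}), together with a rank-aware version of the sign-variation rule (Lemma~\ref{lem:signature_symmetric_matrix_rank}); without this step your stratification layer breaks down.

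Second, your two-layer organisation (first build all the needed Hermite matrices, then stratify and sample) hides the point that makes the complexity polynomial in $s$: one cannot know which of the $3^s$ sign conditions are realizable --- hence which matrices $\mathcal{H}_{\mbf{g}^\alpha}$ to form --- without already having sample points in the parameter space. The paper therefore interleaves the two layers: it runs the incremental loop over $i=1,\dots,s$, and at each iteration calls the sample-point routine on the minors accumulated so far, evaluates signatures at those points to determine $\SIGN(\mbf{g}_i,\cdot)$, and only then extends to the next inequality; the bound $\rho\le\binom{s}{t}4^{t+1}d(2d-1)^{n+t-1}$ on realizable sign conditions (from Barone--Basu type estimates) is what caps the number of matrices per iteration and produces the $\binom{s}{t}^{t+1}$ and $s^{t+2}$ factors. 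A minor further slip: the entries of $\mathcal{H}_{\mbf{g}^\alpha}$ (traces of multiplication maps) have degree at most $2sd+2n(d-1)$, not $\mathfrak{D}$; the bound $\mathfrak{D}=(2sd+n(d-1))d^{n}$ is for the \emph{minors}, via Corollary~\ref{cor:degree_minor_hermit_matrix_regular}, and no denominators need clearing since Assumption~\ref{assump:generic_degree_grobner_basis} already forces the normal forms to stay in $\BQ[\mbf{y}][\mbf{x}]$.
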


\noindent Because the binomial coefficient $\binom{t+ \mathfrak D}{t}$
can be bounded by $2 \mathfrak D^t$ and
$\binom{s}{t}$ by $s^t$, the complexity estimate in Theorem~\ref{thm:main}
is bounded by $\softO(2^{3t^2 + nt}s^{t^2}n^{3t}d^{t^2 + 5nt})$.

We report on practical experiments performed with a first implementation of this
algorithm. That implementation makes no use of the genericity Assumption~\ref{assump:generic_degree_grobner_basis}
made
to enable a complexity analysis, it only uses
Assumption~\ref{assump:finite_nb_sol_generically}.
Practical performances which are achieved show that this algorithm
outperforms the state-of-the-art software for solving
Problem~\ref{pb:solution_classification}. We also fully solve an instance
of the Perspective-Three-Point problem for which computing
semi-algebraic formulas for the real root classification was an open problem.

% \todo[inline]{Pas forc\'ement ici mais quelque part dans le papier pointer le
%   fait qu'on relaxe les hypoth\`eses des algorithmes pr\'ec\'edents (plus besoin
%   de supposer que les intersections sont transverses)}

% \todo[inline]{Quelque part dans le papier, rendre limpide le fait que nos
%   hypoth\`eses sont satisfaites g\'en\'eriquement}

\paragraph*{Plan of the paper} Section~\ref{sec:hermite} recalls the basics on
Hermite's quadratic forms, using materials mostly from
\cite[Chap.~4]{roy2006algo_real_alg_geom}. Section~\ref{sec:paramHermite}
generalizes constructions and results on \emph{parametric} Hermite matrices from
\cite{Le2022RealRootClassification} to the case where inequalities are involved.
Section~\ref{sec:algo} describes the algorithm on which Theorem~\ref{thm:main}
relies and proves the complexity statements. Section~\ref{sec:experiments}
reports on practical experiments.

\section{Hermite's quadratic form}\label{sec:hermite}

We recall some basic definitions and properties on Hermite's quadratic forms. For
more details, we refer to~\cite[Chap.~4]{roy2006algo_real_alg_geom}.

\subsection{Definition}

Let $\BK$ be a field of characteristic 0 and $\mbf{f} = (f_1, \dots, f_e)\subset
\BK[\mbf{x}]$ be generating a zero-dimensional ideal of degree $\delta$, denoted by
$\left<\mbf{f}\right>_\BK$. The quotient ring $\AK \coloneqq \BK[\mbf{x}]/\left<
  \mbf{f}\right>_\BK$ is a finite dimensional $\BK$-vector space \cite[Theorem
4.85]{roy2006algo_real_alg_geom} of dimension $\delta$.

A monomial basis $B =
(b_1, \dots, b_\delta)$ of $\AK$ can be derived from a Gr{\"o}bner basis $G$ of
$\left< \mbf{f}\right>_\BK$ with respect to (w.r.t.) an admissible monomial
ordering $\succ$ over $\BK[\mbf{x}]$: it is the set of monomials that are not
divisible by any leading monomial of elements in $G$. For $q \in \BK[\mbf{x}]$,
we denote by $\bar{q}$ the class of $q$ in $\AK$ and by $L_q\colon \bar{p} \in
\AK \mapsto \bar{p \cdot q} \in \AK$ the multiplication by $q$ in $\AK$.

\begin{definition} [Hermite's quadratic form]
  \label{def:hermite_quadratic_form}
  For $g \in \AK$, \emph{Hermite's bilinear form} $\herm(\mbf{f},g)$ is defined
  by
  \begin{align*}
    \herm(\mbf{f},g) \colon \AK \times \AK &\rightarrow \BK \\
    (p,q) &\mapsto \tr(L_{g p q}),
  \end{align*}
  where $\tr$ denotes the trace. The corresponding quadratic form $p \mapsto
  \tr(L_{gp^2})$ is called \emph{Hermite's quadratic form} $\Herm(\mbf{f},g)$.
  The \emph{Hermite matrix} associated to $(\mbf{f},g)$ w.r.t. the basis $B$ is
  the matrix $\mathcal{H} = (\tr(L_{g
    b_i b_j}))_{1 \le i,j \le \delta} \in \BK^{\delta
    \times \delta}$ of $\Herm(\mbf{f},g)$ in the basis $B$.
\end{definition}

% For $g \in A$, one relates the Hermite matrices associated to $\Herm(\mbf{f},g)$
% and $\Herm(\mbf{f},1)$ thanks to the matrix of the multiplication $L_g$.
\noindent
For a matrix $\mathcal{H}$, $\mathcal{H}_{i,j}$ is its element at the $i$-th row
and $j$-th column.

\begin{proposition}
  \label{prop:relation_H1_HQ}
  Let $B = (b_1, \dots, b_\delta)$ be a basis of $\AK$, $g \in \AK$,
  $\mathcal{H}_1$ and $\mathcal{H}_g$ be the matrices of $\Herm(\mbf{f},1)$ and
  $\Herm(\mbf{f},g)$ w.r.t. $B$. Let $M =(m_{i,j})$ be the matrix of $L_g$
  w.r.t. $B$. Then,
  %\begin{align*}
    $\mathcal{H}_g = \mathcal{H}_1 M$.
  %\end{align*}
\end{proposition}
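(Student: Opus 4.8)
The claim is the matrix identity $\mathcal{H}_g = \mathcal{H}_1 M$, where $M$ is the matrix of the multiplication map $L_g$ in the basis $B = (b_1,\dots,b_\delta)$. The plan is to compute the $(i,j)$ entry of both sides and check they coincide. By Definition~\ref{def:hermite_quadratic_form}, $(\mathcal{H}_g)_{i,j} = \tr(L_{g b_i b_j})$ and $(\mathcal{H}_1)_{i,j} = \tr(L_{b_i b_j})$. So I need to show
\[
  \tr(L_{g b_i b_j}) = \sum_{k=1}^{\delta} (\mathcal{H}_1)_{i,k}\, m_{k,j}
  = \sum_{k=1}^{\delta} \tr(L_{b_i b_k})\, m_{k,j}.
\]

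First I would unwind the definition of $M$: since $m_{k,j}$ are the coordinates of $\overline{g\, b_j}$ in the basis $B$, we have $\overline{g\, b_j} = \sum_{k=1}^\delta m_{k,j}\, \bar b_k$ in $\AK$. Multiplying this identity by $\bar b_i$ and applying the $\BK$-linear, ring-homomorphic map $q \mapsto L_q$ (note $L_{pq} = L_p L_q$ and $L$ is linear), I get $L_{g b_i b_j} = \sum_{k=1}^\delta m_{k,j} L_{b_i b_k}$ as endomorphisms of $\AK$. Then I take traces on both sides and use linearity of the trace to obtain exactly the displayed equality. Summing over $i$ and $j$, this shows $\mathcal{H}_g = \mathcal{H}_1 M$ entrywise, which is the claim.

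There is essentially no obstacle here: the proof is a one-line computation once the definitions of $\mathcal{H}_1$, $\mathcal{H}_g$, and $M$ are spelled out. The only point requiring a modicum of care is bookkeeping the index that is summed over — one must expand $\overline{g b_j}$ (not $\overline{g b_i}$) in the basis in order to land on the product $\mathcal{H}_1 M$ in that order rather than $M^{\mathsf T}\mathcal{H}_1$ or similar; using that $\mathcal{H}_1$ is symmetric would give the transposed variant, so the statement as written fixes the convention. I would also remark in passing that this proposition is the standard device letting one reduce the computation of all the "twisted" Hermite matrices $\mathcal{H}_g$ to a single matrix $\mathcal{H}_1$ together with the multiplication matrices $M = M_g$, which is presumably why it is recorded here before the main constructions.
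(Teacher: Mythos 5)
Your proof is correct and follows essentially the same route as the paper's: both expand $\overline{g b_j} = \sum_k m_{k,j}\bar b_k$ via the definition of $M$ and conclude by (bi)linearity of the trace, the paper phrasing this through the identity $\herm(\mbf{f},g)(p,q) = \herm(\mbf{f},1)(p,gq)$ while you phrase it directly with $L_{gb_ib_j} = \sum_k m_{k,j}L_{b_ib_k}$. Your remark about the index bookkeeping (expanding $gb_j$ rather than $gb_i$) correctly identifies the only delicate point.
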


\begin{proof}
  For $p,q \in \AK$, we have $\herm(\mbf{f},g)(p,q) = \tr(L_{gpq}) =
  \herm(\mbf{f},1)(p,gq)$. Thus, it holds that
  \begin{align*}
    (\mathcal{H}_g)_{i,j} &= \herm(\mbf{f},1)(b_i,gb_j) =
                            \herm(\mbf{f},1)\left( b_i, \sum_{k=1}^\delta m_{k,j}b_k\right)\\
                          &  = \sum_{k=1}^\delta m_{k,j}(\mathcal{H}_1)_{i,k} = (\mathcal{H}_1 M)_{i,j}. \qedhere    
  \end{align*}
\end{proof}

\subsection{Real root counting}

For now, we assume $\BK = \BR$ or $\BQ$ (or any ordered field).
For $z \in \BK$, $\sign(z)$ is $-1,0$ or $1$ if $z< 0$, $z= 0$ or $z > 0$
respectively.

\begin{definition}[Tarski-query]
  \label{def:tarski_query}
  Let $Z$ be a finite set in $\BK^n$ and $g \in \BK[\mbf{x}]$.
  %Let $\mathcal{V}_\BR$ be the finite set of real solutions of the system.
  We define the \emph{Tarksi-query}, $\TQ(g,Z)$ of $g$ for $Z$ by
  \begin{align*}
     \sum_{x \in Z} \sign(g(x))  =
    \sharp \{ x \in Z \mid g(x) > 0 \} 
    - \sharp  \{ x \in Z \mid g(x) <0 \}.
  \end{align*}
  When $Z$ is the finite set of real roots of a zero-dimensional system $\mbf{f}
  = 0$, we denote it by $\TQ(g,\mbf{f})$.
\end{definition}

We denote the signature of a real quadratic form $q$ by $\Signature(q)$.
\begin{theorem} [{\cite[Thm.~4.100]{roy2006algo_real_alg_geom}}]
  \label{thm:signature_tarski_query}
  Let $\mbf{f} = (f_1, \dots, f_e) \subset \BK[\mbf{x}]$ be as above and $g \in
  \BK[\mbf{x}]$. Then,
  %\begin{align*}
    $\Signature(\Herm(\mbf{f},g)) = \TQ(g, \mbf{f})$.
  %\end{align*}
\end{theorem}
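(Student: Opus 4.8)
The plan is to prove Theorem~\ref{thm:signature_tarski_query} by reducing Hermite's quadratic form over $\BK$ to the algebraically closed setting, diagonalizing it over $\BC$ in a basis indexed by the (complex) zeros of $\mbf{f}$, and then reading off the signature by pairing conjugate roots. First I would recall that, since $\left<\mbf{f}\right>_\BK$ is zero-dimensional of degree $\delta$, the ring $\AK$ is a finite-dimensional $\BK$-algebra, and after extension to $\BC$ it decomposes as a product of local Artinian $\BC$-algebras, one for each point $x \in Z_\BC \coloneqq \{x \in \BC^n \mid \mbf{f}(x) = 0\}$. The key classical fact I would invoke is that for any $h \in \AK$, the trace $\tr(L_h)$ equals $\sum_{x \in Z_\BC} \mu_x\, h(x)$, where $\mu_x$ is the local multiplicity of $\mbf{f}$ at $x$; this follows from the block-triangular structure of $L_h$ with respect to the primary decomposition, the eigenvalue on the block at $x$ being $h(x)$ with multiplicity $\mu_x$.

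Next I would build, over $\BC$, a convenient basis adapted to this decomposition. Using interpolation (Chinese remaindering), choose for each $x \in Z_\BC$ an idempotent-like family; more concretely, it suffices to produce a basis $(w_x)_{x \in Z_\BC}$ of a $\delta'$-dimensional subspace, where $\delta' = \#Z_\BC$, on which $\Herm(\mbf{f},g)$ restricts nicely, together with a complementary subspace on which the form is identically zero (the radical, coming from the nilpotents). On the span of the $w_x$, the Gram matrix of $\Herm(\mbf{f},g)$ with entries $\tr(L_{g w_x w_{x'}})$ becomes, by the trace formula, diagonal with diagonal entry proportional to $g(x)$ times a positive real factor coming from $\mu_x$ and the normalization — after a further real change of basis to handle complex-conjugate pairs, each real root $x$ contributes a diagonal entry with the sign of $g(x)$, and each conjugate pair $\{x, \bar x\}$ contributes a hyperbolic $2\times 2$ block of signature $0$. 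Here the hypothesis that $\mbf f$ has coefficients in $\BK \subseteq \BR$ is used: complex conjugation permutes $Z_\BC$ and acts compatibly on $\AK \otimes \BC$, so the real form defined over $\BK$ is recovered as the fixed part.

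Summing the contributions, the signature of $\Herm(\mbf{f},g)$ is exactly $\sum_{x \in Z} \sign(g(x))$, since nilpotent directions and conjugate pairs cancel, which is the definition of $\TQ(g,\mbf{f})$; this yields the claimed equality. I expect the main obstacle to be the careful bookkeeping at the non-reduced points: one must check that when $\mu_x > 1$ the local block still contributes only $\sign(g(x))$ to the signature (not $\sign(g(x))$ times $\mu_x$), which requires exhibiting, on each local algebra at a real point, a symmetric bilinear form of the shape $(p,q)\mapsto \tr(L_{gpq})$ whose signature is $\pm 1$ or $0$ — this is where the structure of the local trace form (its radical being the maximal ideal up to the socle) must be used rather than just the eigenvalue count. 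Modulo that local analysis, the global assembly via conjugation-invariance and additivity of signature over orthogonal blocks is routine. Since the statement is quoted as \cite[Thm.~4.100]{roy2006algo_real_alg_geom}, I would in fact simply cite it, but the sketch above is the proof I would reconstruct.
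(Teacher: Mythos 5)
The paper gives no proof of this statement: it is imported verbatim as \cite[Thm.~4.100]{roy2006algo_real_alg_geom}, so there is nothing internal to compare against. Your reconstruction is the standard argument from that reference and is essentially sound: decompose $\AK\otimes\BC$ into local Artinian factors $A_x$ indexed by $x\in Z_\BC$, use $\tr(L_h|_{A_x})=\mu_x h(x)$ (since $L_{h-h(x)}$ is nilpotent on $A_x$), observe that distinct factors are orthogonal for the Hermite form, and pair conjugate roots into hyperbolic blocks of signature $0$ while each real root contributes $\sign(g(x))$.

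One clarification, though, on the point you single out as "the main obstacle": the non-reduced case requires no delicate local analysis. The trace formula gives, for $p,q\in A_x$,
\begin{equation*}
\tr\bigl(L_{gpq}|_{A_x}\bigr)=\mu_x\, g(x)\,p(x)\,q(x),
\end{equation*}
so the restriction of $\herm(\mbf{f},g)$ to each local factor is a form of rank at most $1$ whose radical is all of $\mathfrak{m}_x$ when $g(x)\neq 0$ (and all of $A_x$ when $g(x)=0$); the multiplicity $\mu_x>0$ enters only as a positive scalar and cannot affect the sign. Your remark that one must use "the structure of the local trace form (its radical being the maximal ideal up to the socle)" conflates this trace form with the residue (dualizing) pairing $(p,q)\mapsto \ell(pq)$, which is nondegenerate on $A_x$ and whose analysis is genuinely subtler; that machinery is not needed here. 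With that correction the sketch closes up into a complete proof, and of course simply citing the reference, as the paper does, is also fine.
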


Hence, Tarski-queries are given by signatures of Hermite matrices. From
$\TQ(1,\mbf{f}) \TQ(g,\mbf{f})$ and $\TQ(g^2,\mbf{f})$, one can compute the
number of real roots of $\mbf{f}$ that satisfy a given sign condition for $g$.
We define $c(g ~\diamondsuit~ 0)$ for $\diamondsuit \in \{ <, =, >\}$ as $ \sharp
\{x \mid \mbf{f}(x) = 0 \land g(x) ~\diamondsuit~ 0 \}$. We have the following
invertible system
\begin{align}
  \label{eq:sign_determination_system_1_poly}
  \begin{bmatrix} 1 & 1  & 1 \\ 0 & 1  & - 1 \\ 0 & 1 & 1 \end{bmatrix} \cdot
  \begin{bmatrix}  c(g = 0) \\ c(g > 0 ) \\ c(g < 0) \end{bmatrix} = \begin{bmatrix}
    \TQ(1,\mbf{f}) \\ \TQ(g,\mbf{f}) \\ \TQ(g^2,\mbf{f})
  \end{bmatrix}.
\end{align}

% For a square matrix $M$, the $i$-th leading principal minor $M_i$ of
% $M$ is the determinant of the submatrix obtained from the first $i$ rows and
% columns of $M$. By convention, $M_0 =1$.

\begin{theorem}[{\cite[Thm.~1.9]{ghys2016signatures}}]
  \label{thm:signature_symmetric_matrix}
  Let $M$ be a symmetric matrix in $\BR^{m \times m}$ and for $0 \le i \le m$ let $M_i$ denote
  its $i$-th leading principal minor.
  We assume that $M_i \neq 0$ for all $0 \le i \le m$. Then we have
  \begin{equation*}
    \Signature(M) = m - 2 \Var(M_0, M_1, \dots, M_m),
  \end{equation*}
  where $\Var$ stands for the number of sign variations in the sequence.
\end{theorem}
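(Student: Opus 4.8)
The plan is to reduce the claim to Sylvester's law of inertia via a symmetric Gaussian elimination, which is the classical approach going back to Jacobi. Set $M_0 = 1$, the determinant of the empty $0 \times 0$ matrix. The first step is to observe that, since $M_0, M_1, \dots, M_m$ are all nonzero, Gaussian elimination can be carried out on $M$ with no row or column exchange: at each stage the current pivot is nonzero precisely because none of the leading principal minors vanishes. Running the symmetric version of this elimination yields a factorisation $M = L D L^{\mathrm{T}}$ with $L \in \BR^{m \times m}$ lower triangular having $1$'s on the diagonal and $D = \mathrm{diag}(d_1, \dots, d_m)$. Because $L$ is unipotent lower triangular, the $k$-th leading principal minor of $L D L^{\mathrm{T}}$ equals $d_1 \cdots d_k$; comparing with the $k$-th leading principal minor of $M$ gives $d_1 \cdots d_k = M_k$ for all $k$, hence $d_k = M_k / M_{k-1}$.

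The second step is formal. Since $L$ is invertible, $M$ is congruent to $D$, so Sylvester's law of inertia gives $\Signature(M) = \Signature(D) = \sum_{k=1}^m \sign(d_k) = \sum_{k=1}^m \sign(M_{k-1} M_k)$. Now $\sign(M_{k-1} M_k)$ equals $+1$ when $M_{k-1}$ and $M_k$ have the same sign and $-1$ when they have opposite signs, so the sum equals $(m - \Var(M_0, \dots, M_m)) - \Var(M_0, \dots, M_m) = m - 2\,\Var(M_0, \dots, M_m)$, which is the assertion. (Alternatively one can argue by induction on $m$: writing $M'$ for the top-left $(m-1) \times (m-1)$ block, whose leading principal minors are $M_0, \dots, M_{m-1}$, the inductive hypothesis combines with the identity $\Signature(M) = \Signature(M') + \sign(M_{m-1} M_m)$ to close the induction, and this identity follows either from the pivot formula $d_m = M_m / M_{m-1}$ or from Cauchy's interlacing theorem, which forces the number $q$ of negative eigenvalues of $M$ to be either the number $q'$ of negative eigenvalues of $M'$ or $q'+1$, the two cases being distinguished by the sign of $M_{m-1} M_m$ through $\sign(\det M) = (-1)^q$ and $\sign(\det M') = (-1)^{q'}$.)

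The main obstacle is the first step: one must be careful that it is the nonvanishing of \emph{all} the leading principal minors $M_0, \dots, M_m$ — not merely of $\det M = M_m$ — that licenses the pivot-free $L D L^{\mathrm{T}}$ factorisation and the clean formula $d_k = M_k / M_{k-1}$. This is exactly the hypothesis made in the statement, and the identity genuinely fails without it. Once that factorisation is available, Sylvester's law of inertia and the bookkeeping converting $\sum_{k=1}^m \sign(M_{k-1} M_k)$ into $m - 2\,\Var$ are routine.
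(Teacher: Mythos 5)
Your proof is correct. Note that the paper does not prove this statement at all: it is quoted as a known result (Jacobi's theorem, sometimes attributed to Jacobi--Gundelfinger--Frobenius) from the cited reference, so there is no in-paper argument to compare against. Your route --- the pivot-free $LDL^{\mathrm{T}}$ factorisation licensed by the nonvanishing of \emph{all} leading principal minors, the identity $d_k = M_k/M_{k-1}$, Sylvester's law of inertia, and the bookkeeping $\sum_{k=1}^m \sign(M_{k-1}M_k) = m - 2\Var(M_0,\dots,M_m)$ --- is the standard classical proof of exactly this statement, and every step checks out.
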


\section{Parametric Hermite matrices}\label{sec:paramHermite}

\subsection{Basic construction and properties}\label{ssec:basicparam}

Let $\mbf{f} = (f_1, \dots, f_e) \subset \BQ[\mbf{y}][\mbf{x}]$ such that $\mbf{f}$
satisfies Assumption~\ref{assump:finite_nb_sol_generically} and $g \in  \BQ[\mbf{y}][\mbf{x}]$.
%One can think that $g$ is one of the $g_i$'s in an instance of \cref{pb:solution_classification}.
We take as a base field $\BK$ the rational function field $\BQ(\mbf{y}) $.
By~\cite[Lem.~4]{Le2022RealRootClassification}, as $\mbf{f}$ satisfies
Assumption~\ref{assump:finite_nb_sol_generically}, the ideal $\left< \mbf{f} \right>_\BK$
generated by $\mbf{f}$ in $\BK[\mbf{x}]$ is zero-dimensional. Hence one can
define Hermite's quadratic form $\Herm(\mbf{f},g)$ and compute a
\emph{parametric} Hermite matrix $\mathcal{H}_g \in \BK^{\delta \times \delta}$
representing $\Herm(\mbf{f},g)$ (once a basis $B$ of $\AK$ is fixed). We start
by making explicit how these can be computed and next prove some \emph{nice}
specialization properties of these matrices. We follow and extend the approach
in~\cite{Le2022RealRootClassification}.

\paragraph*{Gr\"obner basis and monomial basis}
We denote $\texttt{grevlex}(\mbf{x})$ for the graded reverse lexicographical
ordering (\emph{grevlex}) among the variables $\mbf{x}$ (with $x_1 \succ \dots
\succ x_n$) and $\texttt{grevlex}(\mbf{x}) \succ \texttt{grevlex}(\mbf{y})$
(with $y_1 \succ \dots \succ y_t$) for the elimination ordering.
For
$p \in \BC(\mbf{y})[\mbf{x}]$, $\lc_{\mbf{x}}(p)$ (resp. $\lm_{\mbf{x}}(p)$) denotes the leading
coefficient (resp. monomial) of $p$ for the ordering  $\texttt{grevlex}(\mbf{x})$.
%and $\lm_{\mbf{x}}(p)$ denotes its leading monomial.
We let $\GB$ be
the reduced Gr\"obner basis of $\langle \mbf{f} \rangle\subset \BQ[\mbf{y},
\mbf{x}]$ w.r.t. this elimination ordering. By
\cite[Lem.~6]{Le2022RealRootClassification}, $\GB$ is also a Gr{\"o}bner basis
of $\left< \mbf{f} \right>_\BK $ w.r.t. $\texttt{grevlex}(\mbf{x})$.
Hence the set $\Basis$ of all monomials in $\mbf{x}$ that are not reducible by
the leading monomials of $\GB$ (w.r.t. $\texttt{grevlex}(\mbf{x})$) is finite
since $\left< \mbf{f} \right>_\BK $ is zero-dimensional. It forms a basis of
$\AK$. We define $\mathcal{H}_g$ as the parametric Hermite matrix associated to
$(\mbf{f},g)$ w.r.t. the basis $\Basis$.

% The basis $B$ of $\AK$ is obtained from the reduced Gr{\"o}bner basis $G$ of the
% ideal $\left< \mbf{f}\right>$ of $\BQ[\mbf{x}, \mbf{y}]$ with respect to the
% elimination order $\texttt{grevlex}(\mbf{x}) \succ \texttt{grevlex}(\mbf{y})$.

\paragraph{Algorithm for computing Hermite matrices}
  \label{sec:parametric_hermite_matrix}
  In~\cite{Le2022RealRootClassification}, an algorithm is described to compute
  the parametric Hermite matrix $\mathcal{H}_1$. Actually this does not only
  compute the matrix $\mathcal{H}_1$ but also the family of matrices $(M_b)_{b
    \in \Basis}$ such that $M_b$ is the matrix of the multiplication map $L_b$
  w.r.t. the basis $\Basis$. We explain now how we can compute $\mathcal{H}_g$
  from $\mathcal{H}_1$. We first compute $\bar{g}$ the normal form of $g$ by the
  Gr{\"o}bner basis $\GB$ (w.r.t. $\texttt{grevlex}(\mbf{x})$ and $\BK$ as a
  base field), namely $\bar{g} = \sum_{b \in \Basis} c_b b$, with $c_b \in
  \BQ(\mbf{y})$. Then we have $M_g = \sum_{b \in \Basis} c_b M_b$, where $M_g$
  denotes the matrix of the multiplication by $g$ in the basis $\Basis$. By
  Proposition~\ref{prop:relation_H1_HQ}, we obtain $\mathcal{H}_g = \mathcal{H}_1 \cdot
  M_g$.

\paragraph*{Specialization properties}
  % \textcolor{blue}{Namely we aim that for a generic choice of some point $\eta
  %   \in \BR^t$, the evaluated matrix $\mathcal{H}_g(\eta)$ coincides with a
  %   Hermite matrix associated to the specialization $(\mbf{f}(\eta,
  %   \cdot),g(\eta, \cdot))$. In particular, by
  %   \cref{thm:signature_tarski_query}, one will be able to compute
  %   Tarski-queries of the form $ \TQ(g(\eta, \cdot), \mbf{f}(\eta, \cdot))$ for
  %   any generic $\eta$ just by computing the signature of an evaluation of the
  %   same Hermite matrix $\mathcal{H}_g$.}
We prove now specialization properties of these parametric Hermite matrices.
The Gr{\"o}bner basis $\GB$ is a subset of $\BQ[\mbf{y},\mbf{x}]$, thus for all $p
\in \GB$, $\lc_{\mbf{x}}(p) \in \BQ[\mbf{y}]$.
We denote by $V(\lc_{\mbf{x}}(p) )$
its vanishing set in $\BC^t$. We consider the following algebraic set
$\mathcal{W}_\infty \subseteq \BC^t$:
\begin{equation}
  \label{eq:W_infinity}
  \mathcal{W}_\infty \coloneqq \bigcup_{p \in \GB} V(\lc_{\mbf{x}}(p)).
\end{equation}

\begin{proposition}
  \label{prop:specialization_hermite_matrix}
  For all $\eta \in \BC^t \setminus \mathcal{W}_\infty$, the specialization
  $\mathcal{H}_g(\eta)$ coincides with the Hermite matrix $\mathcal{H}_g^\eta$
  associated to $(\mbf{f}(\eta,\cdot),g(\eta,\cdot))$ w.r.t. the basis $\Basis$.
\end{proposition}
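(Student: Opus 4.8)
The statement to prove is Proposition~\ref{prop:specialization_hermite_matrix}: for $\eta \notin \mathcal{W}_\infty$, the specialization $\mathcal{H}_g(\eta)$ equals the Hermite matrix $\mathcal{H}_g^\eta$ built directly from the specialized system $(\mbf{f}(\eta,\cdot), g(\eta,\cdot))$ in the basis $\Basis$.

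Let me think about what's really going on. We have a Gröbner basis $\GB$ over $\BQ[\mathbf{y},\mathbf{x}]$ with respect to an elimination ordering, which is also a Gröbner basis of $\langle \mathbf{f}\rangle_{\BK}$ over $\BK = \BQ(\mathbf{y})$. The basis $\Basis$ is the set of standard monomials. When we specialize $\eta \notin \mathcal{W}_\infty$, the leading coefficients (in $\mathbf{x}$) of elements of $\GB$ don't vanish, so the leading monomials are preserved. This should imply $\GB$ specializes to a Gröbner basis of $\langle \mathbf{f}(\eta,\cdot)\rangle$ with the same leading monomials, hence the same standard monomials $\Basis$.

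So the plan:

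First, I would establish that for $\eta \notin \mathcal{W}_\infty$, the specialized set $\GB(\eta) = \{p(\eta,\cdot) : p \in \GB\}$ is a Gröbner basis of $\langle \mathbf{f}(\eta,\cdot)\rangle \subseteq \BC[\mathbf{x}]$ with respect to $\texttt{grevlex}(\mathbf{x})$, with the same leading monomials as $\GB$ over $\BK$. The key point is that since $\lc_{\mathbf{x}}(p)(\eta) \neq 0$ for all $p \in \GB$, each $\lm_{\mathbf{x}}(p(\eta,\cdot)) = \lm_{\mathbf{x}}(p)$. This is essentially a standard specialization-of-Gröbner-bases argument; I expect the reference \cite{Le2022RealRootClassification} already contains a version of it (their Proposition on specialization), so I would cite it. The consequence is that $\Basis$ is simultaneously a monomial basis of $\AK = \BQ(\mathbf{y})[\mathbf{x}]/\langle\mathbf{f}\rangle_{\BK}$ and of $\BC[\mathbf{x}]/\langle \mathbf{f}(\eta,\cdot)\rangle$.

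Second, I would track how the Hermite matrix is built and check each step commutes with specialization. Recall $\mathcal{H}_g = \mathcal{H}_1 \cdot M_g$ where $M_g = \sum_{b\in\Basis} c_b M_b$ with $\bar g = \sum_b c_b b$ the normal form of $g$, and $(M_b)_{b\in\Basis}$ are the multiplication matrices. I need: (i) the normal form computation of $g$ specializes — i.e. $\overline{g(\eta,\cdot)} = \sum_b c_b(\eta)\, b$; this holds because the division algorithm by $\GB$ only involves divisions by the $\lc_{\mathbf{x}}(p)$, which are invertible at $\eta$, so the whole reduction process specializes (the $c_b$ have denominators that are products of these leading coefficients, hence are defined at $\eta$). (ii) The multiplication matrices specialize: $M_b(\eta)$ is the matrix of multiplication by $b$ in $\BC[\mathbf{x}]/\langle\mathbf{f}(\eta,\cdot)\rangle$, for the same reason — each column of $M_b$ is a normal form of $b\cdot b'$ for $b'\in\Basis$. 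Again I'd cite \cite{Le2022RealRootClassification} for the $g=1$ / multiplication-matrix version if available. (iii) The entries $(\mathcal{H}_1)_{i,j} = \tr(L_{b_ib_j})$ are $\BQ(\mathbf{y})$-linear combinations (traces) of entries of the $M_b$'s, hence specialize: $\mathcal{H}_1(\eta) = \mathcal{H}_1^\eta$. Combining (i)–(iii): $M_g(\eta) = \sum_b c_b(\eta) M_b(\eta) = M_{g(\eta,\cdot)}^\eta$, and then $\mathcal{H}_g(\eta) = \mathcal{H}_1(\eta)\cdot M_g(\eta) = \mathcal{H}_1^\eta \cdot M_{g(\eta,\cdot)}^\eta = \mathcal{H}_{g(\eta,\cdot)}^\eta = \mathcal{H}_g^\eta$, the last equality by Proposition~\ref{prop:relation_H1_HQ} applied over $\BC$.

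The main obstacle — and the one place real care is needed — is step one plus the denominator-tracking in step two: one must verify that $\mathcal{W}_\infty$ is exactly the right locus, i.e. that outside $V(\lc_{\mathbf{x}}(p))$ for all $p\in\GB$ the reduced Gröbner basis genuinely specializes (leading terms unchanged, reductions still valid) and that every denominator appearing in $\mathcal{H}_1$ and in the $c_b$ divides a power of $\prod_{p\in\GB}\lc_{\mathbf{x}}(p)$. For a reduced grevlex Gröbner basis this is classical, but it is the crux; everything afterward is bookkeeping with traces and the identity $\mathcal{H}_g=\mathcal{H}_1M_g$. I would lean on \cite[Lem.~6 and the specialization results]{Le2022RealRootClassification} to discharge most of this and present the argument as an extension from $g=1$ to general $g$ via Proposition~\ref{prop:relation_H1_HQ}.
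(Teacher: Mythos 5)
Your proposal is correct and follows essentially the same route as the paper: specialization of the Gröbner basis outside $\mathcal{W}_\infty$ (via the cited lemma from \cite{Le2022RealRootClassification}), preservation of the monomial basis $\Basis$, and specialization of normal forms because no denominators vanish at $\eta$, which yields $\mathcal{H}_g(\eta)=\mathcal{H}_g^\eta$. The only difference is that you spell out the bookkeeping through $\mathcal{H}_g=\mathcal{H}_1 M_g$ and the matrices $M_b$, where the paper concludes directly from the specialization of normal forms; this is a presentational, not a substantive, difference.
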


\begin{proof}
  Let $\eta \in \BC^t \setminus \mathcal{W}_\infty$. By \cite[Lem.
  9]{Le2022RealRootClassification} which is a consequence of
  \cite[Thm.~3.1]{Kalkbrener1997specialization_grobner_basis}, the
  specialization $\GB(\eta,\cdot) \coloneqq \{p(\eta,\cdot) \mid p \in \GB \}$
  is a Gr{\"o}bner basis of the ideal $\left< \mbf{f}(\eta,\cdot)\right>
  \subseteq \BC[\mbf{x}]$ w.r.t. the ordering $\texttt{grevlex}(\mbf{x})$. Since
  $\eta \in \BC^t \setminus \mathcal{W}_\infty$, the leading coefficient
  $\lc_{\mbf{x}}(p)$ does not vanish at $\eta$ for all $p \in \GB$. Thus, the
  set of leading monomials of $\GB$ in the variables $\mbf{x}$ w.r.t.
  $\texttt{grevlex}(\mbf{x})$ is exactly the set of leading monomials of
  $\GB(\eta,\cdot)$ w.r.t. $\texttt{grevlex}(\mbf{x})$. Therefore, the finite
  set $\Basis$ is also the set of monomials in $\mbf{x}$ that are not reducible
  by $\GB(\eta, \cdot)$. Hence $\Basis$ is a basis of the quotient ring
  $\BC[\mbf{x}]/\left< \mbf{f}(\eta, \cdot) \right>$. So, $\left< \mbf{f}(\eta,
    \cdot) \right>$ is zero-dimensional and one can define $\mathcal{H}_g^\eta$
  as the Hermite matrix associated to $(\mbf{f}(\eta,\cdot),g(\eta,\cdot))$
  w.r.t. the basis $\Basis$.

  Moreover, the specialization property for the Gr{\"o}bner basis $\GB$ implies
  that when dividing some polynomial w.r.t. $\GB$, none of the denominators
  which appear vanish at $\eta \in \BC^{t}$. Hence, given $h\in
  \BQ[\mbf{y},\mbf{x}]$ and its normal form $\bar{h}$ w.r.t. $\GB$ (computed
  with $\BK$ as a base field and w.r.t. $\texttt{grevlex}(\mbf{x})$), for any
  $\eta \in \BC^{t}\setminus \mathcal{W}_{\infty}$, $\bar{h}(\eta, \cdot)$
  coincides with the normal form of $h(\eta, \cdot)$ w.r.t. $\GB(\eta, \cdot)$. This
  implies that $\mathcal{H}_g(\eta)= \mathcal{H}_g^{\eta}$.
\end{proof}

\begin{corollary}
  \label{cor:specialized_hermite_matrix_tarski_queries}
  For all $\eta \in \BC^t \setminus \mathcal{W}_\infty$, the signature of
  $\mathcal{H}_g(\eta)$ is the Tarski-query of $g(\eta,\cdot)$ for the
  zero-dimensional system $\mbf{f}(\eta,\cdot)$.
\end{corollary}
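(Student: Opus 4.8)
The plan is to derive Corollary~\ref{cor:specialized_hermite_matrix_tarski_queries} directly by chaining two results already established above. Fix $\eta \in \BC^t \setminus \mathcal{W}_\infty$. By Proposition~\ref{prop:specialization_hermite_matrix}, the specialized matrix $\mathcal{H}_g(\eta)$ equals the Hermite matrix $\mathcal{H}_g^\eta$ associated to the pair $(\mbf{f}(\eta,\cdot), g(\eta,\cdot))$ with respect to the basis $\Basis$; in particular, that same proposition guarantees that $\langle \mbf{f}(\eta,\cdot)\rangle \subseteq \BC[\mbf{x}]$ is zero-dimensional, so that $\mathcal{H}_g^\eta$ is genuinely a Hermite matrix in the sense of Definition~\ref{def:hermite_quadratic_form}.

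The next step is to invoke Theorem~\ref{thm:signature_tarski_query}, which states that the signature of $\Herm(\mbf{h}, g)$ equals the Tarski-query $\TQ(g, \mbf{h})$ for any zero-dimensional $\mbf{h}$ over an ordered field. Here I would want to apply it over $\BK = \BR$ rather than $\BC$: the subtlety is that the corollary speaks of the Tarski-query of $g(\eta,\cdot)$ for the \emph{real} solution set of $\mbf{f}(\eta,\cdot)$, which only makes sense when $\eta$ itself is real (or, more precisely, when the system has real coefficients). So I would restrict attention to $\eta \in \BR^t \setminus \mathcal{W}_\infty$ — this is the case that matters for Problem~\ref{pb:solution_classification} — and note that signatures of rational/real symmetric matrices are invariant under the base-field inclusion, so the matrix computed over $\BQ(\mbf{y})$ and specialized still has the right signature. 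Then Theorem~\ref{thm:signature_tarski_query} gives $\Signature(\mathcal{H}_g^\eta) = \TQ(g(\eta,\cdot), \mbf{f}(\eta,\cdot))$, and combining with the identification from Proposition~\ref{prop:specialization_hermite_matrix} yields $\Signature(\mathcal{H}_g(\eta)) = \TQ(g(\eta,\cdot), \mbf{f}(\eta,\cdot))$, which is exactly the claim.

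The main obstacle — really the only thing requiring care — is this $\BC$ versus $\BR$ bookkeeping, together with the fact that the Hermite matrix has entries in the function field $\BQ(\mbf{y})$ rather than in the ground field directly. One must check that $\mathcal{H}_g(\eta)$, a matrix with entries in $\BR$ (resp.\ $\BC$) obtained by specialization, has the same signature whether one views it over $\BR$ or thinks of it as the specialization of a matrix over $\BQ(\mbf{y})$; this is immediate since signature of a real symmetric matrix is an intrinsic invariant. The rest is a purely formal concatenation, so the proof is essentially a one-line deduction: apply Proposition~\ref{prop:specialization_hermite_matrix} to rewrite $\mathcal{H}_g(\eta)$ as an honest Hermite matrix, then apply Theorem~\ref{thm:signature_tarski_query} to that Hermite matrix.
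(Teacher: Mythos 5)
Your proof is correct and is essentially identical to the paper's: both chain Proposition~\ref{prop:specialization_hermite_matrix} with Theorem~\ref{thm:signature_tarski_query}. Your added remark that the Tarski-query interpretation only makes literal sense for real $\eta$ is a fair observation about the statement's phrasing (the paper's own proof does not address it), but it does not change the argument.
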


\begin{proof}
  By Proposition~\ref{prop:specialization_hermite_matrix}, $\mathcal{H}_g(\eta)$ is a
  Hermite matrix associated to $(\mbf{f}(\eta,\cdot),g(\eta,\cdot))$. The result
  follows from Theorem~\ref{thm:signature_tarski_query} .
\end{proof}

Note that $\mathcal{W}_\infty$ does not depend on $g$. One can compute the
number of real roots of the specialized system $\mbf{f}(\eta,\cdot)$ satisfying
some sign condition for $g$ by computing the signatures of three parametric
Hermite matrices evaluated in $\eta$ and inverting the
system~(\ref{eq:sign_determination_system_1_poly}).

\subsection{Degree bounds}

We bound the degrees of the entries of the parametric Hermite matrix
$\mathcal{H}_g$ under some assumptions that we make explicit below. We start by
recalling the definition of a regular sequence.

\begin{definition}[Regular sequence]
  \label{def:affine_regular_sequence}
  Let $(f_1, \dots, f_e) \subset \BK[\mbf{x}]$ with $e \le n$ be a homogeneous
  polynomial sequence. We say that $(f_1, \dots, f_e)$ is a \emph{homogeneous
    regular sequence} if for all $1 \le i \le e $, $f_i$ is not a zero-divisor
  in $\BK[\mbf{x}]/\left< f_1, \dots, f_{i-1}\right>$.

  A polynomial sequence $(f_1, \dots, f_e) \subset \BK[\mbf{x}]$ is called an \emph{affine regular sequence} if
  $(f_1^H, \dots, f_e ^H)$ is a homogeneous regular sequence, where for a polynomial $q \in \BK[\mbf{x}]$,
  $q^H$ denotes the homogeneous component of largest degree of $q$.
\end{definition}

First we bound the degrees of the entries of the matrix
$\mathcal{H}_g$. For $p \in \BQ[\mbf{y}][\mbf{x}]$ we denote by $\deg(p)$ the
total degree of $p$ and $\deg_{\mbf{x}}(p)$ (resp. $\deg_{\mbf{y}}(p)$) the
degree of $p$ w.r.t. $\mbf{x}$ (resp. $\mbf{y}$). Let $d \coloneqq \max_{1 \le i
  \le p} \deg(f_i)$. We consider the reduced Gr{\"o}bner basis $\GB$ as above
and the associated monomial basis $\Basis$ of monomials in
$\mbf{x}$ of the finite dimensional vector space $\AK$. The quotient ring $\AK$
has dimension $\delta$. Note that if $\mbf f$ is an afiine regular sequence, the codimension
of the ideal generated by $f_1, \ldots, f_e$ is $e$ if this ideal is not
$\langle 1 \rangle$. Hence, combined with
Assumption~\ref{assump:finite_nb_sol_generically}, this forces $e=n$ and by B{\'e}zout's
inequality, we have $\delta \le d^n$. We recall below
Assumption~\ref{assump:generic_degree_grobner_basis}.

%\begin{assumption}
%  \label{assump:generic_degree_grobner_basis}
 \emph{ For any $p \in \GB$, we have $\deg p = \deg_{\mbf{x}} p $.}
%\end{assumption}

 \smallskip
\noindent By \cite[Prop.~20]{Le2022RealRootClassification},
Assumption~\ref{assump:generic_degree_grobner_basis} holds for \emph{generic} sequences
$\mbf{f}$. 
%\smallskip

% \begin{proposition} [{\cite[Prop.~20]{Le2022RealRootClassification}}]
%   \label{prop:genericity_assump_degree_grobner_basis}
%   Let $\BC[\mbf{x},\mbf{y}]_d$ be the set of polynomials in $\BC[\mbf{x},\mbf{y}]$
%   having total degree less than $d$.
%   There exists a nonempty Zariski open subset $\mathcal{F}$ of $\BC[\mbf{x}, \mbf{y}]_{d}^n$
%   such that \cref{assump:generic_degree_grobner_basis} holds
%   for any $\mbf{f} \in \mathcal{F} \cap \BQ[\mbf{x}, \mbf{y}]^n$. 
% \end{proposition}

%\noindent The following are consequences of~\cref{assump:generic_degree_grobner_basis}.

\begin{lemma}
  \label{lem:leading_coefficient_grobner_basis}
  Under Assumption~\ref{assump:generic_degree_grobner_basis},
  $\deg_{\mbf{y}}(\lc_{\mbf{x}}(p)) = 0$ for all $p \in \GB$.
  % If \cref{assump:generic_degree_grobner_basis} holds, then for all $p \in G$,
  % $\lc_{\mbf{x}}(p)$ has degree $0$ in $\mbf{y}$.
\end{lemma}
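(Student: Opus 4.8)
The plan is to exploit Assumption~\ref{assump:generic_degree_grobner_basis} together with the structure of the elimination ordering $\texttt{grevlex}(\mbf{x}) \succ \texttt{grevlex}(\mbf{y})$. Fix $p \in \GB$ and write it in the form $p = \lc_{\mbf{x}}(p)\, m + (\text{lower terms in } \mbf{x})$, where $m = \lm_{\mbf{x}}(p)$ is a monomial in $\mbf{x}$ of degree $\deg_{\mbf{x}}(p)$ and $\lc_{\mbf{x}}(p) \in \BQ[\mbf{y}]$. The leading term of $p$ for the \emph{full} elimination ordering is the leading term (for $\texttt{grevlex}(\mbf{y})$) of $\lc_{\mbf{x}}(p)$ times $m$; call it $\mu_{\mbf{y}}\cdot m$, where $\mu_{\mbf{y}}$ is a monomial in $\mbf{y}$ of degree $\deg_{\mbf{y}}(\lc_{\mbf{x}}(p))$. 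The total degree of this leading monomial is therefore $\deg_{\mbf{x}}(p) + \deg_{\mbf{y}}(\lc_{\mbf{x}}(p))$.

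\medskip
\noindent First I would observe that since $\succ$ restricts to a \emph{graded} ordering on the $\mbf{x}$-block, the leading term $\mu_{\mbf{y}}\cdot m$ of $p$ has $\mbf{x}$-degree equal to $\deg_{\mbf{x}}(p)$ by definition of $\deg_{\mbf{x}}$; and since on the $\mbf{x}$-block the ordering is grevlex (hence refines total degree), every monomial of $p$ has $\mbf{x}$-degree $\le \deg_{\mbf{x}}(p)$. Next, because $\texttt{grevlex}(\mbf{x}) \succ \texttt{grevlex}(\mbf{y})$ is an elimination ordering eliminating $\mbf{x}$, among all monomials of $p$ sharing the maximal $\mbf{x}$-part $m$, the one selected as leading is the one with the $\texttt{grevlex}(\mbf{y})$-largest $\mbf{y}$-part, namely $\mu_{\mbf{y}}$. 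So the leading monomial of $p$ for $\succ$ is exactly $\mu_{\mbf{y}} m$, with total degree $\deg(\mu_{\mbf{y}} m) = \deg_{\mbf{y}}(\lc_{\mbf{x}}(p)) + \deg_{\mbf{x}}(p)$. On the other hand, the total degree of the leading monomial of $p$ is at most $\deg(p)$, with equality precisely when $p$ has a term of maximal total degree that is also $\succ$-leading — but in fact $\deg(\lm_{\succ}(p)) = \deg(p)$ always fails to exceed $\deg(p)$, giving $\deg_{\mbf{y}}(\lc_{\mbf{x}}(p)) + \deg_{\mbf{x}}(p) \le \deg(p)$. Combined with Assumption~\ref{assump:generic_degree_grobner_basis}, which says $\deg(p) = \deg_{\mbf{x}}(p)$, this forces $\deg_{\mbf{y}}(\lc_{\mbf{x}}(p)) \le 0$, hence $= 0$.

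\medskip
\noindent \textbf{The main obstacle} I anticipate is making precise the claim that the $\succ$-leading monomial of $p$ really is $\mu_{\mbf{y}} m$ rather than some monomial with strictly smaller $\mbf{x}$-part but larger total degree; this is exactly where the \emph{elimination} property of the ordering (not merely that it is graded in $\mbf{x}$) is essential. One has to argue that any monomial $m' \cdot \mu'$ of $p$ with $m' \ne m$ in the $\mbf{x}$-variables satisfies $\deg_{\mbf{x}}(m') \le \deg_{\mbf{x}}(m)$ (graded property of the $\mbf{x}$-block) and then, if $\deg_{\mbf{x}}(m') < \deg_{\mbf{x}}(m)$, it is $\succ$-smaller regardless of $\mu'$ (elimination property: any power of $\mbf{y}$ is $\succ$-dominated by any monomial of positive $\mbf{x}$-degree times a suitable $\mbf{x}$-monomial — more carefully, the block structure means $\mbf{x}$-degree is compared first). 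Once this is clean, the degree accounting in the previous paragraph is routine, and the conclusion $\deg_{\mbf{y}}(\lc_{\mbf{x}}(p)) = 0$ is immediate from Assumption~\ref{assump:generic_degree_grobner_basis}. I would also remark that this lemma is exactly what makes $\mathcal{W}_\infty$ in~\eqref{eq:W_infinity} a \emph{proper} algebraic subset defined by constants-free-of-$\mbf{y}$ conditions being nonzero — wait, rather, the leading coefficients being nonzero constants means $\mathcal{W}_\infty = \emptyset$, which is the real payoff used downstream.
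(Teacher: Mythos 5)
Your proof is correct and takes essentially the same route as the paper's: both are a degree-accounting argument showing that any term of $p$ contributing to $\lc_{\mbf{x}}(p)$ has $\mbf{x}$-degree equal to $\deg_{\mbf{x}}(p)=\deg(p)$, hence $\mbf{y}$-degree zero (the paper phrases this in two lines without singling out the full $\succ$-leading monomial). Your closing remark that the lemma forces $\mathcal{W}_\infty=\emptyset$ is also exactly the payoff the paper exploits later when it omits $w_\infty$.
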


\begin{proof}
  Let $p \in \GB$, by definition of the ordering $\texttt{grevlex}(\mbf{x})
  \succ \texttt{grevlex}(\mbf{y})$, $\lc_{\mbf{x}}(p)$ is obtained from a term  $\tau$ in $p$
  s.t. $\deg_{\mbf{x}}(\tau) = \deg_{\mbf{x}}(p)$. By
  Assumption~\ref{assump:generic_degree_grobner_basis}, $\deg_{\mbf{x}}(p) = \deg(p)$, so
  $\deg_{\mbf{y}}(\tau) = 0$.
\end{proof}

\begin{lemma}
  \label{lem:degree_bound_normal_form}
   If Assumption~\ref{assump:generic_degree_grobner_basis} holds, then for any $q \in
  \BQ[\mbf{y}][\mbf{x}]$, the normal form $\bar{q}$ of $q$ w.r.t. $\GB$ lies in
  $\BQ[\mbf{y}][\mbf{x}]$ and $\deg(\bar{q}) \le \deg(q)$.
\end{lemma}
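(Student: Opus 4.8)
\textbf{Proof proposal for Lemma~\ref{lem:degree_bound_normal_form}.}

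The plan is to run the multivariate division algorithm of $q$ by $\GB$ (with $\BK = \BQ(\mbf{y})$ as base field and w.r.t. $\texttt{grevlex}(\mbf{x})$) and track two invariants along the way: that the intermediate polynomials stay in $\BQ[\mbf{y}][\mbf{x}]$ (no genuine denominators appear), and that their total degree never exceeds $\deg(q)$. Both invariants hinge on Lemmas~\ref{lem:leading_coefficient_grobner_basis} and on Assumption~\ref{assump:generic_degree_grobner_basis}. First I would set up the reduction step: if the current remainder $r$ has a term $\tau$ whose monomial $\lm_{\mbf{x}}(\tau)$ is divisible by $\lm_{\mbf{x}}(p)$ for some $p \in \GB$, one replaces $r$ by $r - \frac{\tau}{\lc_{\mbf{x}}(p) \lm_{\mbf{x}}(p)}\, p$. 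By Lemma~\ref{lem:leading_coefficient_grobner_basis}, $\lc_{\mbf{x}}(p) \in \BQ \setminus \{0\}$ (it has $\mbf{y}$-degree $0$, and it is a nonzero element of $\BQ[\mbf{y}]$, hence a nonzero rational number), so dividing by it introduces no denominator in $\mbf{y}$: the coefficient $\frac{\tau}{\lc_{\mbf{x}}(p)\lm_{\mbf{x}}(p)}$ is a polynomial in $\BQ[\mbf{y}][\mbf{x}]$ whenever $\tau$ is. Thus, starting from $q \in \BQ[\mbf{y}][\mbf{x}]$, every intermediate polynomial produced by the division stays in $\BQ[\mbf{y}][\mbf{x}]$, and in particular so does the final normal form $\bar q$.

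For the degree bound, I would argue that each elementary reduction is \emph{degree non-increasing}. Consider the polynomial $u \coloneqq \frac{\tau}{\lc_{\mbf{x}}(p)\lm_{\mbf{x}}(p)}\, p$ subtracted from $r$. Its total degree is $\deg(\tau) - \deg(\lm_{\mbf{x}}(p)) + \deg(p)$. Now $\deg(\lm_{\mbf{x}}(p)) = \deg_{\mbf{x}}(\lm_{\mbf{x}}(p)) = \deg_{\mbf{x}}(p)$, and by Assumption~\ref{assump:generic_degree_grobner_basis} we have $\deg_{\mbf{x}}(p) = \deg(p)$; hence $\deg(u) = \deg(\tau) \le \deg(r)$. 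Since both $r$ and $u$ have total degree at most $\deg(r)$, so does $r - u$. Because the division terminates and the total degree of the running remainder is non-increasing at every step, the final remainder $\bar q$ satisfies $\deg(\bar q) \le \deg(q)$.

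There is one subtlety I would address carefully rather than the routine bookkeeping. The term $\tau$ chosen for reduction need not be the $\texttt{grevlex}(\mbf{x})$-leading term of $r$ if we perform a full reduction to a normal form (reducing all reducible terms, not just the leading one), so the standard termination argument using only the leading monomial does not immediately apply; this is the main point to pin down. The clean fix is to invoke that $\GB$ is a Gr\"obner basis, so the normal form $\bar q$ is well-defined and is the unique $\BK[\mbf{x}]$-linear combination of the basis monomials $\Basis$ congruent to $q$; one reaches it, for instance, by iterated leading-term reduction, for which termination is classical, and the degree estimate above applies verbatim at each leading-term reduction step. Uniqueness of the normal form then guarantees the bound holds for $\bar q$ regardless of the order in which reductions were performed. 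With that in place, the proof is complete.
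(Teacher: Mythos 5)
Your proof is correct and follows essentially the same route as the paper's: reduce $q$ by $\GB$ step by step, use Lemma~\ref{lem:leading_coefficient_grobner_basis} to see that $\lc_{\mbf{x}}(p)$ is a nonzero rational so no denominators in $\mbf{y}$ appear, and use Assumption~\ref{assump:generic_degree_grobner_basis} to see that each elementary reduction does not increase the total degree. Your extra remark on termination and uniqueness of the normal form is a reasonable refinement of the same argument, not a different approach.
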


\begin{proof}
  Let $q \in \BQ[\mbf{y}][\mbf{x}]$, $\bar{q}$ is the remainder of successive
  divisions of $q$ by polynomials in $\GB$. As
  Assumption~\ref{assump:generic_degree_grobner_basis} holds, by
  Lemma~\ref{lem:leading_coefficient_grobner_basis}, those divisions do not introduce
  any denominator. So, every term appearing during these reductions are
  polynomials in $\BQ[\mbf{y}][\mbf{x}]$. By
  Assumption~\ref{assump:generic_degree_grobner_basis}, for any $p \in \GB$, the total
  degree of every term of $p$ is bounded by $\deg_{\mbf{x}}(p) =
  \deg(\lm_{\mbf{x}}(p))$ by Lemma~\ref{lem:leading_coefficient_grobner_basis}. Thus,
  a division of $q$ by $p$ involves only terms of total degree $\deg(q)$.
  Therefore, during the normal form reduction of $q$ by $\GB$, only terms of
  degree at most $\deg(q)$ will appear. Hence $\deg(\bar{q}) \le \deg(q)$.
\end{proof}

% \begin{lemma}
  % \label{lem:hermite_matrix_polynomial_entries}
  % If \cref{assump:generic_degree_grobner_basis} holds, then for all $g \in \BQ[\mbf{y}][\mbf{x}]$
  % the entries of the parametric Hermite matrix $\mathcal{H}_g$ associated to $(\mbf{f},g)$
  % in the basis $B$ are polynomials in $\mbf{y}$,
  % \ie $\mathcal{H}_g  \in \BQ[\mbf{y}]^{\delta \times  \delta}$.
% \end{lemma}

% \begin{proof}
%   The entry $(i,j)$ of $\mathcal{H}_g$ is defined as $\tr(L_{g b_ib_j})$ and
%   by \cref{lem:degree_bound_normal_form}
%   the reduction in $A_\BK$ of any polynomial in $\BQ[\mbf{y}][\mbf{x}]$ is in $\BQ[\mbf{y}][\mbf{x}]$.
%   So for all $1 \le k \le \delta$, the normal form of $g b_ib_jb_k$ is a linear combination of elements of $B$
%   with coefficients in $\BQ[\mbf{y}]$ and
%   the matrix of $L_{g b_ib_j}$ in the basis $B$ has its entries in $\BQ{[\mbf{y}]}$.
%   Hence $\tr(L_{g b_ib_j}) \in \BQ[\mbf{y}]$. 
% \end{proof}

\noindent We prove now degree bounds on the entries of $L_g$ and
$\mathcal{H}_g$.
% Now we can show that the entries of the matrix of $L_g$ and of
% $\mathcal{H}_g$ are polynomials in $\mbf{y}$ and we can bound their degrees.
\begin{lemma}
  \label{lem:degree_bound_multiplication_matrix}
  Under Assumption~\ref{assump:generic_degree_grobner_basis}, let $g \in \BQ[\mbf{y}][\mbf{x}]$ and
  let us denote by $(g_{i,j})_{1 \le i,j \le \delta}$ the matrix of $L_g$ in the basis $\Basis$.
  Then, for all $1 \le i,j \le \delta$, $g_{i,j} \in \BQ[\mbf{y}]$
  and $\deg(g_{i,j}) \le \deg(g) + \deg(b_j) - \deg(b_i)$.
\end{lemma}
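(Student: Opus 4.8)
The plan is to compute the $j$-th column of the matrix of $L_g$ directly from its definition and then read off the degree bound term by term. By definition, the entry $g_{i,j}$ is the coefficient of the basis monomial $b_i$ in the normal form of $g \cdot b_j$ w.r.t. $\GB$; that is, $\overline{g b_j} = \sum_{i=1}^\delta g_{i,j}\, b_i$. So first I would observe that since $g \in \BQ[\mbf{y}][\mbf{x}]$ and $b_j$ is a monomial in $\mbf{x}$, the product $g b_j$ lies in $\BQ[\mbf{y}][\mbf{x}]$, and then invoke Lemma~\ref{lem:degree_bound_normal_form}: under Assumption~\ref{assump:generic_degree_grobner_basis}, the normal form $\overline{g b_j}$ stays in $\BQ[\mbf{y}][\mbf{x}]$ (no denominators are introduced) and has total degree at most $\deg(g b_j) = \deg(g) + \deg(b_j)$. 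This immediately gives $g_{i,j} \in \BQ[\mbf{y}]$, settling the first half of the claim.

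For the degree bound, the key point is to track the $\mbf{x}$-degree and the $\mbf{y}$-degree of the normal form separately and combine them. Since $\overline{g b_j} = \sum_i g_{i,j} b_i$ with the $b_i$ distinct monomials in $\mbf{x}$ and $g_{i,j} \in \BQ[\mbf{y}]$, each term $g_{i,j} b_i$ is a sum of monomials of total degree $\deg_{\mbf{y}}(g_{i,j}) + \deg(b_i)$ that cannot cancel against terms coming from other indices $i' \ne i$ (different $\mbf{x}$-parts). Hence for every $i$ with $g_{i,j} \ne 0$ we get $\deg_{\mbf{y}}(g_{i,j}) + \deg(b_i) \le \deg(\overline{g b_j})$. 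Combining with the bound from Lemma~\ref{lem:degree_bound_normal_form}, namely $\deg(\overline{g b_j}) \le \deg(g) + \deg(b_j)$, yields $\deg(g_{i,j}) = \deg_{\mbf{y}}(g_{i,j}) \le \deg(g) + \deg(b_j) - \deg(b_i)$, as required.

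The one subtlety I would make sure to address carefully is the non-cancellation argument in the last step: one needs that in $\sum_i g_{i,j} b_i$, no monomial in the expansion of $g_{i,j} b_i$ coincides with a monomial in the expansion of $g_{i',j} b_{i'}$ for $i \ne i'$. This is immediate because $b_i$ and $b_{i'}$ are distinct monomials purely in the $\mbf{x}$-variables while the $g_{i,j}$ are polynomials purely in $\mbf{y}$, so the $\mbf{x}$-part of any monomial in $g_{i,j} b_i$ is exactly $b_i$, which identifies the index $i$. Therefore the maximal total degree among the terms $g_{i,j} b_i$ is genuinely $\deg(\overline{g b_j})$, and in particular each individual $\deg_{\mbf{y}}(g_{i,j}) + \deg(b_i)$ is at most this quantity. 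I expect the main (minor) obstacle to be simply stating this cleanly rather than any real difficulty — the degree arithmetic is routine once Lemma~\ref{lem:degree_bound_normal_form} is in hand.
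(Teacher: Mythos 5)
Your proof is correct and follows essentially the same route as the paper's: apply Lemma~\ref{lem:degree_bound_normal_form} to $gb_j$ to get $g_{i,j}\in\BQ[\mbf{y}]$ and $\deg(\overline{gb_j})\le\deg(g)+\deg(b_j)$, then use that the $b_i$ are distinct monomials in $\mbf{x}$ with coefficients in $\BQ[\mbf{y}]$ to conclude $\deg(g_{i,j})+\deg(b_i)\le\deg(\overline{gb_j})$. The non-cancellation point you flag is exactly the observation the paper makes ("because $g_{i,j}\in\BQ[\mbf{y}]$ and the $b_i$'s are distinct monomials in $\mbf{x}$").
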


\begin{proof}
  We have for all $1 \le j \le \delta $, $\bar{g b_j} = \sum_{i=1}^\delta
  g_{i,j} b_i$ and $g_{i,j} \in \BQ[\mbf{y}]$
  by Lemma~\ref{lem:degree_bound_normal_form}. Also, $\deg(\bar{gb_j}) = \max
  \deg(g_{i,j}b_i)$ because $g_{i,j} \in \BQ[\mbf{y}]$ and the $b_i$'s are
  distinct monomials in $\mbf{x}$. In particular, for all $1 \le i \le \delta$,
  $\deg(g_{i,j} b_i) = \deg(g_{i,j}) + \deg(b_i) \le \deg(\bar{gb_j}) \le
  \deg(gb_j)\leq \deg(g) + \deg(b_j) $ by Lemma~\ref{lem:degree_bound_normal_form}.
  The result follows.
\end{proof}

\begin{proposition}
  \label{prop:degree_bound_hermite_matrix}
  Under Assumption~\ref{assump:generic_degree_grobner_basis}, let $g \in \BQ[\mbf{y}][\mbf{x}]$ and
  let us denote by $(h_{i,j})_{1 \le i,j \le \delta}$
  the Hermite matrix $\mathcal{H}_g$ associated to $(\mbf{f},g)$ in the basis $\Basis$.
  Then,  for all $1 \le i,j \le \delta$, $h_{i,j} \in \BQ[\mbf{y}]$
  and \begin{align*} \deg(h_{i,j}) \le \deg(g) + \deg(b_i) + \deg(b_j). \end{align*}
  As a direct consequence, the degree of a minor of $\mathcal{H}_g$ defined by the rows
  $(r_1, \dots, r_k)$ and the columns $(c_1, \dots, c_k)$ is bounded by
  \begin{align*} k\deg(g) + \sum_{i=1}^k (\deg(b_{r_i}) + \deg(b_{c_i})). \end{align*}
\end{proposition}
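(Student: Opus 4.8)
The plan is to prove the entrywise bound first and then derive the minor bound by a direct expansion.

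\medskip

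\noindent\textbf{Entrywise bound.} By definition, $h_{i,j} = \tr(L_{g b_i b_j})$, so the natural strategy is to relate it to the multiplication matrix of $g b_i b_j$ in the basis $\Basis$. More precisely, $h_{i,j}$ is the sum of the diagonal entries of the matrix of $L_{g b_i b_j}$ in $\Basis$. Writing $q \coloneqq g b_i b_j \in \BQ[\mbf{y}][\mbf{x}]$, we have $\deg(q) = \deg(g) + \deg(b_i) + \deg(b_j)$ since the $b_i$'s are monomials in $\mbf{x}$. Now apply Lemma~\ref{lem:degree_bound_multiplication_matrix} to $q$: each diagonal entry $q_{k,k}$ of the matrix of $L_q$ lies in $\BQ[\mbf{y}]$ and satisfies $\deg(q_{k,k}) \le \deg(q) + \deg(b_k) - \deg(b_k) = \deg(q)$. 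Summing over $k$, the trace $h_{i,j}$ lies in $\BQ[\mbf{y}]$ and has degree at most $\deg(q) = \deg(g) + \deg(b_i) + \deg(b_j)$, as claimed. Alternatively one could use Proposition~\ref{prop:relation_H1_HQ} together with Lemma~\ref{lem:degree_bound_multiplication_matrix} applied separately to the matrices of $L_1$ (\ie~$\mathcal{H}_1$, using the case $g=1$ of the proposition itself, which would make the argument slightly circular) — so the cleaner route is the trace-of-$L_{gb_ib_j}$ computation above.

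\medskip

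\noindent\textbf{Minor bound.} Let $N$ be the submatrix of $\mathcal{H}_g$ with rows $(r_1, \dots, r_k)$ and columns $(c_1, \dots, c_k)$. By the Leibniz formula, $\det N = \sum_{\sigma \in \mathfrak{S}_k} \operatorname{sign}(\sigma) \prod_{\ell=1}^k h_{r_\ell, c_{\sigma(\ell)}}$. Each summand has degree at most $\sum_{\ell=1}^k \bigl( \deg(g) + \deg(b_{r_\ell}) + \deg(b_{c_{\sigma(\ell)}}) \bigr)$ by the entrywise bound. Since $\sigma$ is a permutation, $\sum_{\ell=1}^k \deg(b_{c_{\sigma(\ell)}}) = \sum_{\ell=1}^k \deg(b_{c_\ell})$, so every summand has degree at most $k\deg(g) + \sum_{\ell=1}^k(\deg(b_{r_\ell}) + \deg(b_{c_\ell}))$, independent of $\sigma$. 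Hence $\det N$, being a $\BQ[\mbf{y}]$-linear combination of such products (here the integer coefficients $\operatorname{sign}(\sigma) \in \{\pm1\}$ contribute no degree), satisfies the same bound.

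\medskip

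\noindent\textbf{Main obstacle.} There is no serious difficulty here; the only point requiring a little care is the reindexing in the permutation sum — one must observe that summing $\deg(b_{c_{\sigma(\ell)}})$ over $\ell$ is the same as summing $\deg(b_{c_\ell})$, so that the degree bound is uniform over all terms of the Leibniz expansion and hence survives taking the sum. The entrywise statement reduces cleanly to Lemma~\ref{lem:degree_bound_multiplication_matrix} once one writes the trace as a sum of diagonal entries of a single multiplication matrix, so the real content was already established in the preceding lemmas.
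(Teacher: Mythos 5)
Your proof is correct and follows essentially the same route as the paper: the entrywise bound is obtained exactly as in the paper by writing $h_{i,j}$ as the trace of the multiplication matrix of $g b_i b_j$ and applying Lemma~\ref{lem:degree_bound_multiplication_matrix} to its diagonal entries, and the minor bound via the Leibniz expansion is just a spelled-out version of what the paper dismisses as "clear by expanding the expression for determinants."
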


\noindent Hence, the determinant of $\mathcal{H}_g$ has degree bounded by
$\delta \deg(g) + 2\sum_{i=1}^{\delta} \deg(b_i)$ and $\delta \le d^n$.
\begin{proof}
  We have $h_{i,j} = \tr(L_{gb_i b_j})$. Let $C = (c_{k,\ell})_{1 \le k,\ell \le
  \delta }$ denote the entries of the matrix of $L_{gb_i b_j}$ w.r.t. $\Basis$. By
  Lemma~\ref{lem:degree_bound_multiplication_matrix}, as $gb_i b_j \in
  \BQ[\mbf{y}][\mbf{x}]$, $C \in \BQ[\mbf{y}]^{\delta \times \delta}$. So,
  $h_{i,j} = \tr(C) = \sum_{k=1}^\delta c_{k,k}\in \BQ[\mbf{y}]$. By
  Lemma~\ref{lem:degree_bound_multiplication_matrix}, $\deg(h_{i,j}) \le \max_k
  \deg(c_{k,k}) \le \max_k \deg(gb_ib_j)$. Therefore, $\deg(h_{i,j}) \le \deg(g)
  + \deg(b_i) + \deg(b_j)$.

  The degree bound for the minors of $\mathcal{H}_g$ is clear by expanding the
  expression for determinants.
\end{proof}

% Next if we also suppose that $\mbf{f}$ is an affine regular sequence, we know explicit degree bounds
% for the elements of the basis $b_i$ that we can apply to bound the degrees of any minor of $\mathcal{H}_g$.

% \begin{lemma}[{\cite[Lem.~23]{Le2022RealRootClassification}}]
%   \label{lem:degree_bound_basis_regular_sequence}
%   If $\mbf{f}$ is an affine regular sequence, then the highest degree among the elements of $B$ is bounded by
%   $n(d-1)$ and
%   \begin{align*}
%     2 \sum_{i=1}^\delta \deg(b_i) \le n(d-1)d^n.
%   \end{align*}
% \end{lemma}

% \noindent Combining the bounds of \cref{lem:degree_bound_basis_regular_sequence} and
% \cref{prop:degree_bound_hermite_matrix}, we obtain the following result.

\begin{corollary}
  \label{cor:degree_minor_hermit_matrix_regular}
  Assume that $\mbf{f}$ is an affine regular sequence satisfying Assumption~\ref{assump:generic_degree_grobner_basis}.
  For $g \in \BQ[\mbf{y}][\mbf{x}]$,
  the degree of any minor of $\mathcal{H}_g$ is bounded by $(\deg(g) + n(d-1))d^n$.
\end{corollary}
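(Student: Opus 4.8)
The plan is to combine the per-minor degree bound of Proposition~\ref{prop:degree_bound_hermite_matrix} with a Hilbert-series computation that exploits the complete-intersection structure forced by affine regularity. As a first step I would record that, for any $k$ and any choice of rows $(r_1,\dots,r_k)$ and columns $(c_1,\dots,c_k)$, the corresponding $k\times k$ minor of $\mathcal{H}_g$ has degree at most $k\deg(g)+\sum_{i=1}^k(\deg(b_{r_i})+\deg(b_{c_i}))$ by Proposition~\ref{prop:degree_bound_hermite_matrix}; since $\deg(g)\ge 0$, $k\le\delta$, and any selection of $k$ distinct basis monomials has total degree at most $\sum_{b\in\Basis}\deg(b)$, every minor of $\mathcal{H}_g$ has degree at most $\delta\deg(g)+2\sum_{b\in\Basis}\deg(b)$, which is exactly the bound already noted for the determinant right after Proposition~\ref{prop:degree_bound_hermite_matrix}. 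It then remains to bound $\sum_{b\in\Basis}\deg(b)$ and to invoke $\delta\le d^n$.

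Second, I would identify the degree distribution of $\Basis$. Because $\mbf{f}$ is an affine regular sequence, the ideal generated by the highest total $\mbf{x}$-degree homogeneous components of the elements of $\langle\mbf{f}\rangle_\BK$ equals $\langle f_1^H,\dots,f_n^H\rangle$ (here $e=n$, as already observed from Assumption~\ref{assump:finite_nb_sol_generically}), so the associated graded ring of $\AK$ for the filtration by total $\mbf{x}$-degree is the graded complete intersection $\BK[\mbf{x}]/\langle f_1^H,\dots,f_n^H\rangle$, whose generators have degrees $d_1,\dots,d_n\le d$. Since $\texttt{grevlex}(\mbf{x})$ is compatible with the total $\mbf{x}$-degree, reduction to normal form does not increase degree, hence the standard monomials of degree $\le j$ form a $\BK$-basis of the image of the degree-$\le j$ polynomials in $\AK$; therefore the number of $b\in\Basis$ of degree $j$ equals $\dim_\BK\bigl(\BK[\mbf{x}]/\langle f_1^H,\dots,f_n^H\rangle\bigr)_j$, and so $\sum_{b\in\Basis}t^{\deg b}=\prod_{i=1}^n(1+t+\dots+t^{d_i-1})$.

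Third, I would read off $\sum_{b\in\Basis}\deg(b)$ from this identity. Differentiating $\prod_{i=1}^n(1+t+\dots+t^{d_i-1})$ at $t=1$ gives $\sum_{b\in\Basis}\deg(b)=\frac{1}{2}\bigl(\prod_{i=1}^n d_i\bigr)\sum_{i=1}^n(d_i-1)=\frac{\delta}{2}\sum_{i=1}^n(d_i-1)$; equivalently, a graded Artinian complete intersection is Gorenstein, so its Hilbert function is symmetric about half of its socle degree $D=\sum_{i=1}^n(d_i-1)$, which again yields $\sum_{b\in\Basis}\deg(b)=\frac{D\delta}{2}$. In either case $2\sum_{b\in\Basis}\deg(b)=\delta\sum_{i=1}^n(d_i-1)\le\delta\,n(d-1)$, whence any minor of $\mathcal{H}_g$ has degree at most $\delta\deg(g)+\delta\,n(d-1)=\delta\bigl(\deg(g)+n(d-1)\bigr)\le\bigl(\deg(g)+n(d-1)\bigr)d^n$ using $\delta\le d^n$, which is the claim.

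I expect the main obstacle to be the identification in the second step: justifying rigorously that the degree distribution of the monomial basis $\Basis$ coincides with the Hilbert function of the homogeneous complete intersection $\BK[\mbf{x}]/\langle f_1^H,\dots,f_n^H\rangle$. This rests on the classical, but not entirely formal, fact that affine regularity makes the degree-initial ideal of $\langle\mbf{f}\rangle_\BK$ equal to $\langle f_1^H,\dots,f_n^H\rangle$ (equivalently, that there are no solutions at infinity, so that Bézout's bound $\delta\le\prod_i d_i$ is attained by $\BK[\mbf{x}]/\langle f_1^H,\dots,f_n^H\rangle$ via $\delta=\prod_i d_i$), together with the observation that $\texttt{grevlex}(\mbf{x})$ being a degree order transfers this graded statement to the actual standard-monomial basis. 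Once that is in place, the remaining computation with the complete-intersection Hilbert series is immediate.
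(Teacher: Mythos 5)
Your proof is correct and follows essentially the same route as the paper: reduce to the minor bound of Proposition~\ref{prop:degree_bound_hermite_matrix} together with a bound on $2\sum_{b\in\Basis}\deg(b)$ and on $\delta$. The only difference is that the paper obtains $\max_{b\in\Basis}\deg(b)\le n(d-1)$ and $2\sum_{b\in\Basis}\deg(b)\le n(d-1)d^n$ by citing Lemma~23 of \cite{Le2022RealRootClassification}, whereas you re-derive exactly these facts from the Hilbert series $\prod_{i=1}^n(1+t+\dots+t^{d_i-1})$ of the complete intersection $\BK[\mbf{x}]/\langle f_1^H,\dots,f_n^H\rangle$, which is precisely the content of that cited lemma.
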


\begin{proof}
  Since $\mbf{f}$ is an affine regular sequence, one can apply
  \cite[Lem.~23]{Le2022RealRootClassification}. Hence, the highest degree among
  the elements of $\Basis$ is bounded by $n(d-1)$ and it holds that $2
  \sum_{i=1}^\delta \deg(b_i) \le n(d-1)d^n$. Substituting these degree bounds
  on the $b_i$'s in the ones of Proposition~\ref{prop:degree_bound_hermite_matrix} ends the
  proof.
\end{proof}

\section{Algorithm}\label{sec:algo}

%\subsection{Realizable sign conditions on a real algebraic set}
%\label{sec:real_sign_cond_alg_set}

Let $\mathcal{Q} = (Q_1, \dots, Q_s)\subset \BQ[\mbf{X}]$ for $\mbf{X} =
(X_1,\dots, X_k)$. Let $\mathcal{Z}$ be a subset of $\BR^k$. We say that an
element $\sigma \in \{0,1,-1\}^s$ is a sign condition for $\mathcal{Q}$. A sign
condition $\sigma$ for $\mathcal{Q}$ is said to be realizable over $\mathcal{Z}$
if the following set is nonempty
\begin{align*}
  \Reali(\sigma, \mathcal{Z}) \coloneqq \{ x \in  \mathcal{Z} \mid
  \bigwedge_{i =1}^s \sign(Q_i(x)) = \sigma(i) \}.
\end{align*}
We consider the set $\SIGN(\mathcal{Q},\mathcal{Z}) \subseteq \{0,1,-1\}^s$ of
realizable sign conditions for $\mathcal{Q}$ over $\mathcal{Z}$.

%Let $\mathcal{P}= (P_1, \dots, P_\ell)$ and 
%be two finite families of 
%Typically, one can think to $\mbf{f}$ and $\mbf{g}$ as an instance of \cref{pb:solution_classification}.
%Let $\mathcal{Z} \subseteq \BR^{t+n}$ be the real algebraic set defined by $\mathcal{P} = 0$.
%

Let $\mbf{f}$ and $\mbf{g}$ be an instance of Problem~\ref{pb:solution_classification}.
We are interested in the set of sign conditions $\SIGN(\mbf{g},
\mathcal{V}_\BR)$. We describe an algorithm for the determination of the
realizable sign conditions for $\mbf{g}$ over the real solutions of $\mbf{f} = 0$
when $\mbf{f}$ satisfies Assumption~\ref{assump:finite_nb_sol_generically}. This algorithm
does not recover the whole set $\SIGN(\mbf{g}, \mathcal{V}_\BR)$, but only the
set of realizable conditions for $\mbf{g}$ over $\mathcal{V}_\BR \cap
\mathcal{W}$ where $ \mathcal{W}$ is a nonempty Zariski open subset of
$\BC^{t+n}$. It means that we potentially miss some elements of $\SIGN(\mbf{g},
\mathcal{V}_\BR)$ but they can only occur for $(\mbf{y}, \mbf{x})$ lying in a
Zariski closed set. Also, as a by-product our algorithm enables us to compute a
valid solution to Problem~\ref{pb:solution_classification} for $(\mbf{f}, \mbf{g})$.

This algorithm is a variant of \cite[Chap.~10]{roy2006algo_real_alg_geom} for
determining the sign conditions realized by a family of polynomials on a finite
set of points in $\BR^k$ using Tarski-queries. Tarski-queries are expressed as
signatures of parametric Hermite matrices as
in Corrolary~\ref{cor:specialized_hermite_matrix_tarski_queries}. We also use sample
points algorithms as
in~\cite{Le2022RealRootClassification,SafeySchost2003OnePointPerConnectedComponent}.
In the case where $\mbf{g}$ is empty, this algorithm coincides with the one
in~\cite{Le2022RealRootClassification}.

\subsection{Sign determination on a finite set of points}

We are given a family of polynomials $\mathcal{Q} = (Q_1(\mbf{X}), \dots,
Q_s(\mbf{X})) \subset \BQ[\mbf{X}]$ with $\mbf{X} = (X_1, \dots, X_k)$ and an
\emph{implicit} finite set of points $Z$ in $\BR^k$ of size $\rho$. By implicit we
mean that we have no explicit description for the points in $Z$. Typically, the
set $Z$ designates the roots of the system $\mbf{f}(\eta,\cdot) = 0$ with $\eta$
in the space of parameters and $\mathcal{Q}$ is the family
$\mbf{g}(\eta,\cdot)$. For now we assume that we have access to a black-box for
computing the Tarski-queries $\TQ(Q, Z)$ for any $Q \in \BQ[\mbf{X}]$.
%For this black-box, one can think to an algorithm that computes the signature of an evaluation of a parametric
%Hermite matrix as in~\cref{cor:specialized_hermite_matrix_tarski_queries}.
We aim at computing $ \SIGN(\mathcal{Q}, Z)$.

For a sign condition $\sigma \in \{0,1,-1\}^s$ for $\mathcal{Q}$, define
$c(\sigma, Z) \coloneqq \sharp \Reali(\sigma, Z)$ and, for $\alpha \in \{0,1,2\}^s$,
denote $ \mathcal{Q}^\alpha \coloneqq \prod_{i=1}^s Q_i^{\alpha(i)}$,
and % \quad
$ \sigma^\alpha \coloneqq \prod_{i=1}^s \sigma(i)^{\alpha(i)}$. One can notice
that on $\Reali(\sigma,Z)$, the sign of $\mathcal{Q}^{\alpha}$ is fixed and
equal to $ \sigma^\alpha$.

We also order $\{0,1,-1\}^s$ with the lexicographic order induced by $0 < 1 <
-1$ and $\{0,1,2\}^s$ with the lexicographic order induced by $0 < 1 < 2$. Let
$\Sigma = \{ \sigma_1, \dots , \sigma_{\nu} \} \subset \{0,1,-1\}^s$ with $\sigma_1
<_\lex \dots <_\lex \sigma_\nu$, we denote by $c(\Sigma,Z)$ the column vector
$(c(\sigma_1,Z), \dots,c(\sigma_\nu,Z))^t$. Similarly, let $A = \{\alpha_1, \dots,
\alpha_\gamma\} \subset \{0,1,2\}^s$ with $\alpha_1 <_\lex \dots <_\lex \alpha_\gamma$, we
denote by $\TQ(\mathcal{Q}^A, Z)$ the column vector $
\left(\TQ(\mathcal{Q}^{\alpha_1},Z), \dots,
  \TQ(\mathcal{Q}^{\alpha_\gamma},Z)\right)^t$.

% \begin{definition} [Matrix of signs]
%   \label{def:matrix_of_signs}
  We define the matrix of signs of $A$ on $\Sigma$ as the $\gamma \times \nu$ matrix
  $\Mat(A,\Sigma)$ whose entry $(i,j)$ is $\sigma_j^{\alpha_i}$.
%\end{definition}

\begin{proposition}[{\cite[Prop. 10.59]{roy2006algo_real_alg_geom}}]
  \label{prop:sign_determination_general_system}
  If $\SIGN(\mathcal{Q},Z) \subseteq \Sigma$, then it holds that
  %\begin{equation*}
    $\Mat(A, \Sigma) \cdot c(\Sigma, Z)= \TQ(\mathcal{Q}^A, Z)$. 
  %\end{equation*}
\end{proposition}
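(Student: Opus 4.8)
The plan is to establish the matrix identity $\Mat(A,\Sigma)\cdot c(\Sigma,Z)=\TQ(\mathcal{Q}^A,Z)$ entry by entry, by expanding the Tarski-query on the right-hand side as a sum over the points of $Z$ and then regrouping those points according to which sign condition they realize. First I would fix an index $i$ with $1\le i\le\gamma$, so that the $i$-th coordinate of the right-hand side is $\TQ(\mathcal{Q}^{\alpha_i},Z)=\sum_{x\in Z}\sign(\mathcal{Q}^{\alpha_i}(x))$, using Definition~\ref{def:tarski_query}. Since $Z$ is finite, every $x\in Z$ realizes exactly one sign condition $\sigma\in\{0,1,-1\}^s$ for $\mathcal{Q}$, namely $\sigma=(\sign(Q_1(x)),\dots,\sign(Q_s(x)))$; by the hypothesis $\SIGN(\mathcal{Q},Z)\subseteq\Sigma$, this $\sigma$ lies in $\Sigma=\{\sigma_1,\dots,\sigma_\nu\}$. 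Hence I can partition $Z=\bigsqcup_{j=1}^{\nu}\Reali(\sigma_j,Z)$ and split the sum accordingly.

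The key computational point is the observation already recorded in the excerpt: on $\Reali(\sigma_j,Z)$ the polynomial $\mathcal{Q}^{\alpha_i}=\prod_{k=1}^s Q_k^{\alpha_i(k)}$ has constant sign equal to $\sigma_j^{\alpha_i}=\prod_{k=1}^s\sigma_j(k)^{\alpha_i(k)}$. (One should note the harmless convention that when $\sigma_j(k)=0$ and $\alpha_i(k)=0$ the factor is $1$, and that if $\sigma_j(k)=0$ with $\alpha_i(k)\ge 1$ then both $\sigma_j^{\alpha_i}=0$ and $\mathcal{Q}^{\alpha_i}$ vanishes identically on $\Reali(\sigma_j,Z)$, so the asserted equality of signs still holds.) Therefore
\begin{align*}
\TQ(\mathcal{Q}^{\alpha_i},Z)
&=\sum_{j=1}^{\nu}\ \sum_{x\in\Reali(\sigma_j,Z)}\sign(\mathcal{Q}^{\alpha_i}(x))
=\sum_{j=1}^{\nu}\sigma_j^{\alpha_i}\,\sharp\Reali(\sigma_j,Z)\\
&=\sum_{j=1}^{\nu}\sigma_j^{\alpha_i}\,c(\sigma_j,Z)
=\sum_{j=1}^{\nu}\Mat(A,\Sigma)_{i,j}\,c(\sigma_j,Z)
=\bigl(\Mat(A,\Sigma)\cdot c(\Sigma,Z)\bigr)_i,
\end{align*}
where the third equality is the definition of $c(\sigma_j,Z)$ and the fourth is the definition of $\Mat(A,\Sigma)$. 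Since $i$ was arbitrary, the two column vectors agree in every coordinate, which is the claimed identity.

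The only genuinely delicate point, and the one I would treat most carefully, is the sign-constancy claim $\sign(\mathcal{Q}^{\alpha_i}(x))=\sigma_j^{\alpha_i}$ for $x\in\Reali(\sigma_j,Z)$, together with its degenerate case when some coordinate of $\sigma_j$ is zero; everything else is bookkeeping (finiteness of $Z$, the partition into realizability classes, and linearity of the sum). I would phrase the sign-constancy as a short lemma about the multiplicativity of $\sign$, i.e. $\sign(ab)=\sign(a)\sign(b)$ in an ordered field, extended to products $\prod Q_k^{\alpha_i(k)}$ by induction on $s$, with the convention $0^0=1$ handled explicitly. With that in hand the proposition follows immediately, and I would simply cite \cite[Prop.~10.59]{roy2006algo_real_alg_geom} for the statement and give the above as the (essentially one-line) argument.
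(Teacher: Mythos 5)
Your proof is correct and is exactly the standard argument behind \cite[Prop.~10.59]{roy2006algo_real_alg_geom}, which the paper cites without proof: partition $Z$ into the realization sets of the conditions in $\Sigma$ (legitimate by the hypothesis $\SIGN(\mathcal{Q},Z)\subseteq\Sigma$) and use the sign-constancy of $\mathcal{Q}^{\alpha}$ on each $\Reali(\sigma,Z)$, an observation the paper itself records just before stating the proposition. Your careful handling of the degenerate case $\sigma_j(k)=0$ is the right point to be explicit about, and nothing is missing.
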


% \textcolor{magenta}{Proof commented out.}
% \begin{proof}
%   Since $\SIGN(\mathcal{Q},Z) \subseteq \Sigma$,
%   we can write $Z$ as the disjoint union $\bigcup_{k=1}^p \Reali(\sigma_k,Z) = Z$.
%   Thus, for $1  \le i \le m$, we have
%   \begin{align*}
%     \TQ(\mathcal{Q}^{\alpha_i}, Z)
%     &= \sum_{x \in Z} \sign(\mathcal{Q}^{\alpha_i}(x))
%       = \sum_{k=1}^p \sum_{x \in \Reali(\sigma_k, Z)}
%       \underbrace{\sign(\mathcal{Q}^{\alpha_i}(x))}_{= \sigma^{\alpha_i}_k}
%       = \sum_{k=1}^p \sigma_k^{\alpha_i}c(\sigma_k,Z),
%   \end{align*}
%   whence the result.
% \end{proof}

Hence, when $\Mat(A,\Sigma)$ is invertible, one can determine $c(\Sigma,Z)$
and thus $\SIGN(\mathcal{Q},Z)=\{\sigma \in \Sigma \mid c(\sigma,Z) > 0\}$ from
$\TQ(\mathcal{Q}^A,Z)$ by linear system solving. In this case, we say that the
set $A$ is adapted to $\Sigma$ for sign determination. If one chooses $\Sigma =
\{0,1,-1\}^s$ to be the whole set of possible sign conditions for $\mathcal{Q}$,
the set $A = \{0,1,2\}^s$
is adapted to $\Sigma$~\cite[Prop. 10.60]{roy2006algo_real_alg_geom}.
However, we need to compute $3^s$ Tarski-queries to perform sign determination.
Yet the number of realizable
sign conditions is bounded by the number $\rho$ of elements in $Z$ and often $\rho \ll
3^s$. So when $\Sigma = \{0,1, -1\}^s$, many entries in $c(\Sigma,Z)$ are equal
to 0. To avoid to compute an exponential number of Tarski-queries, we want to
avoid unrealizable sign conditions. To do so, we use the incremental approach of
\cite[Sec.~10.3]{roy2006algo_real_alg_geom}. Let $\mathcal{Q}_i \coloneqq
(Q_{s-i + 1}, \dots, Q_s )$ be the last $i$ polynomials in $\mathcal{Q}$. At
step $i$, we compute $\SIGN(\mathcal{Q}_i,Z)$ the realizable sign conditions for
$\mathcal{Q}_i$, for $i$ from 1 to $s$, so that we get rid of the empty sign
conditions at each step of the computation.

First for any $\Sigma \subseteq \{0,1,-1\}^s$, we explain how to construct a set
$A \subseteq \{0,1,2\}^s$ that is adapted to $\Sigma$. For $\sigma =(\sigma(1),
\dots, \sigma(s))\in \{0,1,-1\}^s$, we denote by $\sigma'$ the vector obtained
by removing the first coordinate of $\sigma$, \ie $\sigma' \coloneqq (\sigma(2),
\dots, \sigma(s))\in \{0,1,-1\}^{s-1}$.
%For $\tau \in \{0,1,-1\}$, we also
%denote $\tau \land \sigma \coloneqq (\tau, \sigma(1), \dots, \sigma(s))$.
\begin{definition}
  For $\Sigma \subseteq \{0,1,-1\}^s$, we define
   $ \Sigma_1' \coloneqq \{\sigma' \mid \sigma \in \Sigma \}$,
   and the subsets $\Sigma'_2, \Sigma'_3 \subseteq \Sigma'_1$ such that
   $\Sigma'_2$ (resp. $\Sigma'_3$) contains the elements of $\Sigma'_1$
   that can be extended to an element of
  $\Sigma$ in at least two different ways (resp. exactly three different ways).
\end{definition}

\begin{definition}
  \label{def:adapted_family}
  Let $\Sigma \subseteq \{0,1,-1\}^s$,
  we define $\Ada(\Sigma) \subseteq \{0,1,2\}^s$ by induction on $s \ge 1$ as follows:
  \begin{itemize}[noitemsep,label=$-$]
  \item if $s= 1$, let $h \in \{1,2,3\}$ be the size of $\Sigma$, and set $\Ada(\Sigma) = \{0, \dots, h-1\}$;
  \item if $s > 1$, $\Ada(\Sigma) = 0 \times \Ada(\Sigma_1') \cup  1 \times \Ada(\Sigma_2') \cup
    2 \times \Ada(\Sigma_3')  $.
  \end{itemize}
\end{definition}

\begin{proposition}[{\cite[Prop. 10.65]{roy2006algo_real_alg_geom}}]
  \label{prop:adapted_family}
  The set $\Ada(\Sigma)$ is adapted to $\Sigma$ for sign determination.
\end{proposition}

Now suppose that for $1 \le i < s$, we have built $ \SIGN(\mathcal{Q}_i,Z)$ and $\Ada( \SIGN(\mathcal{Q}_i,Z))$,
we explain how we can compute $\SIGN(\mathcal{Q}_{i+1},Z)$ and $\Ada(\SIGN(\mathcal{Q}_{i+1},Z))$.
It is based on the two following lemmas.

\begin{lemma}
  \label{lem:matrix_of_signs_cartesian_product}
  Let $s_1, s_2 \ge 0$, and $A_1 \subseteq \{0,1,2\}^{s_1}$, $A_2 \subseteq \{0,1,2\}^{s_2}$,
  $\Sigma_1 \subseteq \{0,1,-1\}^{s_1}$, $\Sigma_2 \subseteq \{0,1,-1\}^{s_2}$.
  The matrix of signs of $A_1 \times A_2$ on $\Sigma_1 \times \Sigma_2$ is
  %\begin{equation*}
  $  \Mat(A_1 \times A_2, \Sigma_1 \times \Sigma_2) =
    \Mat(A_1, \Sigma_1 ) \otimes \Mat( A_2,\Sigma_2)$.
  %\end{equation*}
  As a consequence, if $A_1$ is adapted to $\Sigma_1$ and $A_2$ is adapted to $\Sigma_2$ then $A_1 \times A_2$
  is adapted to $\Sigma_1 \times \Sigma_2$.
\end{lemma}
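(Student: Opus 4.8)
The statement to prove is Lemma~\ref{lem:matrix_of_signs_cartesian_product}: that $\Mat(A_1 \times A_2, \Sigma_1 \times \Sigma_2) = \Mat(A_1, \Sigma_1) \otimes \Mat(A_2, \Sigma_2)$, with the consequence about adaptedness. The plan is to argue directly from the definitions by chasing indices through the Kronecker product.

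First I would set up notation for the ordering. Recall that $\{0,1,2\}^{s_1+s_2}$ and $\{0,1,-1\}^{s_1+s_2}$ are ordered lexicographically, and that under the identification $\{0,1,2\}^{s_1+s_2} = \{0,1,2\}^{s_1} \times \{0,1,2\}^{s_2}$ the lexicographic order is exactly the lexicographic product order: $(\alpha_1, \alpha_2) < (\alpha_1', \alpha_2')$ iff $\alpha_1 <_\lex \alpha_1'$, or $\alpha_1 = \alpha_1'$ and $\alpha_2 <_\lex \alpha_2'$. This is precisely the ordering convention that makes the Kronecker product index bookkeeping work: if $A_1 = \{\alpha_1^{(1)} <_\lex \dots\}$ has $\gamma_1$ elements and $A_2$ has $\gamma_2$ elements, then the $k$-th element of $A_1 \times A_2$ (in lex order) is $(\alpha_1^{(a)}, \alpha_2^{(b)})$ where $k = (a-1)\gamma_2 + b$, and similarly for $\Sigma_1 \times \Sigma_2$. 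I would state this compatibility explicitly (it is routine, so I would not belabor it) since the whole lemma hinges on it.

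The core computation is then a one-line identity on entries. By definition, the $(k,\ell)$-entry of $\Mat(A_1 \times A_2, \Sigma_1 \times \Sigma_2)$ is $(\sigma_1, \sigma_2)^{(\alpha_1,\alpha_2)} = \prod_{i=1}^{s_1+s_2} (\text{the } i\text{-th coordinate of }(\sigma_1,\sigma_2))^{(\text{the }i\text{-th coordinate of }(\alpha_1,\alpha_2))}$. Splitting this product at $i = s_1$ factors it as $\sigma_1^{\alpha_1} \cdot \sigma_2^{\alpha_2}$, i.e.\ the product of the corresponding entry of $\Mat(A_1,\Sigma_1)$ with that of $\Mat(A_2,\Sigma_2)$. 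Combined with the index translation from the previous paragraph, this is exactly the definition of the $(k,\ell)$-entry of the Kronecker product $\Mat(A_1,\Sigma_1) \otimes \Mat(A_2,\Sigma_2)$, proving the matrix identity.

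For the consequence, suppose $A_1$ is adapted to $\Sigma_1$ and $A_2$ to $\Sigma_2$; by definition of adaptedness this means $\Mat(A_1,\Sigma_1)$ and $\Mat(A_2,\Sigma_2)$ are both square and invertible (in particular $\#A_i = \#\Sigma_i$). Then $\#(A_1 \times A_2) = \#\Sigma_1 \cdot \#\Sigma_2 = \#(\Sigma_1 \times \Sigma_2)$, so $\Mat(A_1 \times A_2, \Sigma_1 \times \Sigma_2)$ is square, and by the matrix identity it equals a Kronecker product of two invertible matrices, hence is invertible with $\det = (\det \Mat(A_1,\Sigma_1))^{\#\Sigma_2} (\det \Mat(A_2,\Sigma_2))^{\#\Sigma_1} \neq 0$. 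Therefore $A_1 \times A_2$ is adapted to $\Sigma_1 \times \Sigma_2$. I expect no real obstacle here; the only point requiring care is making sure the lexicographic ordering on the product set genuinely matches the Kronecker-product indexing (rows indexed by $A_1$ outermost, $A_2$ innermost, and likewise columns), since a mismatched convention would only give equality up to a permutation of rows and columns — still enough for the adaptedness corollary, but not for the clean identity as stated.
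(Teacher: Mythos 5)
Your proof is correct and follows essentially the same route as the paper's: both rest on the entrywise identity $\sigma^\alpha = \sigma_1^{\alpha_1}\cdot\sigma_2^{\alpha_2}$ together with the observation that the lexicographic orderings on the product sets match the Kronecker-product indexing, and the adaptedness consequence then follows from invertibility of a Kronecker product of invertible matrices. Your version simply spells out the index bookkeeping more explicitly than the paper does.
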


\begin{proof}
  Let $\alpha =(\alpha_1, \alpha_2) \in A_1 \times A_2$ and $\sigma =(\sigma_1,
  \sigma_2) \in \Sigma_1 \times \Sigma_2$. By definition, we have $\sigma^\alpha
  = \sigma_1^{\alpha_1} \cdot \sigma_2^{\alpha_2}$. Since rows and columns of
  matrices of signs are ordered with the lexicographic orderings induced by $0 <
  1 < 2$ for the rows and $0 < 1 < -1$ for the columns, the result holds.
\end{proof}

\begin{lemma}[{\cite[Lem. 10.66]{roy2006algo_real_alg_geom}}]
  \label{lem:adapted_family_subset}
  Let $ \Sigma \subseteq \Gamma \subseteq \{0,1,-1\}^s$ and $\nu = \sharp \Sigma$.
  The matrix $\Mat(\Ada(\Sigma),\Sigma)$ is the matrix obtained by extracting the
  first $\nu$ linearly independent rows of $\Mat(\Ada(\Gamma),\Sigma)$, \ie the rows
  corresponding to the row rank profile of $\Mat(\Ada(\Gamma),\Sigma)$.
\end{lemma}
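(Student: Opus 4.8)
The plan is induction on $s$, following the recursive structure of $\Ada$. As a preliminary, note that $\Ada(\Sigma)$ may be regarded as a subset of $\Ada(\Gamma)$, compatibly with the lexicographic orders: from $\Sigma\subseteq\Gamma$ one gets $\Sigma_1'\subseteq\Gamma_1'$, $\Sigma_2'\subseteq\Gamma_2'$, $\Sigma_3'\subseteq\Gamma_3'$, and then a straightforward induction gives $\Ada(\Sigma)\subseteq\Ada(\Gamma)$, the three sub-blocks of the recursive definition occurring consecutively in lex order in both. Two facts frame the argument. First, $\Mat(\Ada(\Gamma),\Sigma)$ is the submatrix of $\Mat(\Ada(\Gamma),\Gamma)$ obtained by keeping the columns indexed by $\Sigma$, and $\Mat(\Ada(\Gamma),\Gamma)$ is invertible by Proposition~\ref{prop:adapted_family}; hence $\Mat(\Ada(\Gamma),\Sigma)$ has full column rank $\nu$, so its row rank profile has exactly $\nu$ rows. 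Second, $\sharp\Ada(\Sigma)=\nu$, since $\Mat(\Ada(\Sigma),\Sigma)$ is square and invertible. So it suffices to show that the $\nu$ rows selected by the row rank profile of $\Mat(\Ada(\Gamma),\Sigma)$ are exactly those indexed by $\Ada(\Sigma)\subseteq\Ada(\Gamma)$; the extracted submatrix is then automatically $\Mat(\Ada(\Sigma),\Sigma)$.

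Base case $s=1$: here $\Mat(\{0,1,2\},\{0,1,-1\})$ is precisely the $3\times 3$ matrix of~\eqref{eq:sign_determination_system_1_poly}, $\Ada(\Sigma)=\{0,\dots,\nu-1\}$ with $\nu=\sharp\Sigma$, and $\Ada(\Gamma)=\{0,\dots,\sharp\Gamma-1\}$. Processing the rows of $\Mat(\Ada(\Gamma),\Sigma)$ in order, for each $j\le\nu$ the prefix on rows $\{0,\dots,j-1\}$ contains the submatrix $\Mat(\{0,\dots,j-1\},\Sigma')=\Mat(\Ada(\Sigma'),\Sigma')$ for any $j$-element $\Sigma'\subseteq\Sigma$, which is invertible by Proposition~\ref{prop:adapted_family}; hence that prefix has rank $j$. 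So the row rank profile is $\{0,\dots,\nu-1\}$, matching $\Ada(\Sigma)$, and the extracted matrix is $\Mat(\{0,\dots,\nu-1\},\Sigma)=\Mat(\Ada(\Sigma),\Sigma)$.

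Inductive step: partition the rows of $\Mat(\Ada(\Gamma),\Sigma)$ into three blocks according to the first coordinate $a\in\{0,1,2\}$ of the multi-exponent (these blocks are consecutive in lex order), and its columns into three groups according to the first coordinate of $\sigma$ (columns may be permuted freely, since the row rank profile is invariant under column operations). Using $\sigma^{\alpha}=\sigma(1)^{\alpha(1)}(\sigma')^{\alpha'}$ with the convention $0^0=1$: the columns with $\sigma(1)=0$ are supported on row block $a=0$, and that row block equals $\Mat(\Ada(\Gamma_1'),\Sigma_1')$ with each column repeated according to the number of extensions of $\sigma'$ in $\Sigma$. Repeating columns does not change the row rank profile, so by the induction hypothesis applied to $\Sigma_1'\subseteq\Gamma_1'$, the portion of the row rank profile inside row block $a=0$ is $\{0\}\times\Ada(\Sigma_1')$, and it exhausts that block (whose rank is $\sharp\Sigma_1'=\sharp\Ada(\Sigma_1')$). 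Next, eliminate row block $a=0$: for each $\sigma'$ occurring in $\Sigma$ put $S_{\sigma'}=\{\tau\in\{0,1,-1\}\mid(\tau,\sigma')\in\Sigma\}$, $m_{\sigma'}=\sharp S_{\sigma'}$, and replace the columns $(\tau,\sigma')$, $\tau\in S_{\sigma'}$, by their images under the inverse of the invertible matrix $\Mat(\{0,\dots,m_{\sigma'}-1\},S_{\sigma'})$ (a column operation within each $\sigma'$-group). One checks that the part surviving into row block $a=1$ comes exactly from the $\sigma'$ with $m_{\sigma'}\ge 2$, that is, from $\Sigma_2'$, and forms $\Mat(\Ada(\Gamma_2'),\Sigma_2')$ up to repeated columns, while the part surviving into row block $a=2$ comes from the $\sigma'$ with $m_{\sigma'}=3$, that is, from $\Sigma_3'$, forming $\Mat(\Ada(\Gamma_3'),\Sigma_3')$. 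Applying the induction hypothesis to $\Sigma_2'\subseteq\Gamma_2'$ and $\Sigma_3'\subseteq\Gamma_3'$, and reassembling — the three row blocks being processed consecutively by the greedy selection of the row rank profile — yields row rank profile $\{0\}\times\Ada(\Sigma_1')\cup\{1\}\times\Ada(\Sigma_2')\cup\{2\}\times\Ada(\Sigma_3')=\Ada(\Sigma)$, and the extracted submatrix $\Mat(\Ada(\Sigma),\Sigma)$. The dimension count is consistent since $\nu=\sharp\Sigma=\sharp\Sigma_1'+\sharp\Sigma_2'+\sharp\Sigma_3'$.

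The main obstacle is the elimination step: making precise that, after accounting for row block $a=0$, the columns contributing genuinely new rank to row block $a=1$ (resp. $a=2$) are exactly those $\sigma'$ with at least two (resp. exactly three) extensions, so that these blocks collapse to $\Mat(\Ada(\Gamma_2'),\Sigma_2')$ (resp. $\Mat(\Ada(\Gamma_3'),\Sigma_3')$); this requires tracking the per-$\sigma'$ column operations together with the monotonicity $\Ada(\Gamma_3')\subseteq\Ada(\Gamma_2')\subseteq\Ada(\Gamma_1')$, while everything else is bookkeeping with lexicographic orders. Alternatively, the statement coincides with~\cite[Lem.~10.66]{roy2006algo_real_alg_geom} and may simply be cited.
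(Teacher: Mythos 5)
Your proof is correct. Note that the paper itself gives no proof of this lemma: it is quoted verbatim from \cite[Lem.~10.66]{roy2006algo_real_alg_geom}, so there is nothing internal to compare against; your argument is essentially a reconstruction of the reference's own inductive proof, following the recursive definition of $\Ada$. The one step you rightly flag as delicate — the elimination of row block $a=0$ — does go through: after the per-$\sigma'$ column operations by $\Mat(\{0,\dots,m_{\sigma'}-1\},S_{\sigma'})^{-1}$, the span of the block-$(a{=}0)$ rows is exactly the space of vectors supported on the $k=0$ columns (because $\Mat(\Ada(\Gamma_1'),\Sigma_1')$ has full column rank $\sharp\Sigma_1'$), so whether a block-$(a{=}1)$ row increases the rank depends only on its restriction to the columns $k\ge 1$, where it reads off $\Mat(\Ada(\Gamma_2'),\Sigma_2')$ (the $k=2$ columns vanish there since $m_{\sigma'}=3$ forces $e_{1,2}=0$); the analogous reduction works for block $a=2$. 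With that point made precise, the induction closes, and the extracted rows of the original matrix are $\Mat(\Ada(\Sigma),\Sigma)$ since $\Ada(\Sigma)\subseteq\Ada(\Gamma)$ compatibly with the lexicographic order.
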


The computation of $\SIGN(\mathcal{Q}_{i+1},Z)$ and
$\Ada(\SIGN(\mathcal{Q}_{i+1},Z))$ is described
in Algorithm~\ref{alg:sign_determination_one_step}.
% We start by computing $S = \Sign(Q_{s-i+1},Z)$ and $A = Ada(S)$.
% Since $A \times \Ada(\SIGN(\mathcal{Q}_{i+1},Z))$ is adapted to $S \times \SIGN(\mathcal{Q}_{i+1},Z)$
% for sign determination, and
% $ \Sign(\mathcal{Q_i},Z) \subseteq S \times \SIGN(\mathcal{Q}_{i+1},Z)$
% one can deduce $\Sign(\mathcal{Q_i},Z)$ by inverting a linear system.
% Finally one can compute $\Ada(\Sign(\mathcal{Q_i},Z))$ from the row rank profile
After $s$ iterations of this algorithm we get $\SIGN(\mathcal{Q},Z)$.
This is \cite[Alg.~10.11]{roy2006algo_real_alg_geom}.

\begin{algorithm}[t]
  \caption{One step of sign determination}
  \label{alg:sign_determination_one_step} 
  \Input{ $\mathcal{Q} = \{Q\} \cup \mathcal{Q}'$,
    the sets $\Sigma \coloneqq \SIGN( \mathcal{Q}',Z)$ and $\Ada(\Sigma)$,
    the associated matrix of signs $\Mat(\Ada(\Sigma),\Sigma)$}
  \Output{The sets $\SIGN(\mathcal{Q}, Z), \Ada(\SIGN(\mathcal{Q}, Z))$ and the associated matrix of signs}
  %\Oracle{For a polynomial $Q$, the Tarski-query $\TQ(Q,Z)$}
  Compute $S \coloneqq \SIGN({Q},Z)$ from the Tarski-queries $\TQ(1,Z), \TQ(Q,Z), \TQ(Q^2,Z)$
  by solving~(\ref{eq:sign_determination_system_1_poly}).
   $S$ corresponds to the nonzero entries of the solution\\
Deduce $A \coloneqq \Ada(S)$ from Definition~\ref{def:adapted_family} \\
Compute the vector % of Tarski-queries
$T \coloneqq \TQ(\mathcal{Q}^{A\times \Ada(\Sigma)},Z)$ \\
$M \leftarrow \Mat(A \times \Ada(\Sigma), S \times \Sigma)
=    \Mat(A,S) \otimes \Mat(\Ada(\Sigma), \Sigma)$ \\
Compute $c \coloneqq c(S \times \Sigma, Z)$ by solving $   M \cdot c = T$\\
Deduce  $\SIGN(\mathcal{Q},Z)$ \# \emph{given by the nonzero entries in $c$}\\
Delete in $M$ the columns whose index is not in $\SIGN(\mathcal{Q},Z)$ \\
Deduce $\Ada(\SIGN(\mathcal{Q},Z))$ from the row rank profile of $M$ and delete the other rows.
\end{algorithm}

\subsection{General sign determination}

We design an algorithm based on Algorithm~\ref{alg:sign_determination_one_step} to solve
Problem~\ref{pb:solution_classification}.

Let $\mbf{f} = (f_1, \dots , f_e ) \subset \BQ[\mbf{y}][\mbf{x}]$ such that
$\mbf{f}$ satisfies Assumption~\ref{assump:finite_nb_sol_generically} and let $\mbf{g} =
(g_1, \dots, g_s) \subset \BQ[\mbf{y}][\mbf{x}]$ define the inequalities of our
input system.
As before, let $\GB$ be the reduced Gr{\"o}bner basis of
$\left< \mbf{f} \right> $ w.r.t. the ordering
$\texttt{grevlex}(\mbf{x}) \succ \texttt{grevlex}(\mbf{y})$.  We also
denote by $\BK$ the field $\BQ(\mbf{y})$ and $\Basis \subset \BQ[\mbf{x}]$
is the basis of
$\AK \coloneqq \BK[\mbf{x}] / \left< \mbf{f}\right>_\BK$ derived from
$\GB$ of dimension $\delta$.

Let $g \in \BQ[\mbf{y}][\mbf{x}]$,
$\mathcal{H}_g \in \BK^{\delta \times \delta}$ denotes the Hermite
matrix associated to $(\mbf{f},g)$ in $\Basis$.
We consider as in~(\ref{eq:W_infinity}), the algebraic set
$\mathcal{W}_\infty = \cup_{p \in \GB} V(\lc_{\mbf{x}}(p)) \subset
\BC^t$. %  for which the parametric matrices $\mathcal{H}_g$ have good
% specialization property
% (\cref{cor:specialized_hermite_matrix_tarski_queries}) when $\eta$
% ranges outside $\mathcal{W}_\infty$.
%
\begin{lemma}
  \label{lem:generically_nonzero_leading_principal_minors}
  Let $r$ denotes the rank of $\mathcal{H}_g$. There exists a Zariski dense
  subset $\mathcal{U}_g$ of $ \GL_\delta(\BC)$ such that for $U \in
  \mathcal{U}_g$, the first $r$ leading principal minors of $\mathcal{H}_g^U
  \coloneqq U^t \mathcal{H}_g U$ are not identically zero.
\end{lemma}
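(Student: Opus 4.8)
The plan is to exploit the classical fact that a symmetric matrix $\mathcal{H}$ of rank $r$ over a field of characteristic $0$ has a $r\times r$ nonsingular principal submatrix, and that a generic congruence transformation $U^t\mathcal{H}U$ moves such a submatrix to the top-left corner while preserving rank. Working over $\BK=\BQ(\mbf{y})$, I would first record that $\mathcal{H}_g$, being symmetric of rank $r$, possesses a set of indices $I=\{i_1<\dots<i_r\}$ such that the principal submatrix $(\mathcal{H}_g)_{I,I}$ is invertible; this is standard linear algebra over any field (see e.g.\ the symmetric Gaussian elimination argument, or \cite[Chap.~4]{roy2006algo_real_alg_geom}). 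More is true: for \emph{every} $1\le k\le r$ there is a size-$k$ principal submatrix of $\mathcal{H}_g$ that is invertible, since the rank of the restriction of the quadratic form to a coordinate subspace can be increased one dimension at a time.

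The key step is then to introduce an auxiliary matrix of indeterminates. Let $U=(u_{ij})$ be a $\delta\times\delta$ matrix whose entries are fresh indeterminates over $\BK$, and consider $\mathcal{H}_g^U=U^t\mathcal{H}_gU$, whose entries lie in $\BK[u_{ij}]$. For each $1\le k\le r$, the $k$-th leading principal minor $D_k(U)\coloneqq \det\big((\mathcal{H}_g^U)_{\{1,\dots,k\},\{1,\dots,k\}}\big)$ is a polynomial in the $u_{ij}$. I claim each $D_k$ is not the zero polynomial. Indeed, by the Cauchy--Binet formula,
\begin{equation*}
  D_k(U)=\sum_{\substack{J,J'\subseteq\{1,\dots,\delta\}\\ \#J=\#J'=k}}
  \det\big(U_{J,\{1,\dots,k\}}\big)\,\det\big((\mathcal{H}_g)_{J,J'}\big)\,\det\big(U_{J',\{1,\dots,k\}}\big),
\end{equation*}
and specializing $U$ so that its first $k$ columns equal the standard basis vectors $e_{i_1},\dots,e_{i_k}$ for an index set with $(\mathcal{H}_g)_{I,I}$ invertible (restricting the chosen $I$ to its first $k$ elements, using the refined statement above) makes this sum reduce to the single nonzero term $\det\big((\mathcal{H}_g)_{\{i_1,\dots,i_k\},\{i_1,\dots,i_k\}}\big)\neq 0$. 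Hence $D_k\not\equiv 0$ as a polynomial in $\BK[u_{ij}]$.

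Therefore the product $D_1(U)\cdots D_r(U)$ is a nonzero polynomial in $\BK[u_{ij}]$, and so is $D_1(U)\cdots D_r(U)\cdot\det(U)$. Its non-vanishing locus defines a nonempty Zariski open subset of the space of $\delta\times\delta$ matrices; intersecting with $\GL_\delta(\BC)$ (which is itself Zariski dense and open) gives the desired set $\mathcal{U}_g$: for any $U\in\mathcal{U}_g$, $U$ is invertible and all of $D_1(U),\dots,D_r(U)$ are nonzero, i.e.\ the first $r$ leading principal minors of $\mathcal{H}_g^U=U^t\mathcal{H}_gU$ do not vanish. (A small care point: the coefficients of $D_k$ lie in $\BK=\BQ(\mbf{y})$, not in $\BC$; since $\BK$ embeds in no single $\BC$ we should instead phrase "Zariski dense in $\GL_\delta(\BC)$" as referring to the $\BC$-points of the complement of the hypersurface cut out by a suitable nonzero \emph{numerator} of $D_1\cdots D_r\det(U)$ after clearing $\mbf{y}$-denominators — which is still a nonzero polynomial, now with coefficients that may be taken in $\BQ[\mbf{y}]$, hence defines a proper subvariety and a dense open complement.) The main obstacle is precisely keeping this base-field bookkeeping honest: the minors are polynomials over the function field, and one must argue that genericity "in $U$ over $\BK$" yields a statement about a dense set of complex matrices $U$ — which it does, because a nonzero polynomial over $\BQ(\mbf{y})$ has a nonzero coefficient that is a nonzero rational function of $\mbf y$, whose numerator is a fixed nonzero element of $\BQ[\mbf y]$ and does not affect the (non)vanishing of the polynomial viewed in the $u_{ij}$.
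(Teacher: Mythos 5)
Your overall strategy --- introduce an indeterminate matrix $U=(u_{ij})$, expand the $k$-th leading principal minor $D_k(U)$ of $U^t\mathcal{H}_gU$ via Cauchy--Binet, and exhibit one specialization of $U$ making it nonzero --- is exactly the paper's strategy, and your base-field bookkeeping at the end is acceptable (the paper sidesteps it by first specializing $\mbf{y}$ at a point $\eta$ where the rank of $\mathcal{H}_g$ is preserved, so that all the genericity happens over $\BR$). But your chosen witness specialization rests on a false intermediate claim. It is \emph{not} true that a symmetric matrix of rank $r$ over a field of characteristic $0$ has, for every $1\le k\le r$, an invertible $k\times k$ \emph{principal} submatrix: the matrix $\bigl(\begin{smallmatrix}0&1\\1&0\end{smallmatrix}\bigr)$ has rank $2$ but both of its $1\times 1$ principal submatrices vanish, so the restriction of the form to a coordinate line cannot always be made nondegenerate, and a fortiori the nested version you invoke (``restricting the chosen $I$ to its first $k$ elements'') fails. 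The classical fact only guarantees an invertible principal submatrix of size exactly $r$. Since setting the first $k$ columns of $U$ to standard basis vectors forces $J=J'$ in your Cauchy--Binet sum and hence only ever extracts \emph{principal} $k\times k$ minors of $\mathcal{H}_g$, this specialization can give $D_k(U)=0$ for every choice of coordinate set when $k<r$, and your proof that $D_k\not\equiv 0$ collapses for those $k$.

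The repair is the one the paper uses: instead of coordinate vectors, specialize the first $k$ columns of $U$ to the first $k$ columns of a congruence matrix $Q$ with $Q^t\mathcal{H}_gQ=\mathrm{diag}(a_1,\dots,a_r,0,\dots,0)$, $a_i\neq 0$ (such a $Q$ exists over any field of characteristic $\neq 2$, in particular over $\BQ(\mbf{y})$ if you stay at the function-field level, or over $\BR$ after specializing at $\eta$ as in the paper). Then $D_k$ evaluates to $a_1\cdots a_k\neq 0$, giving $D_k\not\equiv 0$ for all $k\le r$, and the rest of your argument (taking the complement of the hypersurface cut out by $D_1\cdots D_r\det U$, after clearing $\mbf{y}$-denominators) goes through unchanged.
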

\begin{proof}
  The matrix $\mathcal{H}_g$ has rank $r$ so there exists
  $\eta \in  \BR^t \setminus \mathcal{W}_\infty$ 
  such that the evaluation $\mathcal{H}_g(\eta)$
  is a matrix of rank $r$.
  % of $\mathcal{H}_g$ of size $r$ that is invertible. Then $\det(H)$ is not
  % identically 0. Let $\mathcal{W}_{g} \subset \BC^t$ be the vanishing set of the
  % numerator of $\det(H)$. 
  %The algebraic set $\mathcal{W}_{g}$ \todo{What is $\mathcal{W}_g$?} is a proper Zariski closed subset of
  %$\BC^t$. Let $\eta \in \BR^t \setminus (\mathcal{W}_\infty \cup
  %\mathcal{W}_g)$. The evaluation $\mathcal{H}_g(\eta)$ is
  %\textcolor{magenta}{allowed} and is a real symmetric matrix of rank $r$.
  Moreover, for all $U \in \GL_\delta(\BC)$, $\mathcal{H}_g^U(\eta) = U^t
  \mathcal{H}_g(\eta) U$. We show that there exists a Zariski dense subset
  $\mathcal{U}_g$ such that for all $U \in \mathcal{U}_g$, the first $r$ leading
  principal minors of $\mathcal{H}_g^U(\eta)$ are nonzero. This would imply that
  the first $r$ leading principal minors of $\mathcal{H}_g^U$ are not
  identically zero.

  For $1 \le j \le r$ and $R,C \subset \{1,\dots,r\}$ of size $j$,
  we denote by $\mathcal M_{R,C}$ the $j \times j$ minor of $\mathcal{H}_g(\eta)$
  corresponding to the sets of rows $R$ and of columns $C$.
  %, let us denote by $\mathcal{M}_j$ the set of all
  %$j \times j$ minors of $\mathcal{H}_g(\eta)$.
  We consider the
  matrix $U \coloneqq (\mathfrak{u}_{i,j})_{1 \le i,j \le \delta}$
  where $\mathfrak{u} = (\mathfrak{u}_{i,j})$ are new indeterminates.
  Then, the $j$-th leading principal minor $M_j(\mathfrak{u})$ of the
  matrix $\mathcal{H}_g^U(\eta)$ can be written as
  \begin{equation*}
    M_j(\mathfrak{u}) = \sum_{R,C \subset \{1,\dots,r\}:~ \left|R \right| = \left|C \right| = j} \mathcal P_{R,C} \cdot \mathcal M_{R,C},
  \end{equation*}
  where the $\mathcal P_{R,C}$'s are elements of $\BQ[\mathfrak{u}]$.  As
  $\mathcal{H}_g(\eta)$ is a real symmetric matrix of rank $r$ there
  exists a matrix $Q \in \GL_\delta(\BR)$ such that
  \begin{equation*}
    \mathcal{H}_g^Q(\eta) = Q^t \mathcal{H}_g(\eta) Q = \begin{bmatrix} \Delta & 0 \\ 0 & 0 \end{bmatrix},
  \end{equation*}
  where $\Delta$ is a diagonal matrix of size $r$ with nonzero real
  entries on its diagonal.  Hence the evaluation of $\mathfrak{u}$ at
  the entries of $Q$ gives $M_j(\mathfrak{u})$ a nonzero value.  So
  we conclude that $M_j(\mathfrak{u})$ is not identically zero.

  Finally, let $\mathcal{U}_j$ be the nonempty Zariski open subset of
  $\GL_\delta(\BC)$ defined as the non-vanishing set of $M_j(\mathfrak{u})$. We
  define $\mathcal{U}_g$ as the intersection of $\mathcal{U}_j$ for $1 \le j \le
  r$, and for $U \in \mathcal{U}_g$, none of the first $r$ leading principal
  minors of $ \mathcal{H}_g^U(\eta)$ is zero. Thus, none of the first $r$
  leading principal minors of $ \mathcal{H}_g^U$ is identically zero.
\end{proof}

\begin{lemma}
  \label{lem:signature_symmetric_matrix_rank}
  Let $S$ be a symmetric matrix in $\BR^{\delta \times \delta}$ of
  rank $r$ and let $S_i$ be its $i$-th leading principal minor for
  $0 \le i \le \delta$. We assume that $S_i \neq 0$ for $i \le r$.
  Then, the signature of $S$ equals $r - 2v$ where $v$ is the number
  of sign variations in $S_0, \dots, S_r$.
\end{lemma}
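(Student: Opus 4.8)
The plan is to reduce the statement to Theorem~\ref{thm:signature_symmetric_matrix} by cutting off the degenerate part of the matrix. First I would use the fact that a real symmetric matrix of rank $r$ has a leading principal submatrix of size $r$ that is invertible once we know $S_r \neq 0$; indeed, the hypothesis $S_i \neq 0$ for all $0 \le i \le r$ guarantees that the top-left $r \times r$ block $S^{(r)} \coloneqq (S_{k,\ell})_{1 \le k,\ell \le r}$ is nonsingular, and its $i$-th leading principal minor is exactly $S_i$ for $0 \le i \le r$. Applying Theorem~\ref{thm:signature_symmetric_matrix} to $S^{(r)}$ (whose hypotheses are met since its $r+1$ leading principal minors $S_0, \dots, S_r$ are all nonzero) yields $\Signature(S^{(r)}) = r - 2\Var(S_0, \dots, S_r) = r - 2v$.

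The second step is to argue that $\Signature(S) = \Signature(S^{(r)})$. Since $S$ has rank $r$ and $S^{(r)}$ is an invertible $r \times r$ principal submatrix, the bilinear form represented by $S$ restricted to the coordinate subspace spanned by $e_1, \dots, e_r$ is nondegenerate, so this subspace is a maximal nondegenerate subspace for the form. By the structure theory of symmetric bilinear forms over $\BR$ (or simply Sylvester's law of inertia), $S$ is congruent to $S^{(r)} \oplus 0_{\delta - r}$, hence has the same signature as $S^{(r)}$. Concretely, one writes $S$ in block form with respect to the splitting $\BR^\delta = \BR^r \oplus \BR^{\delta-r}$ and performs block Gaussian elimination using the invertible pivot block $S^{(r)}$; the Schur complement must vanish because $\rk(S) = r$. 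This gives the congruence explicitly.

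Combining the two steps, $\Signature(S) = \Signature(S^{(r)}) = r - 2v$, which is the claim.

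I expect the only real subtlety to be the justification that the Schur complement vanishes, i.e. that $\rk(S) = r$ together with invertibility of the top-left $r \times r$ block forces the lower-right block (after elimination) to be zero — this is immediate from rank-additivity of the block decomposition but should be stated, since it is precisely where the rank hypothesis is used. Everything else is a direct invocation of Theorem~\ref{thm:signature_symmetric_matrix} and Sylvester's law of inertia.
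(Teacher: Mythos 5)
Your proposal is correct and follows essentially the same route as the paper: both apply Theorem~\ref{thm:signature_symmetric_matrix} to the invertible top-left $r\times r$ block and then show $\Signature(S)$ equals the signature of that block via block elimination, with the Schur complement vanishing precisely because $\rk(S)=r$ equals the rank of the pivot block. The subtlety you flag is exactly the step the paper's proof makes explicit.
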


\begin{proof}
  We have
  \begin{align*}
    S &=
        \begin{bmatrix}
          \widetilde{S} & V^t \\ V & W
        \end{bmatrix}
        =
        \begin{bmatrix}
          I_r & 0 \\ V \widetilde{S}^{-1} & I_{\delta - r}
        \end{bmatrix}
        \underbrace{ \begin{bmatrix}
          \widetilde{S} & 0 \\ 0 & W - V \widetilde{S}^{-1}V^t
        \end{bmatrix}}_{N}
        \begin{bmatrix}
          I_r & \widetilde{S}^{-1}V^t \\ 0 & I_{\delta - r}.
        \end{bmatrix}.
  \end{align*}
  Thus $S$ and $N$ have same signature and rank. Since $\det(\widetilde{S}) = S_r
  \neq 0$, we have $\rk(\widetilde{S}) = r = \rk(S)$. Therefore, $W -
  V \widetilde{S}^{-1}V^t = 0$ and $\Signature(S)=  \Signature(\widetilde{S})$.
  By Theorem~\ref{thm:signature_symmetric_matrix}, $\Signature(\widetilde{S}) = r -
  2v$. %%This completes the proof.
\end{proof}

We use the previous lemmas as follows. Assume that after picking randomly a
matrix $U \in \GL_\delta(\BC)$, the first $r$ leading principal minors of
$\mathcal{H}_g^U \coloneqq U^t \cdot \mathcal{H}_g \cdot U $ are not identically
zero, with $r$ the rank of $\mathcal{H}_g$. Then over a connected component of
the semi-algebraic set defined by the complementary of $\mathcal{W}_\infty$ and
the non-vanishing set of these minors, the sign of each leading principal minor
is invariant. Consequently the Tarski-query $\TQ(\mbf{f}(\eta,\cdot),
g(\eta,\cdot))$ is invariant when $\eta$ ranges over this connected component.
Then by sampling at least one point in each connected component using the
algorithm in~\cite{Le2022RealRootClassification} originating from
\cite{SafeySchost2003OnePointPerConnectedComponent}, we are able to recover
all the sign conditions satisfied by a family of polynomials $\mbf{g}$ on a
dense subset of $\mathcal{V}_\BR$ the real algebraic set defined by $\mbf{f} =
0$.

% \begin{proposition}
  % \label{prop:invariance_tarski_query_connected_component}
  % Let $r$ be the rank of $\mathcal{H}_g$ and $U \in \GL_\delta(\BQ)$
  % such that none of the $r$ first leading principal minors of
  % $\mathcal{H}_g^U$ is identically zero.  Let $M_i$ be the numerator
  % of the $i$-th leading principal minor of $\mathcal{H}_g^U$.  Let
  % $\mathcal{W}$ be the following semi-algebraic set
  % \begin{equation}
    % \label{eq:W_sa_set_non_vanishing_minors}
    % \mathcal{ W} \coloneqq \BR^t \setminus \left( \bigcup_{i=1}^r V(M_i) \cup \mathcal{W}_\infty\right),
  % \end{equation}
  % where $V(M_i)$ denotes the vanishing set of $M_i$.  Then for any
  % connected component $\mathcal{C}$ of $\mathcal{W}$, the Tarski-query
  % $\TQ(\mbf{f}(\eta,\cdot), g(\eta, \cdot))$ is invariant when $\eta$
  % ranges over $\mathcal{C}$ and equals $r - 2v$ with $v$ the number of
  % sign variations in the sequence of leading principal minors of
  % $\mathcal{H}_g^U(\eta)$.
% \end{proposition}

% \begin{proof}
%   Let $H_i$ denote the $i$-th leading principal minor of
%   $\mathcal{H}_g^U$.  Over a connectoed component $\mathcal{C}$ of
%   $\mathcal{W}$, the signs of $H_i(\eta)$ are constant. So the number
%   of sign variation $v$ in $(1, H_1(\eta), \dots, H_r(\eta))$ is also
%   constant.  Next, the Tarski-query
%   $\TQ(\mbf{f}(\eta,\cdot), g(\eta, \cdot)) =
%   \Signature(\mathcal{H}_g(\eta))$ by
%   \cref{cor:specialized_hermite_matrix_tarski_queries}.  But
%   $ \Signature(\mathcal{H}_g(\eta)) =
%   \Signature(\mathcal{H}_g^U(\eta)) = r -2v$ by
%   \cref{lem:signature_symmetric_matrix_rank}. And the result follows.
% \end{proof}

Algorithm~\ref{alg:classification} for solving Problem~\ref{pb:solution_classification}
uses the subroutines:\\
$\bullet$ $\mathbf{FirstHermiteMatrix}$ follows from Algorithm~1 in
\cite{Le2022RealRootClassification}. It takes as input a polynomial sequence
$\mbf{f}$ that satisfies Assumption~\ref{assump:finite_nb_sol_generically} and outputs a
Gr{\"o}bner basis $\GB$ of $\left< \mbf{f} \right> \subseteq \BQ[\mbf y,\mbf x]$ for the ordering
$\texttt{grevlex}(\mbf{x}) \succ \texttt{grevlex}(\mbf{y})$,
a monomial basis $\Basis$
of $\mathcal{A}_\BK$ derived from $\GB$,
the family of multiplication matrices
$(M_b)_{b \in \Basis}$ in $\Basis$,
a polynomial $w_\infty \in \BQ[\mbf{y}]$ whose
vanishing set is $\mathcal{W}_\infty$ defined in (\ref{eq:W_infinity}), % (for
% example $w_\infty$ can be defined as the square-free part of the least common
% multiple of $\lc_{\mbf{x}}(p)$ for $p \in G$)
and the Hermite matrix $\mathcal{H}_1$
associated to $(\mbf{f},1)$ w.r.t. $\Basis$.\\
$\bullet$ $\mathbf{LeadPrincMinors}$ returns the list of the numerators
of the nonzero leading principal minors of a matrix with entries
in $\BQ(\mbf{y})$.\\
$\bullet$ $\mathbf{SamplePoints}$ that takes as input a sequence of polynomials
$(h_1, \dots, h_\ell)\subset \BQ[\mbf{y}]$ and sample a finite set of points
that meets every connected component of the semi-algebraic set defined
by $h_1 \neq 0, \dots, h_\ell \neq 0$.\\
For a family of polynomials $(Q_1, \dots, Q_\ell) \subset \BQ[\mbf{y}]$ and
$\eta \in \BR^t$ the sign pattern of $(Q_i(\eta))_{1 \le i \le \ell}$ is the
semi-algebraic formula $\Phi$ below
  \begin{equation}
    \label{eq:sign_pattern}
    \Phi \coloneqq \bigwedge_{i=1}^\ell \sign(Q_i) = \sign(Q_i(\eta)). 
  \end{equation}

\begin{algorithm}[t]
  \caption{Classification}
  \label{alg:classification}
  \Input{ \setlist{nolistsep}
    \begin{itemize}[noitemsep,label=$-$]
    \item A polynomial sequence
      $\mbf{f} =(f_1, \dots, f_e) \subset \BQ[\mbf{y}][\mbf{x}]$ such
      that $\mbf{f}$ satisfies Assumption~\ref{assump:finite_nb_sol_generically} \\
    \item A polynomial sequence  $\mbf{g} =(g_1, \dots, g_s) \subset \BQ[\mbf{y}][\mbf{x}]$\\
    \end{itemize}
  } \Output{ \setlist{nolistsep}
    \begin{itemize}[noitemsep,label=$-$]
    \item A set $\Sigma \subseteq \{0,1,-1\}^s$ of sign conditions
      satisfied by $\mbf{g}$ on the real algebraic set defined by
      $\mbf{f}$
    \item The description of a collection of semi-algebraic sets
      $\mathcal{T}_i$ solving Problem~\ref{pb:solution_classification}
    \end{itemize}
  }
  $\mathcal{H}_1, w_\infty, \GB, \Basis, (M_b)_{b \in \Basis}
  \leftarrow \mathbf{FirstHermiteMatrix}(\mbf{f})$
  \label{line:first_hermite_matrix}\\
  Choose randomly a matrix $U \in \BQ^{\delta \times \delta}$ \\
  $\Sigma \leftarrow \emptyset ,~ \Ada \leftarrow \emptyset, ~ M \leftarrow \Mat(\Ada,\Sigma)$ \\
  Minors $\leftarrow \emptyset$ \\
  \For{$i$ \textbf{from} $1$ \textbf{to} $s$ \label{line:for_main_loop}}{
    $ \mbf{g}_i \leftarrow (g_{s-i+1}, \dots, g_s)$ \label{line:def_gi}\\
    $ \Sigma \leftarrow \{0,1,-1\} \times \Sigma, ~ \Ada \leftarrow  \{0,1,2\} \times  \Ada, $ \\
    $ M \leftarrow \Mat(\Ada,\Sigma) = \Mat( \{0,1,2\} ,\{0,1,-1\} ) \otimes M$    \\
    \For{$\alpha \in \Ada$\label{line:for:Ada}}{ Compute $\mathcal{H}_{\mbf{g}_i^{\alpha}}$ using
      the algorithm of Section~\ref{ssec:basicparam} % for computing parametric Hermite matrices
      and let $\rk_\alpha$ be its rank\label{line:rank} \\
      $(h_1^\alpha, \dots, h_{\rk_\alpha}^\alpha) \leftarrow
      \mathbf{LeadPrincMinors} (U^t \cdot
      \mathcal{H}_{\mbf{g}_i^{\alpha}} \cdot U)$  \label{line:minor_hermite_matrix} \\
      Minors $\leftarrow$ Minors $\cup~ \{h_1^{\alpha}, \dots,
      h_{\rk_\alpha}^\alpha\}$ } $L \leftarrow \mathbf{SamplePoints} (w_\infty
    \neq 0
    ~\land %\bigwedge_{\alpha \in \Ada} \bigwedge_{j=1}^{\rk_\alpha} h_j^\alpha \neq 0 )
    $ Minors $\neq 0)$ \label{line:sample_points}
    \\
    \For{$\eta \in L$}{
      $T_\eta \leftarrow \left( \Signature(\mathcal{H}_{\mbf{g}_i^\alpha}(\eta)) \right)_{\alpha \in \Ada}$\\
      Solve $M \cdot c_\eta = T_\eta$ to compute $c_\eta$ \\
      Deduce $\Sigma_\eta$ corresponding to nonzero entries in $c_\eta$ }
    $\Sigma \leftarrow \bigcup_{\eta \in L} \Sigma_\eta$ \\
    Delete in $M$ columns whose index is not in $\Sigma$ \\
    Deduce $\Ada = \Ada(\Sigma)$ from the row rank profile of $M$ and delete the
    other rows } \For{$\eta \in L$}{ $\Phi_\eta \leftarrow$ Sign pattern of
    $(h_j^{\alpha}(\eta))$ for $ \alpha \in \Ada$ and $1 \le j \le \rk_\alpha$
    as in (\ref{eq:sign_pattern})\\
    $r_\eta \leftarrow$ entry of $c_\eta$ corresponding to $(1,1,\dots, 1) \in \{0,1,-1\}^s$ \\
  } \Return{$ \Sigma, ~(\Phi_\eta \land w_\infty \neq 0 \land \mathrm{Minors}
    \neq 0, \eta, r_\eta)_{\eta \in L}$ }
\end{algorithm}

% \todo{Proof of correctness}

\begin{theorem}[Correctness]
  \label{thm:correctness_classification_algo}
  Assume that $\mbf{f}$ satisfies Assumption~\ref{assump:finite_nb_sol_generically}. Let
  $\mbf{g} = (g_1, \dots, g_s)$ be a polynomial sequence. There exists a Zariski
  dense subset $\mathcal{U}$ of $\GL_\delta(\BC)$ such that if $U$ is sampled in
  $\mathcal{U} \cap \BQ^{\delta \times \delta}$, Algorithm~\ref{alg:classification}
  outputs a set $\Sigma$ that is equal to $\SIGN(\mbf{g}, \mathcal{V}_\BR \cap
  \pi^{-1}(\mathcal{W}))$ where $\mathcal{V}_\BR$ is the real algebraic set
  defined by $\mbf{f}$ and $\mathcal{W}$ is a nonempty Zariski open subset of
  $\BC^{t}$; and a solution to Problem~\ref{pb:solution_classification} for $(\mbf{f},
  \mbf{g})$.
\end{theorem}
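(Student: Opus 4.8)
The plan is to certify correctness by tracking, step by step, what the algorithm computes, and showing that the semi-algebraic data it outputs does solve Problem~\ref{pb:solution_classification}. First I would fix the Zariski dense set $\mathcal{U}$: since the loop on lines~\ref{line:for_main_loop}--\ref{line:minor_hermite_matrix} considers finitely many polynomials $\mbf{g}_i^\alpha$ (indexed by $i$ and by $\alpha \in \Ada(\Sigma)$ at each stage), each giving rise, via Lemma~\ref{lem:generically_nonzero_leading_principal_minors}, to a nonempty Zariski open subset $\mathcal{U}_{\mbf{g}_i^\alpha}$ of $\GL_\delta(\BC)$ over which the first $\rk_\alpha$ leading principal minors of $U^t \mathcal{H}_{\mbf{g}_i^\alpha} U$ are not identically zero, I would take $\mathcal{U}$ to be the (finite) intersection of all these. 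A subtlety here is that the set of $\alpha$'s examined depends on $\Sigma$, which in turn depends on $U$; I would resolve this by noting that $\Sigma$ is always a subset of $\{0,1,-1\}^s$, so one can simply intersect over \emph{all} $\alpha \in \{0,1,2\}^s$ and all $i$, a fixed finite family independent of $U$, and $\mathcal{U}$ remains Zariski dense.

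Next I would define $\mathcal{W} \subseteq \BC^t$. Once $U \in \mathcal{U} \cap \BQ^{\delta \times \delta}$ is fixed, the set $\mathrm{Minors}$ accumulated by the algorithm consists of the numerators of genuinely nonzero leading principal minors; let $\mathcal{W}$ be the complement in $\BC^t$ of $\mathcal{W}_\infty \cup V(\mathrm{Minors})$, a nonempty Zariski open set. The core claim is: over any connected component $\mathcal C$ of the semi-algebraic set $\{w_\infty \neq 0\} \cap \{\mathrm{Minors} \neq 0\} \subseteq \BR^t$, and for every $\alpha$ occurring, the Tarski-query $\TQ(\mbf{g}_i^\alpha(\eta,\cdot), \mbf{f}(\eta,\cdot))$ is constant. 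This follows by combining Corollary~\ref{cor:specialized_hermite_matrix_tarski_queries} (which identifies this Tarski-query with $\Signature(\mathcal{H}_{\mbf{g}_i^\alpha}(\eta))$ for $\eta \notin \mathcal{W}_\infty$), the invariance of the signs of the leading principal minors of $\mathcal{H}_{\mbf{g}_i^\alpha}^U(\eta)$ on $\mathcal C$, and Lemma~\ref{lem:signature_symmetric_matrix_rank} (which expresses $\Signature(\mathcal{H}_{\mbf{g}_i^\alpha}^U(\eta)) = \Signature(\mathcal{H}_{\mbf{g}_i^\alpha}(\eta))$ as $\rk_\alpha - 2\Var$ of those minors); note that $\rk_\alpha$, computed symbolically as the rank over $\BQ(\mbf{y})$, equals the generic specialized rank. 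Since $\mathbf{SamplePoints}$ returns at least one $\eta$ per connected component $\mathcal C$, the vector $T_\eta$ recovers all the needed Tarski-queries for the point set $Z_\eta \coloneqq \mbf{f}(\eta,\cdot)^{-1}(0) \cap \BR^n$, which is exactly $\mathcal V_\BR \cap \pi^{-1}(\eta)$ by Proposition~\ref{prop:specialization_hermite_matrix}.

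With this in hand, the remainder is an induction on the outer loop index $i$ showing that the triple $(\Sigma, \Ada, M)$ maintained by the algorithm satisfies the invariant: $\Sigma = \bigcup_{\eta \in L} \SIGN(\mbf{g}_i, Z_\eta) = \SIGN(\mbf{g}_i, \mathcal V_\BR \cap \pi^{-1}(\mathcal W))$ (the last equality because the $\eta$'s meet every component and $\mathcal W$ is the relevant dense open set), $\Ada = \Ada(\Sigma)$, and $M = \Mat(\Ada(\Sigma), \Sigma)$. The base case is vacuous; the inductive step is precisely the content of Algorithm~\ref{alg:sign_determination_one_step}, whose correctness rests on Proposition~\ref{prop:sign_determination_general_system}, Proposition~\ref{prop:adapted_family}, Lemma~\ref{lem:matrix_of_signs_cartesian_product} and Lemma~\ref{lem:adapted_family_subset}, adapted so that each Tarski-query $\TQ(\mathcal Q^{A \times \Ada(\Sigma)}, Z_\eta)$ is read off from a signature $\Signature(\mathcal{H}_{\mbf{g}_i^\alpha}(\eta))$ rather than from a black box; the matrix $M$ being invertible on the columns indexed by realizable sign conditions lets one solve $M \cdot c_\eta = T_\eta$ for $c(\Sigma_\eta, Z_\eta)$. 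Finally, for the output triples: for $\eta \in L$ the formula $\Phi_\eta \wedge w_\infty \neq 0 \wedge \mathrm{Minors} \neq 0$ defines precisely the connected component(s) of $\{w_\infty \neq 0\} \cap \{\mathrm{Minors} \neq 0\}$ on which all the minors and $w_\infty$ have the same signs as at $\eta$; on each such region $\mathcal T_\eta$ all relevant Tarski-queries, hence $c_\eta$, hence the entry $r_\eta = c((1,\dots,1), Z_\eta)$ counting points in $\mathcal S \cap \pi^{-1}(\cdot)$, are constant; and since $\{w_\infty \neq 0\} \cap \{\mathrm{Minors} \neq 0\}$ is dense in $\BR^t$ and its components are covered by the $\mathcal T_\eta$, the union $\bigcup_\eta \mathcal T_\eta$ is dense in $\BR^t$, giving the second bullet of Problem~\ref{pb:solution_classification}. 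The main obstacle is the bookkeeping around the $U$-dependence of $\Sigma$ and hence of the $\alpha$'s actually used, together with verifying that $r_\eta$ really is the entry of $c_\eta$ at $(1,\dots,1)$ in the full $s$-fold product ordering — this requires carefully unwinding the Kronecker-product structure of $M$ across all $s$ iterations and matching the lexicographic orderings.
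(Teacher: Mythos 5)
Your proposal is correct and follows essentially the same route as the paper's proof: a loop invariant maintained by induction on $i$, with Lemma~\ref{lem:generically_nonzero_leading_principal_minors} supplying the generic $U$, Lemma~\ref{lem:signature_symmetric_matrix_rank} plus sign-invariance of the leading principal minors giving constancy of the Tarski-queries on connected components, $\mathbf{SamplePoints}$ hitting every component, and Proposition~\ref{prop:sign_determination_general_system} recovering the sign conditions. The only (harmless) deviation is that you build $\mathcal{U}$ upfront by intersecting over all $\alpha \in \{0,1,2\}^s$, whereas the paper constructs $\mathcal{U}_i$ inductively from $\mathcal{U}_{i-1}$ and the $\alpha \in \Ada^*$ actually encountered; both yield a Zariski dense set.
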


\begin{proof}
  For $ 1 \le i \le s$, let $\Sigma_i$ be the value of $\Sigma$ and $\Ada_i$ the
  value of $\Ada$ after the $i$-th iteration of the loop in
  line~\ref{line:for_main_loop}. Let $g_i$ be as defined in line~\ref{line:def_gi}.
  We also define $\Sigma_{0} \coloneqq \emptyset$
  and $\mbf{g}_{0} \coloneqq \emptyset$.

  We prove the following loop invariant: for all $0 \le i \le s$,
  there exists a Zariski dense subset $\mathcal{U}_i$ of
  $\GL_\delta(\BC)$ s.t. if $U$ was sampled in
  $\mathcal{U}_i \cap \BQ^{\delta \times \delta}$, then
  $\Sigma_i = \SIGN(\mbf{g}_i , \mathcal{V}_\BR \cap
  \pi^{-1}(\mathcal{W}_i))$ where $\mathcal{W}_i$ is the nonempty
  Zariski open subset in $\BC^t$ defined as the non-vanishing locus of
  $w_\infty$ and the polynomials in the set Minors obtained after performing the
  for loop starting at line~\ref{line:for:Ada}. % at
  % line~\ref{line:sample_points}
    %defined as
  % the non vanishing locus of $w_\infty$ and the polynomials
  % $(h_j^\alpha)$ defined at line~\ref{line:minor_hermite_matrix} for
  % $i \le s$ and $\mathcal{W}_{s+1} \coloneqq \BC^t$.

  This is true when entering the loop as
  $\Sigma_{0} = \emptyset = \SIGN(\emptyset, \mathcal{V}_\BR)$.

  Now suppose that the result holds for $0 \le i-1 < s$. Let $ \Sigma^* \coloneqq
  \{0,1,-1\} \times \Sigma_{i-1}$ and $\Ada^* \coloneqq \{0,1,2\} \times \Ada_{i-1}$.
  By Lemma~\ref{lem:generically_nonzero_leading_principal_minors}, for each $\alpha \in
  \Ada^*$ there exists a Zariski dense subset $\mathcal{U}_\alpha$ of
  $\GL_\delta(\BC)$ such that if $U \in \mathcal{U}_\alpha$, the first
  $\rk_\alpha$ (see line~\ref{line:rank}) leading principal minors of $U^t \cdot
  \mathcal{H}_{\mbf{g}_i^\alpha} \cdot U$ are not identically 0. Let
  $\mathcal{U}_i \coloneqq \mathcal{U}_{i-1} \cap \bigcap_{\alpha \in \Ada^{*} }
  \mathcal{U}_\alpha$. It is a Zariski dense subset of $\GL_\delta(\BC)$ and we
  further suppose that $U$ was sampled in $\mathcal{U}_i \cap \BQ^{\delta \times
    \delta}$. In particular $U \in \mathcal{U}_{i-1}$, so by the induction
  hypothesis, $\Sigma_{i-1} = \SIGN(\mbf{g}_{i-1}, \mathcal{V}_\BR \cap
  \pi^{-1}(\mathcal{W}_{i-1}))$.
  % We let $\mathcal{W}_i$ be the Zariski open
  % subset of $\BC^t$ defined by the non vanishing locus of $w_\infty$ and the
  % polynomials in Minors at line~\ref{line:sample_points}.
  %
  Note that
  $\mathcal{W}_i$ is nonempty since all its defining polynomials are not
  identically 0 and $\mathcal{W}_i \subseteq \mathcal{W}_{i-1} $ as the set
  Minors can only increase along the iterations. We now show that $\Sigma_i =
  \SIGN(\mbf{g}_i, \mathcal{V}_\BR \cap \pi^{-1}(\mathcal{W}_i)) $. Let
  $\mathcal{R}$ be the semi-algebraic set defined as the real trace of
  $\mathcal{W}_i$.
  % For all $\alpha \in \Ada^*$, let $\rk_\alpha$ be the rank of
  % $\mathcal{H}_{\mbf{g}_i^\alpha}$ (line \ref{line:rank}).
  Then the signs of the
  first $\rk_\alpha$ leading principal minors of $U^t
  \cdot\mathcal{H}_{\mbf{g}_i^\alpha}(\eta) \cdot U$ are invariant when $\eta$
  ranges over a connected component $\mathcal{C}$ of $\mathcal{R}$. By
  Lemma~\ref{lem:signature_symmetric_matrix_rank}, the vector $T_\eta \coloneqq
  (\Signature(\mathcal{H}_{\mbf{g}_i^\alpha}(\eta)))_{\alpha \in \Ada^*}$ is
  invariant when $\eta$ varies over $\mathcal{C}$. However, the set $L$ defined
  at line~\ref{line:sample_points} contains at least one point in each connected
  component of $\mathcal{R}$. So we have $\{T_\eta \mid \eta \in \mathcal{R} \}
  =\{T_\eta \mid \eta \in L \} $. Moreover, by the induction hypothesis, for all
  $\eta \in \mathcal{R}$, we have $\SIGN(\mbf{g}_{i-1},\mathcal{V}_\BR \cap
  \pi^{-1}(\eta)) \subseteq \Sigma_{i-1} $ since $\eta \in \mathcal{W}_{i-1}$.
  Thus for all $\eta \in \mathcal{R}$, we have
  \begin{align*}
    \Sigma_\eta \coloneqq \SIGN( \mbf{g}_{i},\mathcal{V}_\BR \cap \pi^{-1}(\eta)) 
    \subseteq \{0,1,-1\} \times \Sigma_{i-1} \eqqcolon \Sigma^*.
  \end{align*}
  As a consequence of Proposition~\ref{prop:sign_determination_general_system},
  $\Sigma_\eta$ corresponds to the nonzero entries of $c_\eta \coloneqq
  \Mat(\Sigma^*,\Ada^*)^{-1} \cdot T_\eta = M^{-1} \cdot T_\eta$. Since $M$ does not
  depend on $\eta$, it holds that $\Sigma_\eta$ is invariant over a connected
  component of $\mathcal{R}$. Finally,
  \begin{align*}
    \SIGN(\mbf{g}_i, \mathcal{V}_\BR \cap \pi^{-1}(\mathcal{W}_i))
    &= \SIGN(\mbf{g}_i, \mathcal{V}_\BR \cap \pi^{-1}(\mathcal{R})) \\
    &= \bigcup_{\eta \in \mathcal{R}}  \Sigma_\eta
    =  \bigcup_{\eta \in L}  \Sigma_\eta = \Sigma_i.
  \end{align*}
  Hence Algorithm~\ref{alg:classification} outputs a set $\Sigma$ describing
  all the sign conditions realised by $\mbf{g}$ on $\mathcal{V}_\BR \cap
  \pi^{-1}(\mathcal{W})$ for $\mathcal{W}$ a nonempty Zariski open subset of
  $\BC^t$. Finally, we show that the output of Algorithm~\ref{alg:classification}:
  $(\Phi_\eta \land w_\infty \neq 0 \land \mathrm{Minors} \neq 0, \eta,
  r_\eta)_{\eta \in L}$ is a solution for Problem~\ref{pb:solution_classification}. For
  $\eta \in L$, let $\mathcal{T}_\eta$ be the semi-algebraic set defined by
  $\Phi_\eta \land w_\infty \neq 0 \land \mathrm{Minors} \neq 0$. By
  construction, $c_{\eta'}$ is invariant when $\eta'$ varies over
  $\mathcal{T}_\eta$ and its first entry equals $r_{\eta}$ that is exactly
  $\sharp \mathcal{V}_\BR \cap \pi^{-1}(\eta')$.
  % that satisfy $g_1 > 0,  \dots, g_s > 0$.
  The union of the sets $\mathcal{T}_\eta$ is $\BR^t \setminus \mathcal{W}$; it
  is dense in $\BR^t$.
\end{proof}

\subsection{Complexity analysis}

Further, we use the following notation for integers $a, b, c$:
\[\mathscr{T}_{a, b, c} = \binom{a+b+c}{a}\text{ and
  }\mathscr{M}_{a, b}=\binom{a+b}{a}.\]
{Note that one can evaluate a multivariate polynomial of degree at most $D$ in $k$
  variables within $\mathcal{O}(\mathscr{M}_{D, k})$ arithmetic operations.} 

Let $\mbf{f} = (f_1 ,\dots, f_e) \subseteq \BQ[\mbf{y}][\mbf{x}]$ be a regular
sequence satisfying
Assumptions~\ref{assump:finite_nb_sol_generically} and~\ref{assump:generic_degree_grobner_basis} and
$\mbf{g} = (g_1 ,\dots, g_s) \subseteq \BQ[\mbf{y}][\mbf{x}]$ a polynomial
sequence. Let $\dd$ be a bound on the degree of the polynomials in $\mbf{f}$ and
$\mbf{g}$. We further assume that $n,t$ and $\dd$ are
at least $2$ as we are dealing with asymptotics. Let $2 < \omega \le 3$ be
an admissible exponent for matrix multiplication. We also denote $\lambda
\coloneqq n(d-1)$.

We prove that the arithmetic cost of each loop iteration in Algorithm~\ref{alg:classification}
is dominated by the computation of
the sample points at line~\ref{line:sample_points}.
By \cite[Prop.~26]{Le2022RealRootClassification}, the cost in terms of
arithmetic operations in $\BQ$ of the call to $\mathbf{FirstHermitematrix}$
at line~\ref{line:first_hermite_matrix} is at most
% \begin{align}
%   \label{eq:cost_first_hermite_matrix}
%   \softO \left(  \binom{t + 2 \lambda}{t}
%   \left(  n \binom{n + t  + d }{n + t} + n^{\omega + 1}d^{\omega n +1 } + d^{(\omega + 1 ) n }\right) \right). 
% \end{align}
\begin{align}
  \label{eq:cost_first_hermite_matrix}
  \softO \left( \mathscr{M}_{t, 2\lambda} 
  \left(  n \mathscr{T}_{d,t,n} + n^{\omega + 1}d^{\omega n +1 } + d^{(\omega + 1 ) n }\right) \right). 
\end{align}
Note that at loop iteration $i$ the newly computed Hermite matrices are of the
form $\mathcal{H} \cdot L_{g_i}, \mathcal{H}\cdot L_{g_i}^2$, where
$\mathcal{H}$ is a Hermite matrix that has already been computed at the previous
iteration and $L_{g_i}$ is the matrix of the multiplication by $g_i$ w.r.t. the
basis $\Basis$ in $\AK$. So, each Hermite matrix is computed by multiplying a known
Hermite matrix by one matrix of multiplication $L_{g_i}$ for $1 \le i \le s$.
%We start by bounding the cost of computing each matrix of multiplication we need.

\begin{lemma}
  \label{lem:cost_multiplication_gi}
  % \todo{Bound the cost of computing $L_{g_i}$.}
  Under the above assumptions,
  let $g$ be one of the $g_i$'s, one can compute the
  matrix of multiplication $L_g$ w.r.t. basis $\Basis$ within
  % \begin{equation*}
  %   \softO\left(  \binom{t + d + \lambda}{t}\left(\binom{ n+ t + d}{n+t} +
  %       \binom{n + d}{d} d^{n \omega}+ n^{\omega  +1 } d^{n\omega +1}\right)\right)
  % \end{equation*}
  \begin{equation*}
    \softO\left( \mathscr{T}_{t,d,\lambda} \left(\mathscr{T}_{d,t,n} +
        \mathscr{M}_{n,d} d^{ \omega n}+ n^{\omega   +1 } d^{ \omega n +1}\right)\right)
  \end{equation*}
  arithmetic operations in $\BQ$.
\end{lemma}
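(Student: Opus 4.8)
The plan is to compute $L_g$ in two stages: first reduce $g$ to its normal form $\bar g$ modulo $\GB$ (expressed in the monomial basis $\Basis$), then assemble $L_g = \sum_{b \in \Basis} c_b M_b$ from the precomputed multiplication matrices $(M_b)_{b \in \Basis}$, as explained in Section~\ref{ssec:basicparam}. For the first stage I would analyze the cost of one normal-form reduction of a polynomial in $\BQ[\mbf{y}][\mbf{x}]$ of total degree at most $\dd$ against $\GB$. By Lemma~\ref{lem:degree_bound_normal_form}, under Assumption~\ref{assump:generic_degree_grobner_basis} the reduction never introduces denominators and every intermediate term has total degree at most $\dd$; so each coefficient lives in $\BQ[\mbf{y}]$ of degree at most $\dd$, and arithmetic on such coefficients costs $\softO(\mathscr{M}_{\dd,t})$ per operation. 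The number of monomial reductions and the number of terms involved are polynomial in $\mathscr{T}_{\dd,t,n}$ (the number of monomials in $\mbf{y},\mbf{x}$ of degree $\le \dd$), which gives the $\mathscr{T}_{\dd,t,n}$-type term; I would be a bit careful to fold in the size of $\GB$ and of $\Basis$, but these are bounded by $d^{O(n)}$ and absorbed by the later terms.

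The dominant stage should be the second one: forming $L_g = \sum_{b\in\Basis} c_b M_b$. Here $\bar g$ has at most $\delta \le d^n$ terms $c_b b$, each $c_b \in \BQ[\mbf{y}]$ of degree at most $\lambda = n(\dd-1)$ by Corollary~\ref{cor:degree_minor_hermit_matrix_regular}'s underlying degree estimates on $\Basis$ (more precisely by Lemma~\ref{lem:degree_bound_multiplication_matrix} applied with $g = b$), so coefficient arithmetic costs $\softO(\mathscr{T}_{t,\dd,\lambda})$ each — this is where the $\mathscr{T}_{t,\dd,\lambda}$ prefactor comes from, possibly after checking it dominates $\mathscr{M}_{\dd,t}$. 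Then one must accumulate $\delta$ scalar multiples of $\delta\times\delta$ matrices $M_b$. A naive accumulation gives $\delta^3 = d^{3n}$ coefficient operations; to land on the stated $\mathscr{M}_{n,\dd}\,d^{\omega n}$ and $n^{\omega+1}d^{\omega n + 1}$ terms I would instead use the structure already exploited in \cite{Le2022RealRootClassification}: the $M_b$ for $b$ ranging over $\Basis$ can be obtained from the $n$ elementary multiplication matrices $M_{x_1},\dots,M_{x_n}$ by at most $\delta$ matrix products, each costing $d^{\omega n}$ scalar operations, and the final linear combination is a single "matrix times vector of matrices" operation of the same order; the $n^{\omega+1}d^{\omega n+1}$ term accounts for the preprocessing of these elementary matrices exactly as in the cost~\eqref{eq:cost_first_hermite_matrix} of $\mathbf{FirstHermiteMatrix}$, and $\mathscr{M}_{n,\dd}$ is the cost of evaluating/handling a degree-$\dd$ polynomial in $n$ variables when reducing $g$ itself. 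Each scalar operation carries the coefficient-arithmetic overhead $\softO(\mathscr{T}_{t,\dd,\lambda})$, yielding the claimed bound.

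The main obstacle I anticipate is bookkeeping the right degree bound on the coefficients appearing throughout: in stage one the intermediate coefficients have $\mbf{y}$-degree $\le \dd$, but once we multiply by the entries of $M_b$ (which have $\mbf{y}$-degree up to $\lambda$ by Lemma~\ref{lem:degree_bound_multiplication_matrix}) the coefficients of $L_g$ have degree up to $\lambda + \dd$-ish, so the per-operation cost in the dominant stage is governed by $\mathscr{T}_{t,\dd,\lambda}$ rather than $\mathscr{M}_{\dd,t}$ — getting the precise polynomial degree right (and confirming $\deg_{\mbf{y}}$ of the product stays within the stated budget, using that the $b_i$ are pure $\mbf{x}$-monomials) is the delicate point. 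A secondary subtlety is making sure that the $\softO$ absorbs all the logarithmic and low-order polynomial factors (number of monomials, size of $\GB$, the $n$ in "$n$ elementary matrices"), so that only the four displayed terms survive; this is routine but must be stated. Everything else — correctness of the decomposition $L_g = \sum c_b M_b$ and of $\mathcal{H}_g = \mathcal{H}_1 M_g$ — is already established in Section~\ref{ssec:basicparam} and Proposition~\ref{prop:relation_H1_HQ}, so no new mathematical input is needed beyond the complexity accounting.
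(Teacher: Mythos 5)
Your route is genuinely different from the paper's, and as written it has gaps. The paper does not compute $L_g$ symbolically over $\BQ[\mbf{y}]$ at all: it uses an evaluation--interpolation scheme over the parameter space. Since the entries of $L_g$ lie in $\BQ[\mbf{y}]$ and have degree at most $d+\lambda$ (Lemma~\ref{lem:degree_bound_multiplication_matrix} together with the bound $\deg b_i\le\lambda$ from \cite[Lem.~23]{Le2022RealRootClassification}), it suffices to compute $L_g(\eta)$ at $\mathscr{T}_{t,d,\lambda}$ points $\eta\in\BQ^t$ and interpolate via \cite{canny1989solving}. For each $\eta$ the computation is pure scalar arithmetic in $\BQ$: the elementary matrices $L_{x_i}(\eta)$ are obtained from the specialized Gr\"obner basis in $\mathcal{O}(n^{\omega+1}d^{\omega n+1})$ operations using \cite{FGHR14}; $g$ is evaluated at $\eta$ in $\mathcal{O}(\mathscr{T}_{d,t,n})$ operations; then $g(\eta,\mbf x)$ is expanded over all $\mathscr{M}_{n,d}$ monomials $m$ of degree at most $d$ in $\mbf x$ (no normal form is ever taken), the matrices $L_m(\eta)$ are formed by $\mathscr{M}_{n,d}$ scalar matrix products of cost $d^{\omega n}$ each, and $L_g(\eta)=\sum_m c_mL_m(\eta)$. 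This is exactly where the four displayed terms come from; your attributions of them (reading $\mathscr{T}_{t,d,\lambda}$ as a per-operation coefficient-arithmetic cost, or $\mathscr{M}_{n,d}$ as the cost of ``handling'' a degree-$d$ polynomial during a reduction of $g$) do not match the bound you are asked to prove.

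The concrete gaps in your symbolic plan are the following. First, the cost of the symbolic normal-form reduction of $g$ modulo $\GB$ is asserted to be ``polynomial in $\mathscr{T}_{d,t,n}$'' and ``absorbed by the later terms,'' but you give no bound on the number of reduction steps nor on the coefficient arithmetic in $\BQ[\mbf y]$ they entail; the paper's proof deliberately avoids any normal-form computation. Second, in the assembly step you propose symbolic matrix products of the $M_{x_i}$ over $\BQ[\mbf{y}]$; the pairwise products of entries of degree up to $d+\lambda$ have degree up to $2(d+\lambda)$, and the number of monomials $\binom{t+2(d+\lambda)}{t}$ can exceed $\mathscr{T}_{t,d,\lambda}$ by a factor of order $2^t$, which is not hidden by $\softO$ (that notation only absorbs polylogarithmic factors). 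Keeping all arithmetic scalar by specializing $\mbf y$ first and interpolating only once at the very end is precisely what makes the prefactor $\mathscr{T}_{t,d,\lambda}$ appear multiplicatively as a number of evaluation points rather than compounding through intermediate polynomial products. Your plan could likely be repaired, but only by reorganizing it into the evaluation--interpolation form.
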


\begin{proof}
  Let $\delta$ be the size of the Hermite matrix $L_g$. We
  already observed that $\delta \le \dd^n$. We compute $L_g$ by
  evaluation and interpolation using the multivariate interpolation algorithm
  of~\cite{canny1989solving}. Because $\mbf{f}$ satisfies
  Assumption~\ref{assump:generic_degree_grobner_basis} and is regular, by
  Lemma~\ref{lem:degree_bound_multiplication_matrix} and
  \cite[Lem.~23]{Le2022RealRootClassification}, the matrix $L_g$ has polynomial
  entries in $\mbf{y}$ of degree at most $d + \lambda$. Thus we need
  $\mathscr{T}_{t,d,\lambda}$ % $\binom{t + d + \lambda}{t}$
  interpolation points $\eta \in \BQ^t$.

  First we bound the cost of computing $L_{g}(\eta)$ for $\eta \in \BQ^t$. We
  start by computing all the matrices $L_{x_i}(\eta)$.
  This is done in time $\mathcal{O}(dn^{\omega + 1} \delta^\omega) =
  \mathcal{O}(n^{\omega + 1} d^{\omega n + 1}) $
  using~\cite[Algo.~4]{FGHR14}. Then we evaluate $g$ at $\eta$ in
  time $\mathcal{O}\left( \mathscr{T}_{d,t,n}% \binom{n + t
                                                                  % +d}{n+t}
  \right) $. We write $g(\eta, \mbf{x}) = \sum_{m} c_m m$ where $c_m \in \BQ$
  and $m$ ranges over the set of monomials in $\mbf{x}$ of degree at most $d$.
  There are $\mathscr{M}_{n,d}$ % $\binom{n + d}{d} $
  such monomials. We compute all the matrices $L_m(\eta)$ using
  $O(\mathscr{M}_{n,d}% \binom{n + d}{d}
  d^{\omega n })$ arithmetic operations by multiplying appropriately the matrices
  $L_{x_i}(\eta)$. Then, we compute $L_{g}(\eta) = \sum_m c_m L_m(\eta)$ in time
  $\mathcal{O}(d^{2n}\mathscr{M}_{n,d}% \binom{n + d}{d}
  )$. %\todo{Check}
  All in all, computing $L_{g}(\eta)$ uses $\mathcal{O}\left(
    \mathscr{T}_{d,t,n}% \binom{ n+ t + d}{n+t}
    + \mathscr{M}_{n,d}% \binom{n + d}{d}
    d^{ \omega n}+ n^{\omega +1 } d^{\omega  n +1}\right)$ arithmetic operations in
  $\BQ$. Hence, the whole evaluation step has an arithmetic cost lying in
  \begin{equation*}
    \mathcal{O}\left(  \mathscr{T}_{t,d,\lambda}% \binom{t + d + \lambda}{t}
      \left(\mathscr{T}_{d,t,n}%\binom{ n+ t + d}{n+t}
        + \mathscr{M}_{n,d}% \binom{n + d}{d}
        d^{ \omega n }+ n^{\omega  +1 } d^{\omega n +1}\right)\right).
  \end{equation*}
  Finally, we interpolate $\delta^2$ entries which are polynomials in
  $\BQ[\mbf{y}]$ of degree at most $d + \lambda$. So using multivariate
  interpolation~\cite{canny1989solving},
  this is done in time $\mathcal{O}\left( \delta^2
    \mathscr{T}_{t,d,\lambda} %\binom{t + d + \lambda}{t }
    \log^2 \mathscr{T}_{t,d,\lambda}% \binom{t + d + \lambda}{t }
    \log \log  \mathscr{T}_{t,d,\lambda}% \binom{t + d + \lambda}{t }
  \right)$. Summing the cost of the two
  steps together ends the proof.
\end{proof}

\begin{proposition}
  \label{prop:cost_hermite_matrix_computation}
  Suppose that $\mbf{f} = (f_1, \dots, f_e ) \subset \BQ[\mbf{y}][\mbf{x}]$ is a
  regular sequence satisfying
  Assumptions~\ref{assump:finite_nb_sol_generically} and~\ref{assump:generic_degree_grobner_basis}.
  Then any parametric Hermite matrix $\mathcal{H}_g$ occurring in
  Algorithm~\ref{alg:classification} with $g \in \BQ[\mbf{y}][\mbf{x}]$ of degree $\dd_g$
  can be computed within
  \begin{align}
    \label{eq:cost_Hg}
    \softO \left(\mathscr{T}_{t, \dd_g, 2\lambda}% \binom{t + d_g + 2 \lambda}{t}
    \left( d^{2n} \mathscr{T}_{t, \dd_g, 2\lambda} %\binom{t + d_g + 2 \lambda}{t}
    + \dd^{\omega n} \right) \right)
  \end{align}
  arithmetic operations in $\BQ$.
  Moreover, any minor of $\mathcal{H}_g$ can be computed using
  \begin{align}
    \label{eq:cost_minors}
    \softO \left( \mathscr{M}_{t, (\dd_g +  \lambda)d^n}% \binom{t + (d_g +  \lambda)d^n}{t}
    \left( \dd^{2n}
    \mathscr{T}_{t, \dd_g, 2\lambda}%\binom{t + d_g + 2 \lambda}{t}
    + \dd^{\omega n} \right) \right)
  \end{align}
  arithmetic operations in $\BQ$.
  %In particular the cost of computing the matrix $\mathcal{H}_g$ is dominated
  %by the cost of computing its minors.
\end{proposition}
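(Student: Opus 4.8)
The plan is to compute $\mathcal H_g$ following the procedure described in Section~\ref{ssec:basicparam}: starting from $\mathcal H_1$ and the multiplication matrices, we form $\mathcal H_g = \mathcal H_1 \cdot M_g$ where $M_g$ is the matrix of $L_g$ in the basis $\Basis$. Here, since the Hermite matrices arising in Algorithm~\ref{alg:classification} are of the form $\mathcal H_{\mbf g_i^\alpha}$ with $\alpha \in \Ada$, we do not recompute them from scratch at each step; as observed just before Lemma~\ref{lem:cost_multiplication_gi}, each such matrix is obtained from one already computed at the previous iteration by a single multiplication by an $L_{g_i}$. So the core cost is: (i) building $L_{g_i}$ via Lemma~\ref{lem:cost_multiplication_gi}, and (ii) performing one $\delta \times \delta$ matrix product over $\BQ[\mbf y]$ whose entries have controlled degree. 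I would carry out the degree bookkeeping first: by Lemma~\ref{lem:degree_bound_multiplication_matrix} and Corollary~\ref{cor:degree_minor_hermit_matrix_regular} (using $\mbf f$ regular and Assumption~\ref{assump:generic_degree_grobner_basis}, hence $\deg b_i \le \lambda$ and $\delta \le d^n$), the entries of $\mathcal H_g$ have degree at most $\dd_g + 2\lambda$, while those of $L_g$ have degree at most $\dd_g + \lambda$.

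For the arithmetic count I would again use evaluation–interpolation as in Lemma~\ref{lem:cost_multiplication_gi}: the number of interpolation points needed is $\mathscr{T}_{t, \dd_g, 2\lambda}$, matching the degree bound $\dd_g + 2\lambda$ of the entries of $\mathcal H_g$. At each interpolation point $\eta$, one evaluates $\mathcal H_1(\eta)$ and $M_g(\eta)$ (the cost of obtaining $M_g(\eta)$ being subsumed, up to the bound in Lemma~\ref{lem:cost_multiplication_gi}, and in the recursive setting it is just the evaluation of one previously-known Hermite matrix plus one $L_{g_i}(\eta)$), then performs one $\delta \times \delta$ numerical matrix product in $\mathcal O(\delta^\omega) = \mathcal O(d^{\omega n})$ operations. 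Summing over the $\mathscr{T}_{t, \dd_g, 2\lambda}$ points gives the $\softO(\mathscr{T}_{t, \dd_g, 2\lambda}\, \dd^{\omega n})$ term; the interpolation of the $\delta^2 = d^{2n}$ entries, each of degree $\le \dd_g + 2\lambda$ in $t$ variables, costs $\softO(d^{2n}\, \mathscr{T}_{t,\dd_g,2\lambda})$ per the multivariate interpolation algorithm of~\cite{canny1989solving}, which contributes the $\mathscr{T}_{t, \dd_g, 2\lambda}\, d^{2n}\, \mathscr{T}_{t, \dd_g, 2\lambda}$ term after factoring out the leading $\mathscr{T}_{t, \dd_g, 2\lambda}$. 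This yields~(\ref{eq:cost_Hg}).

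For the minors bound~(\ref{eq:cost_minors}), a minor of $\mathcal H_g$ is a polynomial in $\BQ[\mbf y]$ whose degree, by the second part of Proposition~\ref{prop:degree_bound_hermite_matrix} together with Corollary~\ref{cor:degree_minor_hermit_matrix_regular}, is bounded by $(\dd_g + \lambda)d^n$ (summing $\deg g + \deg b_{r_i} + \deg b_{c_i}$ over a $k \times k$ submatrix, $k \le \delta = d^n$, and using $2\sum_i \deg b_i \le \lambda d^n$). So I would again evaluate–interpolate: $\mathscr{M}_{t, (\dd_g + \lambda)d^n}$ points suffice, and at each point one first evaluates $\mathcal H_g(\eta)$ — whose per-point cost, reading off the computation above, is $\softO(d^{2n}\mathscr{T}_{t,\dd_g,2\lambda} + d^{\omega n})$ — and then computes one numerical determinant/minor in $\mathcal O(d^{\omega n})$ operations. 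Multiplying the per-point cost by $\mathscr{M}_{t, (\dd_g+\lambda)d^n}$ and adding the interpolation cost of a single univariate-in-disguise output polynomial (negligible) gives~(\ref{eq:cost_minors}). The main obstacle I anticipate is purely bookkeeping: making sure the degree bounds plugged into the interpolation counts ($\dd_g + 2\lambda$ for $\mathcal H_g$ entries, $(\dd_g+\lambda)d^n$ for minors) are exactly the ones delivered by Proposition~\ref{prop:degree_bound_hermite_matrix} and Corollary~\ref{cor:degree_minor_hermit_matrix_regular}, and that the per-evaluation-point cost of producing $\mathcal H_g(\eta)$ is correctly identified with the bound from Lemma~\ref{lem:cost_multiplication_gi} (rather than the cheaper recursive cost), so that the stated estimates hold uniformly for every $\mathcal H_g$ occurring in the algorithm, not just those built incrementally.
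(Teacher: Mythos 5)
Your proposal follows essentially the same route as the paper: write the new matrix as a product of an already-computed Hermite matrix with one multiplication matrix $L_{g_i}$, use the degree bounds $\dd_g+2\lambda$ for the entries (resp.\ $(\dd_g+\lambda)d^n$ for the minors), and run an evaluation--interpolation scheme with $\mathscr{T}_{t,\dd_g,2\lambda}$ (resp.\ $\mathscr{M}_{t,(\dd_g+\lambda)d^n}$) points, with one $\mathcal{O}(\delta^\omega)$ matrix product or determinant per point.

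One accounting slip: you attribute the dominant $d^{2n}\,\mathscr{T}_{t,\dd_g,2\lambda}^2$ term of~(\ref{eq:cost_Hg}) to the interpolation step, but interpolating $\delta^2\le d^{2n}$ entries of degree at most $\dd_g+2\lambda$ only costs $\softO\bigl(d^{2n}\,\mathscr{T}_{t,\dd_g,2\lambda}\bigr)$ --- a single factor of $\mathscr{T}_{t,\dd_g,2\lambda}$, not two. The quadratic term actually comes from the step you mention but do not price: evaluating the stored polynomial entries of the previously computed matrix $\mathcal{H}_{g'}$ (and of $L_{g_i}$) at each interpolation point costs $\mathcal{O}\bigl(\delta^2\,\mathscr{M}_{t,\dd_g+2\lambda}\bigr)$ per point, and summing over the $\mathscr{M}_{t,\dd_g+2\lambda}$ points gives $d^{2n}\,\mathscr{T}_{t,\dd_g,2\lambda}^2$. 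You in fact price this evaluation correctly in the minors part (where the per-point cost $\softO(d^{2n}\mathscr{T}_{t,\dd_g,2\lambda}+d^{\omega n})$ is exactly right), so the fix is only to apply the same per-point accounting in the first half; the stated bounds then follow as claimed.
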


\begin{proof}
  Again let $\delta \le \dd^n$ be the size of the Hermite matrix
  $\mathcal{H}_g$. We write $g = g' g_i$ so that $\mathcal{H}_g =
  \mathcal{H}_{g'} \cdot L_{g_i}$ for some $ 1 \le i \le s$ and
  $\mathcal{H}_{g'}$ a parametric Hermite matrix that is already known. By
  Lemma~\ref{lem:degree_bound_multiplication_matrix} and
  Proposition~\ref{prop:degree_bound_hermite_matrix}, the matrices $\mathcal{H}_{g},
  \mathcal{H}_{g'}$ and $L_{g_i}$ have entries in $\BQ[\mbf{y}]$. Moreover, by
  \cite[Lem.~23]{Le2022RealRootClassification} the largest degree among the
  entries of $\mathcal{H}_g$ and $\mathcal{H}_{g'}$ is bounded by $\Lambda
  \coloneqq d_g + 2 \lambda$ and $L_{g_i}$ has all its entries of degree at most
  $d + \lambda$. We compute the evaluations $\mathcal{H}_g(\eta) =
  \mathcal{H}_{g'}(\eta) \cdot L_{g_i}(\eta)$ for $\mathscr{M}_{t,
    \Lambda}% \binom{t + \Lambda}{t}
  $ distinct points $\eta \in \BQ^t$, and then we interpolate the matrix
  $\mathcal{H}_{g}$ using the algorithm of~\cite{canny1989solving}.

  Let $\eta \in \BQ^t$. We first estimate the cost of computing $L_{g_i}(\eta)$.
  %We start by computing all the matrices representing the $L_{x_i}(\eta)$.
  %This can be done in $\mathcal{O}(dn^{\omega + 1} \delta^\omega) = \mathcal{O}(n^{\omega + 1} d^{\omega n + 1}) $
  %using~\cite[Algo.~4]{faugere2013polynomial}. Then we write $g_{i}(\eta, \mbf{x}) = \sum_{m} c_m m$ where
  %$c_m \in \BQ$ and $m$ ranges over the set of monomials in $\mbf{x}$ of degree at most $d$.
  %There are  $\binom{n + d}{d} $ such monomials. So we compute all the matrices $L_m(\eta)$
  %with $O(\binom{n + d}{d} d^{n\omega +1})$ arithmetic operations. Then, we have
  %$L_{g_i}(\eta) = \sum_m c_m L_m(\eta)$
  %and this can be computed in $\mathcal{O}(d^{2n}\binom{n + d}{d})$  arithmetic operations.
  %So we are able to compute $L_{g_i}(\eta)$ using
  %$\mathcal{O}\left(\left(\binom{n + d}{d} + n^{\omega  +1 } \right)d^{n\omega +1}\right)$
  %arithmetic operations in $\BQ$.
  It is the evaluation of $\delta^2$ polynomials in $\BQ[\mbf{y}]$ of degree at
  most $d + \lambda$. So its cost is in $\mathcal{O} \left(
    \mathscr{T}_{t,d,\lambda}% \binom{t + d + \lambda }{t}
    \delta^2\right) $ arithmetic operations in $\BQ$.
  % \todo{We should refer to  something}
  
  Similarly, we estimate the cost for computing
  $\mathcal{H}_{g'}(\eta)$.
  We obtain $\mathcal{O} \left(
    \mathscr{M}_{t, \Lambda}% \binom{t + \Lambda}{t}
    \delta^2\right)$ arithmetic operations in $\BQ$.

  Finally, we need to compute the matrix product $\mathcal{H}_{g'}(\eta)
  L_{g_i}(\eta)$ and this is done in time $\mathcal{O}(\delta^\omega)$.
  Notice that $\lambda - d = n(d-1) - d = nd - n - d \ge 0$, as $n \ge 2$ and $ d \ge 2$. So
  $d + \lambda \le \Lambda$. Summing up every step together we obtain that the evaluation 
  $\mathcal{H}_g(\eta)$ can be computed within
  $\mathcal{O} \left( \delta^2 \mathscr{M}_{t, \Lambda}% \binom{t + \Lambda}{t }
    + \delta^\omega \right)$ arithmetic operations in $\BQ$.
  Since there are $\mathscr{M}_{t, \Lambda}$ evaluation points, the whole evaluation step uses
  \begin{equation*}
    \mathcal{O} \left( \mathscr{M}_{t, \Lambda}% \binom{t + \Lambda}{t }
      \left( d^{2n} \mathscr{M}_{t, \Lambda}% \binom{t + \Lambda}{t }
        + d^{\omega n} \right) \right)
  \end{equation*}
  arithmetic operations in $\BQ$ at most. Finally, we interpolate $\delta^2$
  entries which are polynomials in $\BQ[\mbf{y}]$ of degree at most $\Lambda$.
  Using~\cite{canny1989solving}, the complexity of this step lies in
  $\mathcal{O}\left( \delta^2 \mathscr{M}_{t, \Lambda}% \binom{t +
      % \Lambda}{t }
    \log^2 \mathscr{M}_{t, \Lambda}% \binom{t + \Lambda}{t }
    \log \log \mathscr{M}_{t, \Lambda}% \binom{t +
      % \Lambda}{t }
  \right)$. Summing up these costs, we obtain the claimed complexity for
  $\mathcal{H}_g$.

  To compute the minors, we again use an evaluation-interpolation scheme. Any
  minor of $\mathcal{H}_g$ has degree at most $(d_g + \lambda)d^n$ by
  Corollary~\ref{cor:degree_minor_hermit_matrix_regular}, so we need $\mathscr{M}_{t,
    (d_g + \lambda)d^n}
  % \binom{t + (d_g + \lambda)d^n}{t}
  $ interpolation points. Each evaluation of the matrix costs
  $\mathcal{O}\left(\delta^2 \mathscr{M}_{t, \Lambda}% \binom{t + \Lambda}{t}
  \right)$ and the computation of the minors lies in
  $\mathcal{O}(\delta^\omega)$. We deduce the bound as before.
\end{proof}

% Let $g \in \BQ[\mbf{y}][\mbf{x}] $ be a polynomial such that $\mathcal{H}_g$ occurs in \cref{alg:classification}.
% We note that the complexity of computing the matrices $\mathcal{H}_g$ and $L_{g_i}$
% is bounded by the complexity of computing the minors of $\mathcal{H}_g$.
% As in \cite[Sec. 6.2]{Le2022RealRootClassification}, we have that
% \begin{equation*}
%   \binom{d+n+t}{n+t} \le \binom{t+  \lambda d^n}{t}(2 d^n) \le  \binom{t+  (d_g + \lambda) d^n}{t}(2 d^n) 
% \end{equation*}

One shows that the cost for computing the minors of $\mathcal{H}_g$
dominates the cost for computing the matrix $\mathcal{H}_g$.
First note that
\begin{align*}
  \mathscr{M}_{n, d}% \binom{n+d}{d}
  = \frac{(d+n)\dots (d+1)}{n!} = d^n \prod_{k=1}^{n}\left( \frac{1}{d}+\frac{1}{k} \right)
  % \left(\frac{1}{d} + \frac{1}{n}\right)\left(\frac{1}{d} + \frac{1}{n-1}\right)\dots
  % \left(\frac{1}{d} + 1\right)
  \le  2 d^n,
\end{align*}
since $\frac{1}{d} + 1 \le 2$ and $\frac{1}{d} + \frac{1}{k} \le 1$ for $k \ge
2$. Then, the cost in Lemma~\ref{lem:cost_multiplication_gi} is bounded by the cost
(\ref{eq:cost_first_hermite_matrix}) of $\mathbf{FirstHermiteMatrix}$. In
\cite[Sec. 6.2]{Le2022RealRootClassification}, it is shown that
(\ref{eq:cost_first_hermite_matrix}) is bounded by $\softO \left(
  \mathscr{M}_{t, \lambda d^n}% \binom{t +
    % \lambda d^n}{t}
  \mathscr{M}_{t, 2\lambda}% \binom{t + 2 \lambda}{t }
  d^{2n} \right)$. Thus the cost for computing $L_{g_i}$ is bounded by
(\ref{eq:cost_minors}). In addition, it holds that (\ref{eq:cost_Hg}) is bounded
by (\ref{eq:cost_minors}). Hence we can conclude that computing the minors
dominates the cost of computing the matrices.

Now let us bound the cost of computing the set of sample points at
Line~\ref{line:sample_points}. In Algorithm~\ref{alg:classification}, we compute parametric
Hermite matrices $\mathcal{H}_{\mbf{g}^\alpha}$, with $\mbf{g}^\alpha = \prod
g_i^{\alpha_i}$ with $\alpha \in \{0,1,2\}^s$. So, the degree of $g$ is bounded
by $2ds$. Hence, the degree of any minor in $\mathrm{Minors}$ is bounded by
$\mathfrak{D} \coloneqq (2ds + \lambda)d^n$. Let $\rho_i$ be the size of $\Sigma$
at the end of iteration $i$ of the loop. By~\cite{basu2005betti}, we have
\begin{equation*}
  \rho_i \le \rho \coloneqq \binom{s}{t} 4^{t+1} d(2d -1)^{n+t -1}.
\end{equation*}
At iteration $i$, we compute at most $ 2 \rho $ new Hermite matrices,
so we add at most
$2 \delta \rho$ new minors in the set $\mathrm{Minors}$. Let $M_i$ be the
size of the set $\mathrm{Minors}$ after the loop iteration $i$. We have $ M_{i}
 \le 2 \delta  i \rho \le 2 \delta sr$.
 So we call the routine
$\mathbf{SamplePoints}$ with at most $2 d^n s \rho$ polynomials, because as
$\mbf{f}$ satisfies Assumption~\ref{assump:generic_degree_grobner_basis}, we can omit
$w_\infty =1$. By \cite[Thm.~2]{Le2022RealRootClassification}, the set of sample
points $L$ contains at most $(4 d^n s \rho \mathfrak{D} )^t$ points and this set
can be computed using
\begin{equation}
  \label{eq:cost_sample_points}
  \softO\left( \mathscr{M}_{t, \mathfrak{D}}% \binom{t  + \mathfrak{D}}{t}
    (2d^n s \rho)^{t+1} 2^{3t} \mathfrak{D}^{2t+1} \right) 
  %\softO\left( \binom{\mathfrak{D} + t}{t} 2^{4t} d^{n(t+1) } s^{t+1} \rho^{t+1}\right)
\end{equation}
arithmetic operations in $\BQ$.
We can now prove Theorem~\ref{thm:main}.

\noindent
\begin{proof}[Proof of Theorem~\ref{thm:main}]
  The sequence $\mbf{f}$ is regular and satisfies both
  Assumptions~\ref{assump:finite_nb_sol_generically} and~\ref{assump:generic_degree_grobner_basis}.
  At each iteration of Algorithm~\ref{alg:classification}, the call to
  $\mathbf{SamplePoints}$ has a cost bounded by~(\ref{eq:cost_sample_points}).
  We also compute at most $2 \rho$ new Hermite matrices and their $\delta \le d^n$
  leading principal minors. By Proposition~\ref{prop:cost_hermite_matrix_computation}, this
  can be done using
  \begin{equation*}
    \softO\left( d^n \rho \mathscr{M}_{t, \mathfrak{D}}% \binom{t + \mathfrak{D}}{t}
      \left( d^{2n} \mathscr{T}_{t, 2s\dd,2\lambda}% \binom{t + 2sd + 2 \lambda}{t}
    + \dd^{\omega n} \right) \right) 
\end{equation*} arithmetic operations.
Since  $\mathscr{T}_{t,
  2s\dd,2\lambda}% \binom{t + 2sd + 2 \lambda}{t}
\in \mathcal{O}\left( \mathfrak{D}^t \right)$,
the above estimate is bounded by~(\ref{eq:cost_sample_points}).
Next, we have to evaluate the signatures of at most $3 \rho$ Hermite matrices for
every points $\eta \in L$. This is done by evaluating the sign patterns of the
minors. There are at most $3 \delta \rho$ minors of degree at most $\mathfrak{D}$
to evaluate at at most $(4d^n s \rho \mathfrak{D})^t$ points. This is done
within $\mathcal{O}\left( d^n \rho (4 d^n s\rho \mathfrak{D})^t \mathscr{M}_{t,
    \mathfrak{D}}% \binom{t+ \mathfrak{D}}{t}
\right)$ arithmetic operations in $\BQ$ and this is bounded by
(\ref{eq:cost_sample_points}). The linear algebra to solve the linear systems $M
\cdot c_\eta = T_\eta$ or to compute the row rank profile of $M$ has a
negligible cost in front of the evaluations of the minors. Finally we sum the
costs for each of the $s$ iterations and substitute the values of $\lambda, \rho,
\mathfrak{D}$ to get the complexity estimate. The algorithm outputs $\sharp L
\le (4 d^n s \rho \mathfrak{D})^t$ formulas. Each formula contains $\mathcal{O}(d^n
s \rho)$ minors of degree at most $\mathfrak{D}$. This completes the proof.
%
%
  % The complexity estimate is a direct consequence of the fact that the bound for
  % the call to $\mathbf{SamplePoints}$ dominates the cost of one loop iteration.
  % Let us prove this fact. We have at most $\delta r \le d^n r $ new minors to
  % compute and by \cref{prop:cost_hermite_matrix_computation}, this can be done
  % using
  % \begin{equation*}
  %   \softO\left( d^n r \mathscr{M}_{t, \mathfrak{D}}% \binom{t + \mathfrak{D}}{t}
  %     \left( d^{2n} \mathscr{T}_{t, 2s\dd,2\lambda}% \binom{t + 2sd + 2 \lambda}{t}
  %   + \dd^{\omega n} \right) \right) 
% \end{equation*}
% arithmetic operations. Since $\mathscr{T}_{t,
%   2s\dd,2\lambda}% \binom{t + 2sd + 2 \lambda}{t}
% \in \mathcal{O}\left( \mathfrak{D}^t \right)$, we can conclude that the above
% cost is bounded by (\ref{eq:cost_sample_points}). Then, we have to evaluate the
% signatures of at most $3 r$ Hermite matrices for every points $\eta \in L$. This
% is done by evaluating the sign patterns of the minors. There are at most $\delta
% r$ minors of degree at most $\mathfrak{D}$ to evaluate at at most $(4d^n sr
% \mathfrak{D})^t$ points. This can be done within $\mathcal{O}\left( d^n r (4 d^n
%   sr \mathfrak{D})^t \mathscr{M}_{t, \mathfrak{D}}% \binom{t+ \mathfrak{D}}{t}
% \right)$ arithmetic operations in $\BQ$ and this is bounded by
% (\ref{eq:cost_sample_points}). The linear algebra to solve the linear systems $M
% \cdot c_\eta = T_\eta$ or to compute the row rank profile of $M$ has a
% negligible cost in front of the evaluations of the minors. This completes the
% proof.
\end{proof}

\section{Practical experiments}\label{sec:experiments}

We report here on the practical behaviour of our algorithm and compare it with
existing Maple packages based on other methods for solving parametric
semi-algebraic systems. In Algorithm~\ref{alg:classification}, we need to compute sample
points per connected components of the non-vanishing set of leading principal
minors of several Hermite matrices. Once we have computed these sample points,
the semi-algebraic conditions for the classification are derived from the sign
patterns of the minors on these points. However when facing practical problems,
calling the $\mathbf{SamplePoints}$ routine with this number of minors is often
the bottleneck of Algorithm~\ref{alg:classification}. If we assume that for each
inequality $g_i$ with $1 \le i \le s$, the Hermite matrix $\mathcal{H}_{g_i}$ is
nonsingular, one can get better timings in practice with the following approach:
\begin{itemize}
\item Compute a set $\{\eta_1, \dots, \eta_\ell \}$ of sample points
  in the non-vanishing set of the determinants of
  $(\mathcal{H}_1, \mathcal{H}_{g_1}, \dots, \mathcal{H}_{g_s})$.
  For $1 \le i \le \ell$,
  perform sign determination to obtain $r_i$ the number of solutions
  of the specialized system $(\mbf{f}(\eta_i, \cdot),\mbf{g}(\eta_i, \cdot) )$.
  One can show that we obtain all the possible number of solutions that the input system can admit.
\item Next in order to get semi-algebraic conditions, compute the $3^s$ Hermite matrices
$\mathcal{H}_{\mbf{g}^\alpha}$ for all $\alpha \in \{0,1,2\}^s$ and all their
leading principal minors.
>From each sign pattern $\tau$ on this family of minors, the signatures of the Hermite matrices
are determined and 
one can associate $0 \le r_\tau \le \delta$ the number of solutions of the input system.
Finally we derive a classification from the sign patterns $\tau$ such that $r_\tau \in \{r_1,  \dots , r_\ell\}$. 
%One can directly derive a classification from the sign patterns of these minors:
%for each possible sign pattern, compute the signatures of the matrices and invert the $3^s$ linear system to get
%the number of solutions~(\cref{prop:sign_determination_general_system}).
%In particular we can not determine all the possible number of solutions of the input system.
% \item It turns out that 
\end{itemize}
Notice that we get a classification with semi-algebraic formulas that contain clauses that may be infeasible.
Yet we only need one call to the $\mathbf{SamplePoints}$ routine with $s +1$ polynomials in input.
%It is important to note that for all the applications we encounter in practice
%the number $s$ of inequalities in the input system is not \emph{too large}.
%Therefore it is more efficient to consider the $3^s$ Hermite matrices
%$\mathcal{H}_{\mbf{g}^\alpha}$ for all $\alpha \in \{0,1,2\}^s$ and compute all their
%%%leading principal minors to derive a classification from the sign patterns of these minors.
%This avoids calling the $\mathrm{SamplePoints}$ routine which is very costly.
%The drawback is that we get semi-algebraic formulas that contain clauses that may be infeasible.
%
%\todo[inline]{mais on s'en sort avec sample points sur les determinants}

The timings are given in hours (h.), minutes (m.) and seconds (s.) and the computations have been performed
on a PC Intel (R) Xeon (R) Gold 6244 CPU 3.6GHz with 1.5Tb of RAM.
%\todo{One word about the computer}
In our implementation, we compute Hermite matrices using FGb package~\cite{faugere2010fgb}
for Gr{\"o}bner basis computation.
For the sample points routine, we use RAGlib~\cite{safey2017real}.
In Table~\ref{tab:exp_generic_dense_result}, we analyse the costs on dense \emph{generic} inputs,
\ie the input polynomials
$(f_1, \dots, f_n)$ and $(g_1, \dots, g_s) \subset \BQ[\mbf{y}][\mbf{x}]$
are dense and randomly chosen among polynomials of degree $d$.
%\todo[inline]{in table 2 the systems are sparse and structured?}
%In both \cref{tab:exp_generic_dense_result,tab:exp_generic_sparse_system},
We collect results for various values of $(n,t,s,d)$.
We focus on the timings for computing all the Hermite matrices (\textbf{hm}),
all their leading principal minors (\textbf{min}).
We also report in column \textbf{det} the timings for computing only the $(s+1)$ matrices
$(\mathcal{H}_1, \mathcal{H}_{g_1}, \dots, \mathcal{H}_{g_s})$ and their determinants;
and for computing sample points (column~\textbf{sp}) in the non-vanishing locus of these determinants.
%All the timings are collected in \cref{tab:exp_generic_result}.
We compare our algorithm with the Maple packages
RootFinding[Parametric]~\cite{GeJeMo10} (the column RF)
and 
RegularChains[ParametricSystemTools] \cite{YangHX01}.

In the column RF, we give the timings for the command DiscriminantVariety~(\textbf{dv}) that
computes a set of polynomials defining a discriminant variety $\mathcal{D}$ of the input system.
For generic systems, the output of DiscriminantVariety coincides with the irreducible factors of the
determinants of $(\mathcal{H}_1, \dots, \mathcal{H}_{g_s})$ and the border polynomials returned by the
command BorderPolynomial~(\textbf{bp}) contains these
polynomials.
We also collect the results for the command CellDecomposition~(\textbf{cad}) that outputs semi-algebraic formulas
by computing an open CAD for $\BR^t \setminus \mathcal{D}$.
%The command RealRootClassification of the package RegularChains[ParametricSystemTools]
%also computes sample points

%
% The weak version of our algorithm corresponds to the case where
%we do not require the computation of semi-algebraic formulas.
%In this case, it is sufficient to sample points in the non-vanishing locus
%of the determinants of the Hermite matrices corresponding to each inequality. 
%
\begin{table}
  \begin{tabular}{cccc|cccc|cc|cc}
    & & & & \multicolumn{4}{c|}{Hermite} & \multicolumn{2}{c|}{RF} & \\
    \hline
    $n$ & $t$ & $s$& $d$& \textbf{hm} & \textbf{min} & \textbf{det} &  \textbf{sp}
        &  \textbf{dv} &  \textbf{cad} &  \textbf{bp}  \\
    \hline 
    2 & 2 & 2 & 2 & 0.15 s & 0.4 s & 0.1 s & 5 s & 0.14 s & 2 s & 0.11 s \\
    2 & 2 & 3 & 2 & 0.7 s & 2 s & 0.1 s & 10 s & 0.9 s & 10 s & 1 s \\
    3 & 2 & 1 & 2 & 0.5 s & 9 s & 0.4 s &  33 s & 10 m & 11 m  & 7 m \\
    3 & 2 & 2 & 2 & 3 s & 1 m & 0.4 s & 57 s & 10 m & 13 m & 14 m  \\ 
    2 & 3 & 2 & 2 & 0.3 s & 4 s & 0.1 s & 18m & 0.7 s & >50 h & 0.2 s   \\ 
    3 & 3 & 1 & 2 & 1 s  & 4 m & 6 s & >50 h& >50 h  & >50 h & >50 h\\
    2 & 2 & 1 & 3 & 0.9 s & 30 s & 0.8 s & 3m & 52 m & 57 m & 47 s   \\
    2 & 2 & 2 & 3 & 5 s & 5 m &  1 s & 6m & 57 m & 1h 16 m & 2 m   \\
    % 2 & 3 & 3 & 3 & 0.41 s & 84 s &
  \end{tabular}
  \caption{Generic dense system}
  \label{tab:exp_generic_dense_result}
\end{table}
%\todo[inline]{Add 3312 in table}

The column \textbf{det} has to be compared with the two columns \textbf{dv} and
\textbf{bp} as they are three different approaches to compute polynomials that
defines the boundary of semi-algebraic sets over which the number of solutions
of the input system is invariant.

We observe that our method outperforms DiscrimantVariety and BorderPolynomial.
With our approach based on the minors of the Hermite matrices, we are not only
able to solve the classification problem for systems faster by several orders of
magnitude than what can be achieved with CellDecomposition~(\textbf{cad}) and
the command RealRootClassification of the RegularChains[ParametricSystemTools]
library. We can also tackle problems that were previously out of reach.

\paragraph*{Perspective-Three-Point Problem (P3P)}

We now consider a system coming from the P3P problem and apply our
algorithm to find a classification. It should be noted that
Assumption~\ref{assump:generic_degree_grobner_basis} is not satisfied on this problem. 
The problem consists in
determining the position of a camera given the relative spatial
location of 3 control points. As in~\cite{FMRSa08}, we want to compute
a classification of the real solutions of the following system:
\begin{equation}
  \label{eq:P3P_general}
  \tag{P3P}
  \begin{cases}
    1 &= A^2 + B^2 - AB u \\
    t &= B^2 + C^2 - BC v \\
    x &= A^2 + C^2 - AC w
  \end{cases} \text{ with} \quad  A > 0, B > 0, C > 0,
\end{equation}
subject to the following constraints: $x, t > 0 $, $-2 < u, v, w <2 $,
where $A,B,C$ are the \emph{variables} and $x,t,u,v,w$ are \emph{parameters}.

In~\cite{FMRSa08}, a special case of~(\ref{eq:P3P_general}) is studied where $t
= 1$. This restriction corresponds to the case where the three controls points
form an isosceles triangle. In this case, a discriminant variety
$\mathcal{D}$ for the system is computed in \cite{FMRSa08}. Sample points in the
semi-algebraic set $\BR^4 \setminus \mathcal{D}$ in order to deduce all the
possible number of solutions of (\ref{eq:P3P_general}) in the isosceles case are
computed using RAGlib but this is not sufficient to obtain semi-algebraic
conditions that prescribe the number of real solutions to the input parametric
system.

With our method, we are able to derive these semi-algebraic descriptions for
each possible number of solutions from the signs of the leading principal minors
of parametric Hermite matrices. In less than one hour, we compute all the minors
and sample points in $\BR^4 \setminus \mathcal{D}$ whence we obtain a complete
classification in the isosceles case. We observe that despite the fact that 
Assumption~\ref{assump:generic_degree_grobner_basis} is not satisfied, the degrees of the
output formulas are smaller than the bounds predicted in the generic case.

We also studied the general case.
The system~(\ref{eq:P3P_general}) has 3 variables and 5
parameters.
We compute the first Hermite matrix $\mathcal{H}_1$ and the ones corresponding to
each inequality $\mathcal{H}_A, \mathcal{H}_B, \mathcal{H}_C$ and their determinants
in a few seconds.
This gives polynomials defining a discriminant variety of the system~(\ref{eq:P3P_general}).
Already this first step was out of reach using the Maple commands DiscriminantVariety or
BorderPolynomial. Next we are able to compute the leading principal minors of all Hermite matrices
of the form $\mathcal{H}_{A^{\alpha_1}B^{\alpha_2}C^{\alpha_3}}$
with $(\alpha_1, \alpha_2, \alpha_3) \in \{0,1,2\}^3$ and get semi-algebraic conditions
for a classification.
One further step would be to sample points outside the discriminant variety to get all the
possible number of solutions of the system~(\ref{eq:P3P_general}).

%% ============================REFERENCES=============================================%%

\balance
\bibliographystyle{ACM-Reference-Format}
\bibliography{biblio}

\end{document}